\def\beq{\begin{equation}}
\def\eeq{\end{equation}}
\def\beqa{\begin{eqnarray}}
\def\eeqa{\end{eqnarray}}
\def\beqan{\begin{eqnarray*}}
\def\eeqan{\end{eqnarray*}}
\def\R{{\mathbb{R}}}
\def\argmax{\mathop{\mathrm{arg\,max}}}
\def\x{\times}
\newtheorem{lemma}{Lemma}
\newtheorem{claim}{Claim}
\def\phat{\widehat{p}}
\def\rhat{\widehat{r}}
\def\shat{\widehat{s}}
\def\xhat{\widehat{x}}
\def\zhat{\widehat{z}}
\def\la{\leftarrow}
\def\ra{\rightarrow}
\def\arr{\rightarrow}
\def\var{\mbox{var}}
\def\cov{\mbox{cov}}
\def\tm1{t\! - \! 1}
\def\tp1{t\! + \! 1}
\def\phat{\widehat{p}}
\def\rhat{\widehat{r}}
\def\shat{\widehat{s}}
\def\xhat{\widehat{x}}
\def\zhat{\widehat{z}}
\def\taubar{\overline{\tau}}
\def\la{\leftarrow}
\def\ra{\rightarrow}
\def\arr{\rightarrow}
\def\Exp{\mathbb{E}}
\def\var{\mbox{var}}
\def\tm1{t\! - \! 1}
\def\tp1{t\! + \! 1}
\def\abf{\mathbf{a}}
\def\bbf{\mathbf{b}}
\def\dbf{\mathbf{d}}
\def\pbfhat{\widehat{\mathbf{p}}}
\def\qbf{\mathbf{q}}
\def\rbfhat{\widehat{\mathbf{r}}}
\def\ubf{\mathbf{u}}
\def\vbf{\mathbf{v}}
\def\wbf{\mathbf{w}}
\def\xbf{\mathbf{x}}
\def\xbfhat{\widehat{\mathbf{x}}}
\def\ybf{\mathbf{y}}
\def\zbf{\mathbf{z}}
\def\zbfhat{\widehat{\mathbf{z}}}
\def\Abf{\mathbf{A}}
\def\Bbf{\mathbf{B}}
\def\Dbf{\mathbf{D}}
\def\Gbf{\mathbf{G}}
\def\Kbf{\mathbf{K}}
\def\Phat{\widehat{P}}
\def\Qbf{\mathbf{Q}}
\def\Rhat{\widehat{R}}
\def\Ubf{\mathbf{U}}
\def\Vbf{\mathbf{V}}
\def\Xhat{\widehat{X}}
\def\Xbf{\mathbf{X}}
\def\Ybf{\mathbf{Y}}
\def\Zbf{\mathbf{Z}}
\def\alphabf{{\boldsymbol \alpha}}
\def\betabf{{\boldsymbol \beta}}
\def\taubf{{\boldsymbol \tau}}
\def\lambdabf{{\boldsymbol \lambda}}
\def\xibf{{\boldsymbol \xi}}
\def\Ggothic{\mathfrak{G}}
\def\fin{f_{\rm in}}
\def\fout{f_{\rm out}}
\def\gin{g_{\rm in}}
\def\gout{g_{\rm out}}
\def\Fin{F_{\rm in}}
\def\Fout{F_{\rm out}}
\def\Gin{G_{\rm in}}
\def\Gout{G_{\rm out}}
\def\ppost{p^{\rm post}}
\title{Generalized Approximate Message Passing for
Estimation with Random Linear Mixing}
\author{Sundeep Rangan, ~\IEEEmembership{Member,~IEEE}
\thanks{This material is based upon work supported by the National Science
Foundation under Grant No. 1116589.}
\thanks{S. Rangan (email: srangan@poly.edu) is with
        Polytechnic Institute of New York University, Brooklyn, NY.}
}
\begin{document}


\setcounter{page}{1}

\maketitle
\begin{abstract}
We consider the estimation of an i.i.d.\ random vector
observed through a linear transform followed by
a componentwise, probabilistic (possibly nonlinear) measurement channel.
A novel algorithm, called generalized approximate message passing (GAMP),
is presented that provides computationally efficient
approximate implementations of max-sum and sum-problem loopy belief propagation
for such problems.  
The algorithm extends earlier approximate message passing methods to incorporate
arbitrary distributions on both the input and output of the transform
and can be applied to a wide range of problems in nonlinear compressed sensing and
learning.

Extending an analysis by Bayati and Montanari, we argue that
the asymptotic componentwise behavior of the GAMP method under large, i.i.d.\ Gaussian
transforms is described by a simple set of state evolution (SE) equations.
From the SE equations, one can \emph{exactly} predict the asymptotic value of
virtually any componentwise
performance metric including mean-squared error or detection accuracy.
Moreover, the analysis is valid for arbitrary input and output distributions, even when
the corresponding optimization problems are non-convex.
The results match predictions by Guo and Wang for relaxed belief propagation on
large sparse matrices
and, in certain instances, also agree with the optimal performance predicted by the
replica method.
The GAMP methodology thus provides a computationally efficient methodology,
applicable to a large class of non-Gaussian estimation problems with precise
asymptotic performance guarantees.
\end{abstract}

\begin{IEEEkeywords}
Optimization, random matrices, estimation, belief propagation, graphical models,
compressed sensing.
\end{IEEEkeywords}


\section{Introduction} \label{sec:intro}

The problem of estimating vectors from linear transforms followed by random,
possibly nonlinear, measurements,
arises in a range of problems in signal processing, communications,
 and learning.
This paper considers a general class of such estimation problems in a Bayesian setting
shown in Fig.\ \ref{fig:AMPGenChan}.
An input vector $\qbf \in Q^n$ has components $q_j \in Q$
for some set $Q$ and
generates an unknown random vector $\xbf \in \R^n$ through
a componentwise input channel described by a conditional distribution
$p_{X|Q}(x_j|q_j)$.
The vector $\xbf$ is then passed through a linear transform
\beq \label{eq:zAx}
    \zbf = \Abf \xbf,
\eeq
where $\Abf \in \R^{m \x n}$ is a known transform matrix.
Finally, each component of $z_i$ of $\zbf$ randomly generates
an output component $y_i$ of a vector $\ybf \in Y^m$ through a second
scalar conditional distribution $p_{Y|Z}(y_i|z_i)$,
where $Y$ is some output set.
The problem is to estimate the transform input $\xbf$
and output $\zbf$ from the system input
vector $\qbf$, system output $\ybf$ and linear transform $\Abf$.

The formulation is general and has wide applicability -- we will present
several applications in Section \ref{sec:examples}.
However, for many non-Gaussian instances of the problem, exact computations of
quantities such as the posterior mode or mean  of $\xbf$ is computationally prohibitive.
The primary difficulty is that the matrix $\Abf$
``couples" or ``mixes" the coefficients of $\xbf$ into $\zbf$.
If the transform matrix
were the identity matrix (i.e.\ $m=n$ and $\Abf = I$), then
the estimation problem would \emph{decouple}
into $m=n$ scalar estimation problems, each defined by the
Markov chain:
\[
    q_j \sim p_Q(q_j) \stackrel{p_{X|Q}(x_j|q_j)}{\longrightarrow} x_j = z_j
        \stackrel{p_{Y|Z}(y_j|z_j)}{\longrightarrow} y_j.
\]
Since all the quantities in this Markov chain are scalars, one could
numerically compute
the exact posterior distribution on any component
$x_j (= z_j)$ from $q_j$ and $y_j$ with one-dimensional integrals.
However, for a general (i.e.\ non-identity) matrix $\Abf$,
the components of $\xbf$ are coupled into the vector $\zbf$.
In this case, the posterior distribution of any particular component $x_j$
or $z_i$ would involve a high-dimensional integral that is, in general, difficult
to evaluate.

This paper presents a novel algorithm,
\emph{generalized approximate message passing} or GAMP,
that \emph{decouples} the vector-valued estimation problem into a
sequence of scalar problems and linear transforms.
The GAMP algorithm is an extension of several earlier
Gaussian and quadratic approximations of loopy belief propagation (loopy BP)
that have been successful in many previous applications,
including, most recently, compressing sensing
\cite{BoutrosC:02,NeirottiS:05,TanakaO:05,GuoW:06,GuoW:07,DonohoMM:09,BayatiM:11,Rangan:10arXiv,Montanari:11}
(See the ``Previous Works" subsection below).
The proposed methodology extends these methods to provide
several valuable features:
\begin{itemize}
\item \emph{Generality:}  Most importantly, the
GAMP methodology applies to essentially arbitrary
priors $P_{X|Q}$ and output channel distributions $P_{Y|Z}$.  The algorithm can thus
incorporate arbitrary non-Gaussian inputs as well as output nonlinearities.
Moreover, the algorithm can be used to realize approximations of both max-sum
loopy BP for \emph{maximum a posteriori} (MAP) estimation and sum-product loopy BP
for minimum mean-squared error (MMSE) estimates and approximations of the
posterior marginals.

\item \emph{Computational simplicity:}  The algorithm is computationally simple
with each iteration involving only scalar estimation computations on the
components of $\xbf$ and $\zbf$ along with linear transforms.
Indeed, the iterations are similar in form to the fastest known methods
for such problems
 \cite{WrightNF:09,ForGlow:83,GlowLeTal:89,HeLiaoHanY:02,WenGolYin:10},
 while being potentially more general.  Moreover, our simulations indicate excellent
 convergence in a small number, typically 10 to 20, iterations.

\item \emph{Analytic tractability:}  Our main theoretical result, Claim~\ref{thm:gampSE},
shows that for large Gaussian i.i.d.\ transforms, $\Abf$,
the asymptotic behavior of the components of the GAMP algorithm
are described exactly by a simple \emph{scalar equivalent model}.
The parameters in this model can be computed
by a set of scalar \emph{state evolution} (SE) equations, analogous to
the density evolution equations for BP decoding of LDPC codes
\cite{TenBrink:01,AshkiminKT:04}.  From this scalar equivalent model,
one can asymptotically \emph{exactly} predict any componentwise performance metric
such as mean-squared error (MSE) or detection accuracy.
Moreover, the SE analysis generalizes results on earlier approximate
message passing-like algorithms
\cite{BoutrosC:02,NeirottiS:05,TanakaO:05,GuoW:06,GuoW:07,DonohoMM:09,BayatiM:11,Rangan:10arXiv,Montanari:11},
and also, in certain instances, show a match with the optimal performance
as predicted by the replica method in statistical physics
\cite{Tanaka:02,GuoV:05,Rangan:10arXiv,KabashimaWT:09arXiv,KrzMSSZ:11-arxiv,KrzMSSZ:12-arxiv}.
\end{itemize}

The GAMP algorithm thus provides a general and systematic approach
to a large class of estimation problems with linear mixing
that is computationally tractable and widely-applicable.
Moreover, in the case of large random $\Abf$, the method admits
exact asymptotic performance characterizations.

\subsection{Prior Work} \label{sec:priorWork}

The GAMP algorithm belongs to a long line of methods
based on Gaussian and quadratic 
approximations of loopy belief propagation (loopy BP).
Loopy BP is a general methodology for estimation problems
where the statistical relationships between variables are represented
via a graphical model.  The loopy BP algorithm iteratively updates
estimates of the
variables via a message passing procedure along the graph edges
\cite{Pearl:88,WainwrightJ:08}.
For the linear mixing estimation problem, the graph is precisely
the incidence matrix of the matrix $\Abf$
and the loopy BP messages are passed between nodes corresponding to the input
variables $x_j$ and outputs variables $z_i$.

The GAMP algorithm provides approximations of two
important variants of loopy BP:
\begin{itemize}
\item \emph{Max-sum} loopy BP
for approximately computing MAP estimates of $\xbf$ and $\zbf$
as well as sensitivities of the maximum of the  posterior distribution
to the individual components of $\xbf$ and $\zbf$; and
\item \emph{Sum-product} loopy BP for approximations of the
minimum mean-squared error (MMSE) estimates of $\xbf$ and $\zbf$
and the marginal posteriors of their individual components.
\end{itemize}
We will call the GAMP approximations of the two algorithms,
max-sum GAMP and sum-product GAMP, respectively.

In communications and signal processing,
BP is best known for its connections to
iterative decoding in turbo and LDPC codes \cite{McElieceMC:98,MacKay:99,MacKayN:97}.
However, while turbo and LDPC codes typically
involve computations over finite fields,
BP has also been successfully applied in a number of
problems with linear real-valued mixing, including
CDMA multiuser detection \cite{BoutrosC:02,YoshidaT:06,GuoW:08},
lattice codes \cite{SommerFS:08} and
compressed sensing \cite{BaronSB:10,GuoBS:09-Allerton,DonohoMM:09}.

\begin{figure}
\begin{center}
  \includegraphics[width=3.2in,height=1.4in]{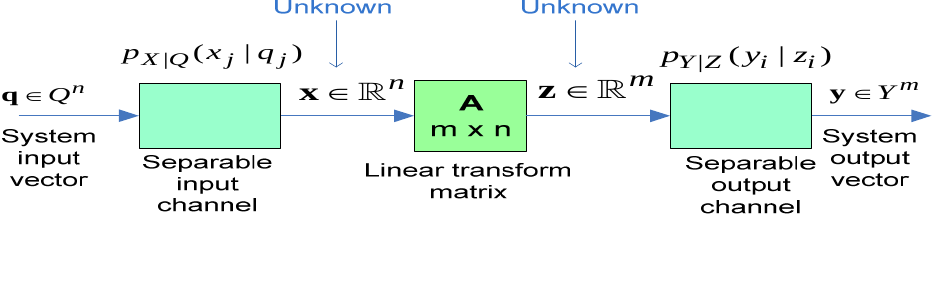}
\end{center}
\caption{General estimation problem with linear mixing.  The problem
is to estimate the transform input and output vectors $\xbf$ and $\zbf$
from the system input and output vectors $\qbf$ and $\ybf$,
channel transition probability functions
$p_{X|Q}(\cdot|\cdot)$, $p_{Y|Z}(\cdot|\cdot)$ and transform matrix $\Abf$.}
\label{fig:AMPGenChan}
\end{figure}

Although exact implementation of loopy BP for dense graphs is generally
computationally difficult,
Gaussian and quadratic approximations of loopy BP have been widely and successfully
used for many years.
Such approximate methods originated in CDMA multiuser detection problems in
\cite{BoutrosC:02,NeirottiS:05,TanakaO:05,GuoW:06,GuoW:07},
and, more recently, have attracted considerable interest in compressed sensing \cite{DonohoMM:09,BayatiM:11,Rangan:10arXiv,Montanari:11,KrzMSSZ:11-arxiv,KrzMSSZ:12-arxiv}.
The methods have appeared under a variety of names including ``approximate BP",
``relaxed BP" and, most recently, ``approximate message passing".
Gaussian approximations are also used in expectation propagation
\cite{Minka:01,Seeger:08}, as well as the analysis and design of
LDPC codes \cite{ChungRU:01,ElGamalH:01,Varshney:07}.

The GAMP algorithm presented here is most closely related to the
approximate message passing (AMP) methods in
\cite{DonohoMM:09,DonohoMM:10-ITW1,DonohoMM:10-ITW2}
as well as the relaxed BP methods in \cite{GuoW:06,GuoW:07,Rangan:10arXiv}.
The AMP method \cite{DonohoMM:09} used a quadratic approximation of
max-sum loopy BP to derive an efficient implementation of the LASSO estimator
\cite{Tibshirani:96,ChenDS:96}.  The LASSO estimate
is equivalent to a MAP estimate of $\xbf$ assuming a Laplacian prior.
This method was then generalized in \cite{DonohoMM:10-ITW1,DonohoMM:10-ITW2}
to implement Bayesian estimation with arbitrary priors
using Gaussian approximations of sum-product loopy BP.
The relaxed BP method in \cite{GuoW:07,Rangan:10arXiv} offers a further
extension to nonlinear output channels, but the algorithm is limited to
sum-product approximations of loopy BP and the analysis is limited
to certain random, sparse matrices.

The GAMP method proposed in this paper provides a unified
methodology for estimation with dense matrices that can incorporate
arbitrary input and output distributions and provide
approximations of both max-sum and sum-product loopy BP.

Moreover, similar to previous analyses of AMP-like algorithms,
we argue that that the asymptotic behavior of GAMP has a sharp characterization for
certain large, i.i.d.\ matrices.
Specifically, for large random $\Abf$, the componentwise behavior of many AMP techniques
can be asymptotically described exactly by a set of state evolution (SE) equations.
Such analyses have been presented  for both large sparse matrices
\cite{BoutrosC:02,GuoW:06,GuoW:07,Rangan:10arXiv},
and, more recently, for dense i.i.d.\ matrices \cite{DonohoMM:09,BayatiM:11}.
The validity of the SE analysis on dense matrices was particularly difficult
to rigorously prove since conventional density evolution techniques
such as \cite{TenBrink:01,AshkiminKT:04} need graphs that are locally cycle-free.
The key breakthrough was an analysis by Bayati and Montanari in \cite{BayatiM:11}
that provided the first completely rigorous analysis of the dynamics of message
passing in dense graphs using a new conditioning argument from
\cite{Bolthausen:09}.
That work also provided a rigorous justification of many predictions
\cite{Tanaka:02,GuoV:05,KabashimaWT:09arXiv,RanganFG:12-IT,KrzMSSZ:11-arxiv,KrzMSSZ:12-arxiv}
of the behavior of optimal estimators using the replica method from statistical physics.

The main theoretical contribution of this paper, Claim~\ref{thm:gampSE},
can be seen as 
an extension of Bayati and Montanari's
analysis in \cite{BayatiM:11} to GAMP -- the novelty being the
incorporation of arbitrary output channels.
Specifically, we present a generalization of 
the SE equations to arbitrary output channels recovering several earlier
results on AWGN channels as special instances.  The SE equations also agree with the results in
\cite{GuoW:07,Rangan:10arXiv} for relaxed BP applied to
arbitrary output channels for large sparse matrices.

The extension of Bayati and Montanari's
analysis in \cite{BayatiM:11} to incorporate arbitrary output channels is relatively 
straightforward.  Unfortunately, a full re-derivation of their result would be long and 
beyond the scope of this paper.  We thus only provide a brief of sketch of the main
steps and our result is thus not fully rigorous.  To emphasize the lack of rigor, we 
use the term Claim instead of Theorem.  The predictions, however, are confirmed in 
numerical simulations.

A conference version of this paper appeared in \cite{Rangan:11-ISIT}. This paper
includes all the proofs and derivations along with more detailed discussions and
simulations.

\subsection{Outline}
The outline of the remainder of this paper is as follows.
As motivation, Section~\ref{sec:examples} provides some examples of the linear mixing problems.
The GAMP method is then introduced in Section~\ref{sec:GAMP},
and precise equations for the max-sum and sum-product
versions are given in Section~\ref{sec:bpApprox}.  
The asymptotic state evolution analysis is presented in Section~\ref{sec:asymAnal}
and shown to recover many previous predictions of the SE analysis
in several special cases discussed in Section~\ref{sec:special}.
A simple numerical simulation to confirm the predictions is presented in Section~\ref{sec:sparseNL} which considers a nonlinear compressed sensing problem.
Finally, since the original publication of the paper, there has been 
considerable work on the topic.  The conclusions summarize this work and present 
avenues for future research.

\section{Examples and Applications} \label{sec:examples}
The linear mixing model
is extremely general and can be applied in a range of circumstances.
We review some simple examples for both the output and input
channels from \cite{Rangan:10arXiv}.

\subsection{Output Channel Examples} \label{sec:outChanEx}

\paragraph{AWGN output channel}  For an additive white Gaussian noise (AWGN)
output channel, the output vector $\ybf$ can be written as
\beq \label{eq:yzw}
    \ybf = \zbf + \wbf = \Abf\xbf + \wbf,
\eeq
where $\wbf$ is a zero mean, Gaussian i.i.d.\ random vector
independent of $\xbf$.
The corresponding channel transition probability
distribution is given by
\beq \label{eq:pyzAwgn}
    p_{Y|Z}(y|z) = \frac{1}{\sqrt{2\pi\tau^w}}
    \exp\left(-\frac{(y-z)^2}{2\tau^w}\right),
\eeq
where $\tau^w > 0$ is the variance of the components of $\wbf$.

\paragraph{Non-Gaussian noise models}
Since the output channel can incorporate an arbitrary
separable distribution,
the linear mixing model can also include
the model (\ref{eq:yzw}) with non-Gaussian
noise vectors $\wbf$, provided the components of
$\wbf$ are i.i.d.
One interesting application for a non-Gaussian noise model
is to study the bounded noise that arises in quantization
as discussed in \cite{Rangan:10arXiv}.

\paragraph{Logistic channels}  A
quite different
channel is based on a \emph{logistic} output.
In this model, each output $y_i$ is $0$ or $1$,
where the probability
that $y_i=1$ is given by some sigmoidal function such as
\beq \label{eq:logitOut}
    p_{Y|Z}(y_i=1|z_i) = \frac{1}{1+a\exp(-z_i)},
\eeq
for some constant $a > 0$.  Thus, larger values of $z_i$
result in a higher probability that $y_i=1$.

This logistic model can be used in classification problems as follows \cite{Bishop:06}:
Suppose one is given $m$ samples, with each sample being labeled
as belonging to one of two classes.
Let $y_i = 0$ or 1 denote the class of sample $i$.
Also, suppose that the $i$th row of the transform matrix $\Abf$
contains a vector of $n$ data values associated with the $i$th sample.
Then, using a logistic channel model such as (\ref{eq:logitOut}),
the problem of estimating the vector $\xbf$ from the labels $\ybf$
and data $\Abf$ is equivalent to finding a linear
dependence on the data that classifies the samples between the
two classes.
This problem is often referred to as logistic regression
and the resulting vector $\xbf$ is called the regression vector.
By adjusting the prior on the components of $\xbf$, one
can then impose constraints on the components of $\xbf$
including, for example, sparsity constraints.

\subsection{Input Channel Examples} \label{sec:inChanEx}
The distributions $p_{X|Q}(x_j|q_j)$ models the prior on $x_j$,
with the variable $q_j$ being a parameter of that prior that varies
over the components, but is known to the estimator.
When all the components of $x_j$ are identically distributed, we
can ignore $q_j$ and use a constant distribution $p_X(x_j)$.

\paragraph{Sparse priors and compressed sensing}
One class of non-Gaussian priors that have attracted considerable
recent attention is sparse distributions.
A vector $\xbf$ is sparse if a large fraction of its components are zero
or close to zero.  Sparsity can be modeled probabilistically with a variety
of heavy-tailed distributions $p_X(x_j)$
including Gaussian mixture models,
generalized Gaussians and Bernoulli distributions with a high probability of
the component being zero.
The estimation of sparse vectors with random linear measurements is
the basic subject of compressed sensing \cite{CandesRT:06-IT,Donoho:06,CandesT:06}
and fits naturally into the linear mixing framework.

\paragraph{Discrete distributions}
The linear mixing model can also incorporate discrete distributions on the
components of $\xbf$.
Discrete distribution arise often in communications problems
where discrete messages are modulated onto the components of $\xbf$.
The linear mixing with the transform matrix $\Abf$ comes into play
in CDMA spread spectrum systems and lattice codes mentioned above.

\paragraph{Power variations and dynamic range}
For multiuser detection problems in communications,
the parameter $q_j$ can be used to model \emph{a priori}
power variations amongst the users that are known to the receiver.
For example, suppose $q_j > 0$ almost surely and
$x_j = \sqrt{q_j}u_j$, with $u_j$ being identically
distributed across all components $j$ and independent of $q_j$.
Then, $\Exp(x_j^2|q_j) = q_j\Exp(u_j^2)$ and
$q_j$ thus represents a power scaling.
This model has been used in a variety of analyses of CDMA multiuser
detection \cite{TseH:99,GuoV:05,GuoW:06}
and random access channels \cite{FletcherRG:09arXiv}.

\section{Generalized Approximate Message Passing} \label{sec:GAMP}

We begin with a description of the generalized approximate message passing
(GAMP) algorithm, which is an extension of the AMP procedure in \cite{DonohoMM:09,BayatiM:11}.
Similar to the AMP method, the idea of the algorithm is to iteratively decompose
the vector valued estimation problem into a sequence of
scalar operations at the input and output nodes.
In the GAMP algorithm, the scalar operations are defined by two functions,
$\gout(\cdot)$ and $\gin(\cdot)$, that we call the \emph{scalar estimation} functions.
We will see in Section \ref{sec:bpApprox} that with appropriate
choices of these functions, the GAMP algorithm can provide Gaussian and quadratic
approximations of either sum-product and max-sum loopy BP.

The steps of the GAMP method are shown in Algorithm \ref{algo:GAMP}.
The algorithm produces a sequence of estimates, $\xbfhat(t)$ and $\zbfhat(t)$,
for the unknown vectors $\xbf$ and $\zbf$.  The algorithm also outputs vectors
$\taubf^x(t)$ and $\taubf^s(t)$.  As we will see in the case of the sum-product
GAMP (Section \ref{sec:bpApproxMmse}), these have interpretations as certain variances.
Although our analysis later is for real-valued matrices, we have written the
equations for the complex case. 

\begin{algorithm}
\caption{Generalized AMP (GAMP)}
\label{algo:GAMP}
Given a matrix $\Abf \in \R^{m \x n}$, system inputs and outputs $\qbf$  and
$\ybf$ and scalar estimation functions $\gin(\cdot)$, and $\gin(\cdot)$,
generate a sequence of estimates
$\xbfhat(t)$, $\zbfhat(t)$, for $t=0,1,\ldots$ through the following recursions:
\begin{enumerate}
\item \emph{Initialization:}  Set $t=0$ and set $\xhat_j(t)$ and $\tau^x_j(t)$
to some initial values.
\item \emph{Output linear step:}
\begin{subequations} \label{eq:outLin}  For each $i$, compute:
\beqa
    \tau^p_i(t) &=& \sum_j |a_{ij}|^2\tau^x_j(t) \label{eq:mup} \\
    \phat_i(t) &=& \sum_j a_{ij}\xhat_j(t) - \tau^p_i(t)\shat_i(\tm1),
        \label{eq:phati} \\
    \zhat_i(t) &=& \sum_j a_{ij}\xhat_j(t) \label{eq:zhati}
\eeqa
\end{subequations}
where initially, we take $\shat(-1) = 0$.

\item \emph{Output nonlinear step:}  For each $i$,
\begin{subequations} \label{eq:outNL}
\beqa
    \shat_i(t) &=& \gout(t,\phat_i(t),y_i,\tau^p_i(t)) \\
    \tau^s_i(t) &=& -\frac{\partial}{\partial \phat}
        \gout(t,\phat_i(t),y_i,\tau^p_i(t)).
\eeqa
\end{subequations}

\item \emph{Input linear step:}  For each $j$,
\begin{subequations} \label{eq:inLin}
\beqa
    \tau^r_j(t) &=& \left[ \sum_i |a_{ij}|^2\tau^s_i(t) \right]^{-1} \label{eq:mur}\\
    \rhat_j(t) &=& \xhat_j(t) + \tau^r_j(t)\sum_i a_{ij}\shat_i(t).
        \label{eq:rhat}
\eeqa
\end{subequations}

\item \emph{Input nonlinear step:}  For each $j$,
\begin{subequations} \label{eq:inNL}
\beqa
    \xhat_j(\tp1) &=& \gin(t,\rhat_j(t),q_j,\tau^r_j(t)) \\
    \tau^x_j(\tp1) &=& \tau^r_j(t)\frac{\partial}{\partial \rhat}
        \gin(t,\rhat_j(t),q_j,\tau^r_j(t)). \label{eq:mux}
\eeqa
\end{subequations}
Then increment $t=t+1$ and return to step 2 until a sufficient number
of iterations have been performed.
\end{enumerate}
\end{algorithm}

\subsection{Computational Complexity}
We will discuss the selection of the scalar estimation functions
and provide a detailed analysis of the algorithm performance later.
We first describe the algorithm's computational simplicity
-- which is a key part of the algorithm appeal.

Each iteration of the algorithm has four steps.  The first step,
the output linear step, involves only a matrix-vector multiplications by $\Abf$
and $|\Abf|^2$, where the squared magnitude is taken componentwise.
The worst case complexity is $O(mn)$ and would be smaller for structured transforms
such as Fourier, wavelet or sparse.
The next step --- the nonlinear step --- involves componentwise applications of the
nonlinear output estimation function $\gout(\cdot)$ on each of the
$m$ components of the output vector $\pbfhat$.  As we will see,
the function $\gout(\cdot)$ does not change with the dimension, so the total complexity
of the output nonlinear step is complexity of $O(m)$.
Similarly, the input steps involve matrix-vector multiplications by $\Abf^T$ and $(|\Abf|^2)^T$
along with componentwise scalar operations at the input.
The complexity is again dominated by the transforms with a worst-case complexity of
$O(mn)$.

Thus, we see the GAMP algorithm reduces the vector-valued operation to a sequence of
linear transforms and scalar estimation functions.  The worst-case total complexity per iteration
is thus $O(mn)$ and smaller for structured transforms.  Moreover, as we will see in
the state evolution analysis, the number of iterations required for the same per component
performance does not increase with the problem size, so the total complexity is bounded
by the matrix-vector multiplication.  In addition, we will see in the simulations that
good performance can be obtained with a small number of iterations, usually 10 to 20.

It should be pointed out that the structure of the GAMP iterations -- transforms
followed by scalar operations -- underly many of the most of successful methods
for linear-mixing type problems.  In particular, the separable approximation
method of \cite{WrightNF:09} and alternating direction methods in
\cite{ForGlow:83,GlowLeTal:89,HeLiaoHanY:02,WenGolYin:10} are all based on iterations
of this form.
The contribution of the current paper is to show that specific instance of these
transform + separating algorithms can be interpreted as an approximation of loopy BP
and admits precise asymptotic analyses.

\subsection{Further Simplifications}

To further reduce computational complexity,
the GAMP algorithm can be approximated by a modified procedure shown in
Algorithm \ref{algo:GAMP-simp}.
In the modified procedure, the variance vectors, $\taubf^r(t)$ and
$\taubf^p(t)$, are replaced with scalars $\tau^r(t)$ and $\tau^p(t)$,
thus forcing all the variance components to be the same.
This approximation would be valid when the components of the transform
matrix $\Abf$ are the approximately equal so that 
$|a_{ij}|^2 \approx \|\Abf\|^2_F/mn$ for all $i$, $j$.  

The approximations in  Algorithm~\ref{algo:GAMP-simp} can also be heuristically justified  
when $\Abf$ has i.i.d.\ components.  
Specifically, if $\Abf$ is i.i.d., $m$ and $n$ are large, and the dependence 
between the components of $|A_{ij}|^2$
and the vectors $\taubf^x(t)$ and $\taubf^s(t)$ can be ignored, the simplification in 
Algorithm~\ref{algo:GAMP-simp} can be justified through the Law of Large Numbers.
Of course, $|A_{ij}|^2$ is not independent of $\taubf^x(t)$ and $\taubf^s(t)$,
but in our simulation below for an i.i.d.\ matrix, 
we will see very little difference between 
Algorithm~\ref{algo:GAMP} and the simplified version, Algorithm~\ref{algo:GAMP-simp}.

The benefit of the simplification is that
we can eliminate the matrix multiplications by $|\Abf|^2$ and $(|\Abf|^2)^T$
-- reducing the number of transforms per iteration from four to two.
This savings can be significant, particularly when the $\Abf^2$ and $(\Abf^2)^T$
have no particularly simple implementation.  
The simplified algorithm, Algorithm~\ref{algo:GAMP-simp}, also more closely 
matches the AMP procedure of \cite{DonohoMM:09}, and will be more amenable to analysis later
in Section~\ref{sec:asymAnal}.

\begin{algorithm}
\caption{GAMP with Scalar Variances}
\label{algo:GAMP-simp}
Given a matrix $\Abf \in \R^{m \x n}$, a system input and output vectors $\qbf$ and $\ybf$, selection functions $\gin(\cdot)$, and $\gin(\cdot)$,
generate a sequence of estimates
$\xbfhat(t)$, $\zbfhat(t)$, for $t=0,1,\ldots$ through the recursions:
\begin{enumerate}
\item \emph{Initialization:}  Set $t=0$ and let $\xhat_j(t)$ and $\tau^x(t)$
be any initial sequences.
\item \emph{Output linear step:}
\begin{subequations} \label{eq:outLinSimp}  For each $i$, compute:
\beqa
    \tau^p(t) &=& (1/m)\|\Abf\|^2_F\tau^x(t) \label{eq:mupSimp} \\
    \phat_i(t) &=& \sum_j a_{ij}\xhat_j(t) - \tau^p(t)\shat_i(\tm1),
        \label{eq:phatiSimp} \\
    \zhat_i(t) &=& \sum_j a_{ij}\xhat_j(t) \label{eq:zhatiSimp}
\eeqa
\end{subequations}
where initially, we take $\shat(-1) = 0$.

\item \emph{Output nonlinear step:}  For each $i$,
\begin{subequations} \label{eq:outNLSimp}
\beqa
    \shat_i(t) &=& \gout(t,\phat_i(t),y_i,\tau^p(t)) \\
    \tau^s(t) &=& -\frac{1}{m}\sum_{i=1}^m \frac{\partial}{\partial \phat}
        \gout(t,\phat_i(t),y_i,\tau^p(t)). \hspace{0.5cm} \label{eq:tausSimp}
\eeqa
\end{subequations}
\item \emph{Input linear step:}  For each $j$,
\begin{subequations} \label{eq:inLinSimp}
\beqa
    1/\tau^r(t) &=& (1/n)\|\Abf\|^2_F\tau^s(t) \label{eq:murSimp}\\
    \rhat_j(t) &=& \xhat_j(t) + \tau^r(t)\sum_i a_{ij}\shat_i(t).
        \label{eq:rhatSimp}
\eeqa
\end{subequations}

\item \emph{Input nonlinear step:}
\begin{subequations} \label{eq:inNLSimp}
\beqa
    \lefteqn{\xhat_j(\tp1) = \gin(t,\rhat_j(t),q_j,\tau^r(t)) } \\
    \lefteqn{\tau^x(\tp1) = } \nonumber \\
    &  & \frac{\tau^r(t)}{n}\sum_{j=1}^n\frac{\partial}{\partial \rhat}
        \gin(t,\rhat_j(t),q_j,\tau^r(t)).  \label{eq:muxSimp}
\eeqa
\end{subequations}
Then increment $t=t+1$ and return to step 2 until a sufficient number
of iterations have been performed.
\end{enumerate}
\end{algorithm}


\section{Scalar Estimation Functions to Approximate Loopy BP} \label{sec:bpApprox}

As discussed in the previous section, through proper selection of
the scalar estimation functions $\gin(\cdot)$ and $\gout(\cdot)$,
GAMP can provide approximations of either max-sum or sum-product loopy BP.
With these selections, we can thus realize the two most useful
special cases of GAMP:
\begin{itemize}
\item \textbf{Max-sum GAMP}:  An approximation of max-sum loopy BP
for computations of the MAP estimates and computations of the marginal maxima of
the posterior; and
\item \textbf{Sum-product GAMP}:  An approximation of sum-product loopy BP
for computations of the MMSE estimates and the marginal posterior distributions.
\end{itemize}
Heuristic ``derivations" of the scalar estimation functions for both of these
algorithms
are sketched in Appendices \ref{sec:bpMap} and \ref{sec:bpMmse}
and summarized here.
The selection functions are also summarized in Table \ref{tbl:scalarEstim}.
We emphasize that the approximations are entirely heuristic --
we don't claim any formal properties of the approximation.
However, the analysis of the GAMP algorithm with these or other
functions will be more rigorous.

\renewcommand*\arraystretch{1.5}
\begin{table*}
\centering
\begin{tabular}{|l|l|l|l|l|}
\hline
\textbf{Method} &
    \multicolumn{2}{c|}{\textbf{Input scalar estimation functions}} &
    \multicolumn{2}{c|}{\textbf{Output scalar estimation functions}} \tabularnewline
    \cline{2-5}

 & $\gin(\rhat,q,\tau^r)$ & $\tau^r\gin'(\rhat,q,\tau^r)$ &
   $\gout(\phat,y,\tau^p)$ & $-\gout'(\phat,y,\tau^p)$ \tabularnewline \hline

\textbf{Max-sum GAMP} & $\argmax_x \Fin(z,\rhat,q,\tau^r)$ &
$\tau^r/(1-\tau^r\fin''(\xhat,q))$ & $(\zhat^0-\phat)/\tau^p$  &
    $\fout''(\zhat^0,y)/(\tau^p\fout''(\zhat^0,y)-1)$ \tabularnewline

for MAP estimation &  &  & $\zhat^0 := \argmax_z \Fout(z,\phat,y,\tau^p)$  &
    \tabularnewline \hline

\textbf{Sum-product GAMP} & $E(x|\rhat,q,\tau^r)$ & $\var(x|\rhat,q,\tau^r)$
  & $(\zhat^0-\phat)/\tau^p$  &  $(\tau^p(t)-\var(z|\phat,y))/(\tau^p(t))^2$
 \tabularnewline

for MMSE estimation  & \hspace{2mm}  $R=X+{\mathcal N}(0,\tau^r)$ &
    & $\zhat^0 := \Exp(z|\phat,y,\tau^p)$  &  \tabularnewline

  &  &  & \hspace{2mm} $Y\sim P_{Y|Z}$,
    $Z \sim {\mathcal N}(\phat,\tau^p)$  & \tabularnewline

\hline
\textbf{AWGN} & $\tau^{x0}(\rhat\!-\!q)/(\tau^r\!+\!\tau^{x0})\!+\!q$ &
  $\tau^{x0}\tau^r/(\tau^r+\tau^{x0})$
  & $(y-\phat)/(\tau^w+\tau^p)$  &  $1/(\tau^w+\tau^p)$
 \tabularnewline

 & \hspace{2mm} $X = {\mathcal N}(q,\tau^{x0})$ &
  & \hspace{2mm} $Y  = Z+ {\mathcal N}(0,\tau^w)$  &
 \tabularnewline \hline

\end{tabular}
\caption{Scalar input and output estimation functions
max-sum and sum-product GAMP.
For AWGN inputs and outputs, the scalar estimation functions for both max-sum
and sum-product algorithms are the same and have a particularly simple form. }
\label{tbl:scalarEstim}
\end{table*}

\renewcommand*\arraystretch{1.0}

\subsection{Max-Sum GAMP for MAP Estimation} \label{sec:bpApproxMap}

To describe the MAP estimator, observe that
for the linear mixing estimation problem in Section \ref{sec:intro},
the posterior density of the vector $\xbf$ given the system input $\qbf$
and output $\ybf$ is given by the conditional density function
\beq \label{eq:pxyq}
    p_{\xbf|\qbf,\ybf}(\xbf|\qbf,\ybf) := \frac{1}{Z(\qbf,\ybf)}
        \exp\left( F(\xbf,\Abf \xbf,\qbf,\ybf) \right),
\eeq
where
\beq \label{eq:Fmap}
    F(\xbf,\zbf,\qbf,\ybf) := \sum_{j=1}^n \fin(x_j,q_j)
        + \sum_{i=1}^m \fout(z_i,y_i),
\eeq
and
\begin{subequations} \label{eq:finout}
\beqa
    \fout(z,y) &:=& \log p_{Y|Z}(y|z) \label{eq:foutMap} \\
    \fin(x,q) &:=& \log p_{X|Q}(x|q). \label{eq:finMap}
\eeqa
\end{subequations}
The constant $Z(\qbf,\ybf)$ in \eqref{eq:pxyq} is a normalization constant.
Given this distribution, the \emph{maximum a posteriori} (MAP)
estimator is the maxima of \eqref{eq:pxyq} which is given by the optimization
\beq \label{eq:xhatMap}
    \xbfhat^{\rm map} := \argmax_{\xbf \in \R^n} F(\xbf,\zbf,\qbf,\ybf), \ \
    \zbfhat = \Abf\xbfhat.
\eeq
For each component $j$, we will also be interested in the \emph{marginal maxima}
of the posterior distribution
\beq \label{eq:margUtil}
    \Delta_j(x_j) := \max_{\xbf \backslash x_j}  F(\xbf,\zbf,\qbf,\ybf), \ \
    \zbfhat = \Abf\xbfhat,
\eeq
where the maximization is over vector $\xbf \in \R^n$ with a fixed value for the coefficient
$x_j$.  Note that the component $\xhat_j$ of the MAP estimate is given by
$\xhat_j = \argmax_{x_j} \Delta_j(x_j)$.
This, $\Delta_j(x_j)$ can be interpreted as the sensitivity
of the maxima to the value of the component $x_j$.

Note that one may also be interested in the optimization \eqref{eq:xhatMap}
where the objective is of the form \eqref{eq:Fmap}, but the functions
$\fin(\cdot)$ and $\fout(\cdot)$ are not derived from any density functions.
The max-sum GAMP method can be applied to these general optimization problems as well.

Now, an approximate implementation of max-sum BP for the MAP estimation
problem \eqref{eq:xhatMap} is described in Appendix \ref{sec:bpMap}.
It is suggested there that a possible input function to approximately
implement max-sum BP is given by
\beq \label{eq:ginMap}
    \gin(\rhat,q,\tau^r) := \argmax_x  \Fin(x,\rhat,q,\tau^r)
\eeq
where
\beq \label{eq:FinMap}
    \Fin(x,\rhat,q,\tau^r) := \fin(x,q) - \frac{1}{2\tau^r}(\rhat -x)^2.
\eeq
Here, and below, we have dropped the dependence on the iteration number $t$
when it is not needed.
The Appendix also shows that the function \eqref{eq:ginMap}
has a derivative satisfying
\beq \label{eq:ginMapDeriv}
    \tau^r \frac{\partial}{\partial \rhat} \gin(\rhat,q,\tau^r) =
        \frac{\tau^r }{1-\tau^r\fin''(\xhat,q)},
\eeq
where the second derivative $\fin''(x,q)$ is with respect to $x$
and $\xhat = \gin(r,q,\tau^r)$.
Also, the marginal maxima \eqref{eq:margUtil} is approximately given by
\beq \label{eq:DelFin}
    \Delta_j(x_j) \approx \Fin(x_j,\rhat_j,q_j,\tau_j^r)+ \mbox{const},
\eeq
where the constant term does not depend on $x_j$.

The initial condition for the GAMP algorithm should be taken as
\beq \label{eq:xinitMap}
    \xhat_j(0) = \argmax_{x_j} \fin(x_j,q_j), \ \ \
    \tau^x_j(0) = \frac{1}{\fin''(\xhat_j(0),q_j)}.
\eeq
This initial conditions corresponds to the output of \eqref{eq:inNL} with
$t=0$, the functions in \eqref{eq:ginMap}
and \eqref{eq:ginMapDeriv} and $\tau^r(-1) \arr \infty$.

Observe that when $\fin$ is given by \eqref{eq:finMap}, $\gin(\rhat,q,\tau^r)$
is precisely the scalar MAP estimate of a random variable $X$
given $Q=q$ and $\Rhat=\rhat$ for the random variables
\beq \label{eq:RhatV}
    \Rhat = X + V, \ \ \ V \sim {\cal N}(0,\tau^r),
\eeq
where $X \sim p_{X|Q}(x|q)$, $Q \sim p_Q(q)$ and
with $V$ independent of $X$ and $Q$.  With this definition,
$\Rhat$ can be interpreted as a Gaussian noise-corrupted version of $X$
with noise level $\tau^r$.

Appendix \ref{sec:bpMap} shows that the
output function for the approximation of max-sum BP is given by
\beq \label{eq:goutMap}
    \gout(\phat,y,\tau^p) := \frac{1}{\tau^p}(\zhat^0 - \phat),
\eeq
where
\beq \label{eq:zhat0Map}
    \zhat^0 := \argmax_z \Fout(z,\phat,y,\tau^p),
\eeq
and
\beq \label{eq:FoutMap}
    \Fout(z,\phat,y,\tau^p) := \fout(z,y) - \frac{1}{2\tau^p}(z-\phat)^2.
\eeq
The negative derivative of this function is given by
\beq \label{eq:goutMapDeriv}
    -\frac{\partial}{\partial \phat} \gout(\phat,y,\tau^p) =
        \frac{-\fout''(\zhat^0,y)}{1-\tau^p\fout''(\zhat^0,y)},
\eeq
where the second derivative $\fout''(z,y)$ is with respect to $z$.

Now, when $\fout(z,y)$ is given by \eqref{eq:foutMap},
$\Fout(z,\phat,y,\tau^p)$ in \eqref{eq:FoutMap}
can be interpreted as the log posterior of a random variable $Z$
given $Y = y$ and
\beq \label{eq:Zchan}
    Z \sim {\cal N}(\phat,\tau^p), \qquad Y \sim p_{Y|Z}(y|z).
\eeq
In particular, $\zhat^0$ in \eqref{eq:zhat0Map} is the MAP estimate
of $Z$.

We see that the max-sum GAMP algorithm reduces the vector MAP estimation
problem to a sequence of scalar MAP estimations problems at the inputs
and outputs.
The scalar parameters $\tau^r(t)$ and $\tau^p(t)$
represent effective Gaussian noise levels in these problems.
The equations for the scalar estimation functions are summarized in
Table \ref{tbl:scalarEstim}.

\subsection{Sum-Product GAMP for MMSE Estimation} \label{sec:bpApproxMmse}

The minimum mean squared error (MMSE) estimate
is the conditional expectation
\beq \label{eq:xhatMmse}
    \xbfhat^{\rm mmse} := \Exp\left[\xbf ~|~\ybf,\qbf\right],
\eeq
relative to the conditional density \eqref{eq:pxyq}.
We are also interested in the log posterior marginals
\beq \label{eq:logpxjPost}
    \Delta_j(x_j) := \log p(x_j|\qbf,\ybf).
\eeq

The selection of the functions $\gin(\cdot)$ and $\gout(\cdot)$
to approximately implement sum-product loopy BP to compute the
MMSE estimates and posterior marginals
is described in Appendix \ref{sec:bpMmse}.
Heuristic arguments in that section, show that
the input function to implement BP-based MMSE estimation is given by
\beq \label{eq:ginMmse}
    \gin(\rhat,q,\tau^r) := \Exp[X~|~\Rhat=\rhat, ~Q=q],
\eeq
where the expectation is over the variables in \eqref{eq:RhatV}.
Also, the derivative is given by the variance,
\beq \label{eq:ginMmseDeriv}
   \tau^r \frac{\partial}{\partial \rhat} \gin(\rhat,q,\tau^r)
        := \var[X~|~\Rhat=\rhat, ~Q=q].
\eeq
In addition, the log posterior marginal \eqref{eq:logpxjPost}
is approximately given by \eqref{eq:DelFin}.  From the definition of $\Fin(\cdot)$
in \eqref{eq:FinMap} we see that the posterior marginal is approximately given by
\beq \label{eq:pxjPost}
    p(x_j|\qbf,\ybf) \approx \frac{1}{Z}
        p_{X|Q}(x_j|q_j)\exp\Bigl[ -\frac{1}{2\tau_r}(\rhat_j-x_j)^2\Bigr],
\eeq
where $Z$ is a normalization constant.

The initial condition for the GAMP algorithm for MMSE estimation
should be taken as
\beq \label{eq:xinitMmse}
    \xhat_j(0) = \Exp(X~|~Q=q_j) \ \
    \tau^x_j(0) = \var(X~|~Q=q_j),
\eeq
where the expectations are over the distribution $p_{X|Q}(x_j|q_j)$.
Thus, the algorithm is initialized to the the prior mean and variance
on $x_j$ based on the parameter $q_j$ but no observations.
Equivalently, the initial conditions \eqref{eq:xinitMmse} corresponds to the output of \eqref{eq:inNL} with
$t=0$, the functions in \eqref{eq:ginMmse}
and \eqref{eq:ginMmseDeriv} and $\tau^r(-1) \arr \infty$.

To describe the output function $\gout(\phat,y,\tau^p)$,
consider a random variable $z$ with conditional probability density
\beq \label{eq:pFout}
    p(z|\phat,y,\tau^p) \propto \exp \Fout(z,\phat,y,\tau^p),
\eeq
where $\Fout(z,\phat,y,\tau^p)$ is given in \eqref{eq:FoutMap}.
The distribution \eqref{eq:pFout} is the posterior density function
of the random variable $Z$ with observation $Y$ in \eqref{eq:Zchan}.
Appendix \ref{sec:bpMmse} shows that the output function
 $\gout(\phat,y,\tau^p)$ to implement approximate BP for the MMSE problem
is given by
\beq \label{eq:goutMmse}
    \gout(\phat,y,\tau^p) := \frac{1}{\tau^p}(\zhat^0 - \phat), \ \
        \zhat^0 := \Exp(z|\phat,y,\tau^p),
\eeq
where the expectation is over the density function \eqref{eq:pFout}.
Also, the negative derivative of $\gout(\cdot)$ is given by
\beq \label{eq:goutMmseDeriv}
    -\frac{\partial}{\partial \phat} \gout(\phat,y,\tau^p) =
        \frac{1}{\tau^p}\left(1 -
         \frac{\var\left(z|\phat,y,\tau^p\right)}{\tau^p}\right).
\eeq

Appendix \ref{sec:bpMmse}
also provides an alternative definition for $\gout(\cdot)$:
The function $\gout(\cdot)$ in \eqref{eq:goutMmse} is given by
\beq \label{eq:goutMmseScore}
    \gout(\phat,y,\tau^p) := \frac{\partial}{\partial \phat} \log
    p(y|\phat,\tau^p),
\eeq
where $p(y|\phat,\tau^p)$ is the density is from the channel \eqref{eq:Zchan}.
As a result, its negative derivative is
\beq \label{eq:goutMmseScoreDeriv}
    -\frac{\partial}{\partial \phat} \gout(\phat,y,\tau^p) =
    -\frac{\partial^2}{\partial \phat^2} \log p(y|\phat,\tau^p).
\eeq
Hence $\gout(\phat,y,\tau^p)$ has the interpretation of a \emph{score
function} of the parameter $\phat$ in the distribution of the random
variable $Y$ in \eqref{eq:Zchan}.

Thus, similar to the MAP estimation problem,
the sum-product GAMP algorithm reduces the vector MMSE estimation
problem to the evaluation of sequence of
scalar estimation problems from Gaussian noise.
Scalar MMSE estimation is performed at the input nodes,
and a score function of an ML estimation problem is performed at the output nodes.

\subsection{AWGN Output Channels} \label{sec:AWGN}
In the special case of an AWGN output channel, we will see that
that the output functions for max-sum and sum-product GAMP
are identical and reduce to the update in
the AMP algorithm of Bayati and Montanari in \cite{BayatiM:11}.
Suppose that the output channel is described by the AWGN distribution
\eqref{eq:pyzAwgn} for some output variance $\tau^w > 0$.
Then, it can be checked that
the distribution $p(z|\phat,y,\tau^p)$ in \eqref{eq:pFout} is the Gaussian
\beq \label{eq:pFoutAwgn}
    p(z|\phat,y,\tau^p) \sim {\cal N}(\zhat^0, \tau^z),
\eeq
where
\begin{subequations} \label{eq:zmuAwgn}
\beqa
    \zhat^0 &:=& \phat + \frac{\tau^p}{\tau^w + \tau^p}(y-\phat),
        \label{eq:zhat0Awgn} \\
    \tau^z &:=& \frac{\tau^w\tau^p}{\tau^w + \tau^p} \label{eq:muzAwgn}
\eeqa
\end{subequations}
It can be verified that the conditional mean $\zhat^0$ in \eqref{eq:zhat0Awgn}
agrees with both $\zhat^0$ in
\eqref{eq:zhat0Map} for the MAP estimator
and $\zhat^0$ in \eqref{eq:goutMmse} for the MMSE estimator.
Therefore, the output function $\gout$ for both the MAP estimate in
\eqref{eq:goutMap} and MMSE estimate in \eqref{eq:goutMmse} is given by
\beq \label{eq:goutAwgn}
    \gout(\phat,y,\tau^p) := \frac{y-\phat}{\tau^w + \tau^p}.
\eeq
The negative derivative of the function is given by
\beq \label{eq:goutAwgnDeriv}
    -\frac{\partial}{\partial \phat} \gout(\phat,y,\tau^p) =
     \frac{1}{\tau^p+\tau^w}.
\eeq
Therefore, for both max-sum and sum-product GAMP
$\gout(\phat,y,\tau^p)$ and its derivative are given by
\eqref{eq:goutAwgn} and \eqref{eq:goutAwgnDeriv}.
When, we apply these equations into the input updates \eqref{eq:inNL},
we precisely obtain the original AMP algorithm (with some scalings)
given in \cite{BayatiM:11}.

\subsection{AWGN Input Channels} \label{sec:AwgnIn}
Now suppose that the input density function $p_{X|Q}(x|q)$ is a Gaussian
density
\[
    p_{X|Q}(x|q) = {\mathcal N}(q,\tau^{x0}),
\]
for some variance $\tau^{x0} > 0$.  Then, it is easily checked that the
input estimate function $\gin(\cdot)$ and its derivative
are identical for both max-sum and sum-product GAMP and given by
\begin{subequations} \label{eq:ginAwgn}
\beqa
    \gin(\rhat,q,\tau^r) &:=& \frac{\tau^{x0}}{\tau^{x0}+ \tau^r}(\rhat-q) + q \\
    \tau^r\gin'(\rhat,q,\tau^r) &:=& \frac{\tau^{x0}\tau^r}{\tau^{x0}+ \tau^r}.
\eeqa
\end{subequations}
The functions are shown in Table \ref{tbl:scalarEstim}.

\section{Asymptotic Analysis} \label{sec:asymAnal}

We now present our main theoretical result, which is the SE analysis of
GAMP for large, Gaussian i.i.d.\ matrices $\Abf$.  

\subsection{Empirical Convergence of Random Variables} \label{sec:conv}

The analysis is a relatively minor modification of the results in
Bayati and Montanari's paper \cite{BayatiM:11}.
The work \cite{BayatiM:11} employs certain deterministic
models on the vectors and then proves convergence properties
of related empirical distributions. To apply the same analysis here, we need
to review some of their definitions.  We say a function $\phi:\R^r \arr \R^s$
is \emph{pseudo-Lipschitz} of order $k>1$, if there exists an $L > 0$
such for any $\xbf$, $\ybf \in \R^r$,
\[
    \|\phi(\xbf) - \phi(\ybf)\|
        \leq L(1+\|\xbf\|^{k-1}+\|\ybf\|^{k-1})\|\xbf-\ybf\|.
\]

Now suppose that for each $n=1,2,\ldots$,
$\vbf^{(n)}$ is a block vector with components
$\vbf^{(n)}_i \in \R^s$, $i=1,2,\ldots,\ell(n)$ for some $\ell(n)$.
So, the total dimension of $\vbf^{(n)}$ is $s\ell(n)$.
We say that the components of the vectors $\vbf^{(n)}$
\emph{empirically converges with bounded moments of order $k$} as $n\arr\infty$
to a random vector $\Vbf$ on $\R^s$ if:  For all
pseudo-Lipschitz continuous functions, $\phi$, of order $k$,
\[
    \lim_{n \arr\infty} \frac{1}{\ell(n)} \sum_{i=1}^{\ell(n)} \phi(\vbf^{(n)}_i)
    = \Exp(\phi(\Vbf)) < \infty.
\]
When the nature of convergence
is clear, we may write (with some abuse of notation)
\[
      \lim_{n \arr \infty} \vbf^{(n)}_i \stackrel{d}{=} \Vbf.
\]

\subsection{Assumptions} \label{sec:assumptions}

With these definitions, we can now formally state the modeling assumptions,
which follow along the lines of the
asymptotic model considered by Bayati and Montanari
in \cite{BayatiM:11}.
Specifically, we consider a sequence of random realizations of the
estimation problem in Section \ref{sec:intro}
indexed by the input signal dimension $n$.  For each $n$, we
assume the output dimension $m=m(n)$
is deterministic and scales linearly with the input dimension in that
\beq \label{eq:betaDef}
    \lim_{n \arr \infty} n/m(n) = \beta,
\eeq
for some $\beta > 0$ called the \emph{measurement ratio}.
We also assume that the transform matrix $\Abf \in \R^{m \x n}$
has i.i.d.\ Gaussian
components $a_{ij} \sim {\cal N}(0,1/m)$ and $\zbf = \Abf\xbf$.

We assume that for some order $k \geq 2$, the components
of initial condition $\xbfhat(0)$, $\taubf^x(0)$
and input vectors $\xbf$ and $\qbf$
empirically converge with bounded moments of order $2k-2$ as
\beq \label{eq:inputLim}
   \lim_{n \arr \infty} (\xhat_j(0),\tau^x_j(0),x_j,q_j) \stackrel{d}{=}
   (\Xhat_0,\taubar^x(0),X,Q),
\eeq
for some random variable triple $(\Xhat_0,X,Q)$ with
joint density $p_{\Xhat_0,X,Q}(\xhat_0,x,q)$ and constant
$\taubar^x(0)$.

To model the dependence of the system output vector
$\ybf$ on the transform output $\zbf$, we assume that, for each $n$,
there is a deterministic vector $\wbf \in W^m$ for
some set $W$, which we can think of as a noise vector.
Then, for every $i=1,\ldots,m$, we assume that
\beq \label{eq:hout}
    y_i = h(z_i,w_i)
\eeq
where $h$ is some function $h:\R \x W \arr Y$ and $Y$ is the output set.
Finally, we assume that the components of $\wbf$ empirically
converge with bounded moments of order $2k-2$
to some random variable $W$ with distribution $p_W(w)$.
We will write $p_{Y|Z}(y|z)$ for the conditional distribution
of $Y$ given $Z$ in this model.

We also need certain continuity assumptions.  Specifically, we assume that
the partial derivatives of the functions $\gin(t,\rhat,q,\tau^r)$
and $\gout(t,\phat,h(z,w),\tau^p)$ with respect
to $\rhat$, $\phat$ and $z$ exist almost everywhere
and are pseudo-Lipschitz continuous of order $k$.
This assumption implies that the functions $\gin(t,\rhat,q,\tau^r)$
and $\gout(t,\phat,y,\tau^p)$ are Lipschitz continuous in
$\rhat$ and $\phat$ respectively.

\subsection{State Evolution Equations} \label{sec:stateEvo}
Similar to \cite{BayatiM:11}, the key result here is that the
behavior of the GAMP algorithm is described by a set of state evolution (SE)
equations.  For the GAMP algorithm, the SE equations are easiest to describe
algorithmically as shown in 
in Algorithm \ref{algo:SE}.  To describe the equations, we introduce
two random vectors -- one corresponding to the input channel,
and the other corresponding to the output channel.
At the input channel, given $\alpha^r$, $\xi^r \in \R$, 
define the random variable triple
\beq \label{eq:thetaR}
    \theta^r(\xi^r,\alpha^r) := (X,Q,\Rhat),
\eeq
where the distribution $(X,Q)$ are derived from the 
density in \eqref{eq:inputLim} and $\Rhat$ is given
\beq \label{eq:RhatValpha}
    \Rhat = \alpha^r X + V, \ \ V \sim {\cal N}(0,\xi^r),
\eeq
with $V$ independent of $X$ and $Q$.
At the output channel,
given a covariance matrix $\Kbf^p \in \R^{2\x2}$, $\Kbf^p > 0$,
 define the four-dimensional random vector
\beq \label{eq:thetaP}
    \theta^p(\Kbf^p) := (Z,\Phat,W,Y), \ \ Y=h(Z,W),
\eeq
such that $W$ and $(Z,\Phat)$ are independent with distributions
\beq \label{eq:zphatCov}
    (Z,\Phat) \sim {\cal N}(0,\Kbf^p), \ \ W \sim p_W(w).
\eeq
Since the computations in \eqref{eq:outSE} and \eqref{eq:inSE}
are expectations over scalar random variables, they can be evaluated
numerically given functions $\gin(\cdot)$ and $\gout(\cdot)$.

\begin{algorithm}
\caption{GAMP State Evolution }
\label{algo:SE}
Given scalar estimation functions $\gin(\cdot)$ and $\gout(\cdot)$,
the density function $p_{\Xhat_0,X,Q}$ and initial value $\taubar^x(0)$ in \eqref{eq:inputLim},
the density function  $p_{Y|Z}$ at the output
and the measurement ratio $\beta = m/n$, compute the state evolution
parameters as follows:
\begin{enumerate}
\item \emph{Initialization:}  Set $t=0$, let $\taubar^x(0)$
be the initial value in \eqref{eq:inputLim} and set
\beq \label{eq:SEinit}
    \Kbf^x(0) = \cov(X,\Xhat(0)),
\eeq
meaning the covariance matrix of the random variables $(X,\Xhat(0))$
in the limit \eqref{eq:inputLim}.
\item \emph{Output node update:}  Compute
\begin{subequations} \label{eq:outSE}
\beqa
    \taubar^p(t) &=& \beta\taubar^x(t), \ \ \ \Kbf^p(t) = \beta \Kbf^x(t)
        \label{eq:outSEKp} \\
    \taubar^r(t) &=& -\Exp^{-1}\left[ \frac{\partial}{\partial \phat}
        \gout(t,\Phat,Y,\taubar^p(t)) \right] \label{eq:outSEmur} \\
    \xi^r(t) &=& (\taubar^r(t))^2\Exp\left[ \gout^2(t,\Phat,Y,\taubar^p(t)) \right]
    \label{eq:taurSE}
\eeqa
where the expectations are over the random variable triples
$\theta^p(\Kbf^p(t)) = (Z,\Phat,W,Y)$ in \eqref{eq:thetaP}.
Also, let
\beqa
    \lefteqn{ \alpha^r(t) = \taubar^r(t)} \nonumber \\
    &\x&\Exp\left[ \left. \frac{\partial}{\partial z}
        \gout(t,\Phat,h(z,W),\taubar^p(t)) \right|_{z=Z} \right]. \hspace{0.25in}
        \label{eq:alphaSE}
\eeqa
\end{subequations}
\item \emph{Input node update:}  Compute
\begin{subequations} \label{eq:inSE}
\beqa
    \taubar^x(\tp1) &=& \taubar^r(t)\Exp\left[ \frac{\partial}{\partial \rhat}
        \gin(t,Q,\Rhat,\taubar^r(t)) \right] \label{eq:inSEmux} \\
    \Kbf^x(\tp1) &=& \cov(X,\Xhat(\tp1)) \hspace{0.25in} \label{eq:inSEKx}
\eeqa
\end{subequations}
where the expectation is over the random variable triple
$\theta^r(\xi^r(t),\alpha^r(t)) = (X,Q,\Rhat)$ in \eqref{eq:thetaR},
and
\[
    \Xhat(\tp1) = \gin(t,Q,\Rhat,\taubar^r(t)).
\]
Increment $t=t+1$ and return to step 2.
\end{enumerate}
\end{algorithm}

\subsection{Main Result} \label{sec:mainResult}
To simplify the analysis, we consider the GAMP method with scalar variances,
Algorithm \ref{algo:GAMP-simp}.  Our simulations indicate no difference between
Algorithms \ref{algo:GAMP} and \ref{algo:GAMP-simp} at moderate block lengths.
We also assume the following minor modifications
\begin{itemize}
\item The variance $\tau^r(t)$ in \eqref{eq:inLinSimp}
is replaced by its deterministic limit $\taubar^r(t)$;
\item The variance $\tau^p(t)$ in \eqref{eq:outLinSimp}
is replaced by its deterministic limit $\taubar^p(t)$; and
\item The norm $\|\Abf\|^2_F$ is replaced by its expectation $\Exp\|\Abf\|^2_F=n$.
\end{itemize}
These simplifications are likely not significant, since we will show that,
under these simplifications,
for all $t$, $\tau^r(t) \arr \taubar^r(t)$ and $\tau^p(t) \arr \taubar^p(t)$.
Also, since $\Abf$ has i.i.d.\ components with variance $1/m$, $(1/n)\Exp\|\Abf\|^2 \arr 1$.
Moreover, it is possible that one could formally justify the simplification with
the arguments in \cite{KamRanFU:12-arXiv}.

\begin{claim}  \label{thm:gampSE}  Consider the GAMP
with scalar variances, Algorithm \ref{algo:GAMP-simp}, under the assumptions
in Section \ref{sec:assumptions} and with the above modifications.
Then, for any fixed iteration number $t$:
\begin{itemize}
\item[(a)]  Almost surely, we have the limits
\beq \label{eq:muLim}
    \lim_{n \arr \infty} \tau^r(t) = \taubar^r(t), \ \
    \lim_{n \arr \infty} \tau^p(t) = \taubar^p(t).
\eeq
\item[(b)]  The components of the vectors $\xbf$, $\qbf$, $\rbfhat$ and $\xbfhat$
empirically converge with bounded moments of order $k$ as
\beq \label{eq:thetarlim}
    \lim_{n \arr \infty} (x_j,q_j,\rhat_j(t))
    \stackrel{d}{=} \theta^r(\xi^r(t),\alpha^r(t)),
\eeq
where $\theta^r(\xi^r(t),\alpha^r(t)) = (X,Q,\Rhat)$
is the random variable triple
in \eqref{eq:thetaR} and $\xi^r(t)$ is from the SE equations.

\item[(c)] The components of the vectors $\zbf$, $\pbfhat$, $\wbf$ and
$\ybf$ empirically converge with bounded moments of order $k$ as
\beq \label{eq:thetaplim}
    \lim_{n \arr \infty} (z_i,\phat_i(t),w_i,y_i)
    \stackrel{d}{=} \theta^p(\Kbf^p(t)),
\eeq
where $\theta^p(\Kbf^p(t)) = (Z,\Phat,W,Y)$ is the random vector
in \eqref{eq:thetaP} and $\Kbf^p(t)$ is given by the SE equations.
\end{itemize}
\end{claim}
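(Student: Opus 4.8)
The plan is to recast the scalar-variance GAMP recursion of Algorithm~\ref{algo:GAMP-simp} into the canonical form of the iterative map analyzed by Bayati and Montanari in \cite{BayatiM:11}, and then to transfer their convergence result with attention only to the modifications forced by the arbitrary output channel $y_i = h(z_i,w_i)$. Concretely, I would first strip the iteration down to its two linear operations: the output step $\pbfhat(t) = \Abf\xbfhat(t) - \taubar^p(t)\sbfhat(\tm1)$ and the input step $\rbfhat(t) = \xbfhat(t) + \taubar^r(t)\Abf^T\sbfhat(t)$, where $\sbfhat(t)$ and $\xbfhat(\tp1)$ are obtained by applying the componentwise nonlinearities $\gout$ and $\gin$. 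This is exactly an alternating $\Abf$/$\Abf^T$ recursion with memory (Onsager) terms $-\taubar^p(t)\sbfhat(\tm1)$ and $+\xbfhat(t)$. I would verify that the scalar coefficients produced by the SE equations \eqref{eq:outSE} are precisely the averaged-derivative coefficients that \cite{BayatiM:11} requires in order to cancel the self-interaction terms; the defining relations \eqref{eq:tausSimp} and \eqref{eq:murSimp} are the finite-$n$ analogues of \eqref{eq:outSEmur} and \eqref{eq:inSEmux}, so this reduces to a law-of-large-numbers check on $\tau^s(t)$ and $\tau^x(t)$, which simultaneously yields part (a).

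Second, the heart of the argument is the conditioning technique of \cite{Bolthausen:09} as used in \cite{BayatiM:11}: conditioning on the $\sigma$-algebra $\mathfrak{S}_t$ generated by all iterates through step $t$, the i.i.d.\ Gaussian matrix $\Abf$ has a conditional law equal to its projection onto the span of the past plus an independent Gaussian in the orthogonal complement. I would run an induction on $t$ whose hypothesis is exactly statements (b) and (c) up to iteration $t$. Given the hypothesis, the conditioning lemma shows that, after subtracting the Onsager terms, each fresh linear image ($\Abf\xbfhat(t)$ or $\Abf^T\sbfhat(t)$) is, up to a vanishing correction, a Gaussian vector independent of the past with covariance dictated by the empirical second moments of $\xbfhat(t)$ and $\sbfhat(t)$. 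Passing this Gaussian characterization through the pseudo-Lipschitz functions $\gin$ and $\gout$, whose regularity is guaranteed by the continuity assumptions of Section~\ref{sec:assumptions}, and invoking the empirical-convergence bookkeeping of Section~\ref{sec:conv}, then yields the limiting laws $\theta^r$ and $\theta^p$ and extends the hypothesis to $t+1$.

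The one genuinely new ingredient --- and where I expect the main work to lie --- is the output side. In \cite{BayatiM:11} the measurement is additive Gaussian, so the relevant limiting object is a single noisy copy of $z$; here, because $h$ is arbitrary, one must track the \emph{joint} law of $(z_i,\phat_i)$ rather than a marginal. The conditioning argument delivers that $(Z,\Phat)$ is zero-mean bivariate Gaussian, and the covariance is exactly $\Kbf^p(t) = \beta\Kbf^x(t)$ because $z = \Abf\xbf$ and $\phat(t) \approx \Abf\xbfhat(t)$ are two correlated linear images sharing the same rows of $\Abf$; establishing that this $2\x2$ covariance is the correct limit, and that $y_i = h(z_i,w_i)$ is then a deterministic function of the Gaussian $Z$ and the independent limit $W$, is the step that generalizes their analysis. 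A direct consequence is the appearance of the new coefficient $\alpha^r(t)$ in \eqref{eq:alphaSE}: differentiating $\gout$ with respect to the true signal coordinate $z$ --- which is meaningful only because $\phat$ and $z$ are now tracked separately --- produces the multiplicative factor on $X$ in the effective input channel \eqref{eq:RhatValpha}, whereas in the AWGN case $\alpha^r=1$ collapses this to the simpler additive model.

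Finally, I would discharge the three scalar-variance simplifications listed before the claim. Replacing $\tau^r(t)$ and $\tau^p(t)$ by $\taubar^r(t)$ and $\taubar^p(t)$ is justified \emph{a posteriori} by part (a), and replacing $\|\Abf\|_F^2$ by its mean $n$ costs only an $o(1)$ term since $(1/n)\|\Abf\|_F^2 \arr 1$ almost surely for the $\mathcal{N}(0,1/m)$ ensemble; each of these enters the induction as a vanishing perturbation that does not affect the limiting expectations. The remaining effort is routine pseudo-Lipschitz accounting to show such perturbations are absorbed, exactly as in \cite{BayatiM:11}, and I would not reproduce those estimates in detail.
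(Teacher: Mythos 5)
Your proposal is correct and takes essentially the same route as the paper: the paper likewise reduces the modified scalar-variance recursion to a Bayati--Montanari-type iteration analyzed via Bolthausen's conditioning argument, with the decisive new ingredient being exactly what you identify --- tracking the joint Gaussian law of the pairs $(x_j,\xhat_j(t))$ and $(z_i,\phat_i(t))$ so that $\Kbf^p(t)=\beta\Kbf^x(t)$, with $\alpha^r(t)$ emerging from $\Exp[\partial_z \gout]$. The only difference is organizational: the paper packages the conditioning/induction step as a standalone vector-valued AMP claim (Claim~\ref{thm:SEGen}, applied with $n_b=2$, $n_d=1$ via the substitutions \eqref{eq:ampEquiv}) and then rigorously verifies that the modified Algorithm~\ref{algo:GAMP-simp} is an instance of it, whereas you inline that argument.
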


A proof of the claim is sketched in Appendix~\ref{sec:vecAMP}. 
We use the term ``claim" here since the proof relies on an extension of a
general result from \cite{BayatiM:11} to vector-valued quantities.  
Due to space considerations, we only sketch a proof of the extension.
The term ``claim", as opposed to ``theorem," 
is used to emphasize that the details have been omitted.

\begin{figure}
\begin{center}
  \includegraphics[width=3.2in,height=1.2in]{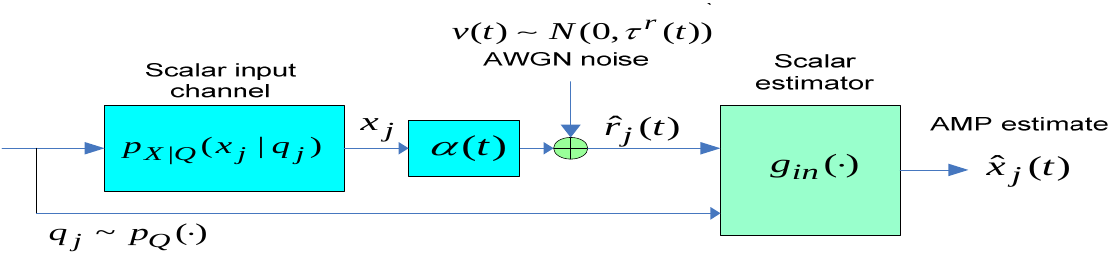}
\end{center}
\caption{Scalar equivalent model.  The joint distribution of any one
component $x_j$ and its estimate $\xhat_j(t)$ from the $t$th
iteration of the AMP algorithm is identical asymptotically
to an equivalent estimator with Gaussian noise.}
\label{fig:scaModAMP}
\end{figure}

A useful interpretation of Claim~\ref{thm:gampSE} is that it provides a
\emph{scalar equivalent model} for the behavior of the GAMP estimates.
The equivalent scalar model is illustrated in Fig.\ \ref{fig:scaModAMP}.
To understand this diagram, observe that part (b) of Claim~\ref{thm:gampSE}
shows that the joint empirical distribution of the components $(x_j,q_j,\rhat_j(t))$
converges to $\theta^r(\xi^r(t),\alpha^r(t)) = (X,Q,\Rhat)$
in \eqref{eq:thetaR}.  This distribution is identical
to $\rhat_j(t)$ being a scaled and noise-corrupted version of $x_j$.
Then, the estimate $\xhat_j(t) = \gin(t,\rhat_j(t),q_j,\tau^r_j(t))$
is a nonlinear scalar function of $\rhat_j(t)$ and $q_j$.
A similar interpretation can be drawn at the output nodes.

\section{Special Cases} \label{sec:special}
We now show that several previous results can be recovered from
the general SE equations in Section \ref{sec:stateEvo}
as special cases.

\subsection{AWGN Output Channel} \label{sec:awgnSE}

First consider the case where the  output function
\eqref{eq:hout} is given by additive noise model
\beq \label{eq:houtAdd}
    y_i = h(z_i,w_i) = z_i + w_i.
\eeq
Assume the components $w_i$ of the output noise vector
empirically converge to a random
variable $W$ with zero mean and variance $\tau^w > 0$.
The output noise $W$ need not be Gaussian. But,
suppose the GAMP algorithm uses an output
function $\gout(\cdot)$ in \eqref{eq:goutAwgn}
corresponding to an AWGN output for some
\emph{postulated} output variance $\tau^w_{\rm post}$ that may differ
from $\tau^w$.  This is the scenario analyzed in Bayati and Montanari's
paper \cite{BayatiM:11}.

Substituting \eqref{eq:goutAwgnDeriv} with the postulated noise
variance $\tau^w_{\rm post}$ into \eqref{eq:outSEmur} we get
\beq \label{eq:murAwgn}
    \taubar^r(t) = \tau^w_{\rm post} + \taubar^p(t) =
    \tau^w_{\rm post} + \beta\taubar^x(t).
\eeq
Also, using \eqref{eq:goutAwgn} and \eqref{eq:houtAdd},
we see that
\[
\frac{\partial}{\partial z} \gout(\phat,h(z,y),\taubar^p(t))
    = \frac{\partial}{\partial z} \frac{z+w-\phat}{\tau^w_{\rm post} + \taubar^p(t)}
    = \frac{1}{\taubar^r(t)}.
\]
Therefore, \eqref{eq:alphaSE} implies that  $\alpha^r(t) = 1$.

Now define
\begin{subequations} \label{eq:tauxp}
\beqa
    \xi^x(t) &:=& \left[1 \ -1\right]\Kbf^x(t)
    \left[\begin{array}{c}1 \\ -1 \end{array}\right] \\
    \xi^p(t) &:=& \beta \xi^x(t) = \left[1 \ -1\right]\Kbf^p(t)
    \left[\begin{array}{c}1 \\ -1 \end{array}\right].
\eeqa
\end{subequations}
With these definitions,
note that if $(Z,\Phat) \sim {\cal N}(0,\Kbf^p(t))$, then
\beq \label{eq:taupAwgn}
    \xi^p(t) = \Exp(Z-\Phat)^2.
\eeq
Also,  with $\Kbf^x(t)$ defined in \eqref{eq:inSEKx},
\beqa
    \lefteqn{ \xi^x(\tp1) = \Exp(X-\Xhat(t))^2 } \nonumber \\
    &=&
    \Exp\left[X-\gin(t,\Rhat,Q,\taubar^r(t))\right]^2.\label{eq:tauxAwgn}
\eeqa
Therefore,
\beqa
   \lefteqn{ \xi^r(\tp1) \stackrel{(a)}{=} (\taubar^r(\tp1))^2
    \frac{\Exp(Y-\Phat)^2}{(\tau^w_{\rm post} + \taubar^p(\tp1))^2} }\nonumber \\
    &\stackrel{(b)}{=}& \Exp(Y-\Phat)^2 \nonumber \\
    &\stackrel{(c)}{=}& \Exp(W+Z-\Phat)^2 = \tau^w + \Exp(Z-P)^2 \nonumber \\
    &\stackrel{(d)}{=}& \tau^w + \xi_p(\tp1) \nonumber \\
    &\stackrel{(e)}{=}& \tau^w +
        \beta\Exp\left[X-\gin(t,\Rhat,Q,\taubar^r(t))\right]^2,
        \label{eq:taurAwgn}
\eeqa
where (a) follows from substituting \eqref{eq:goutAwgn}
into \eqref{eq:taurSE};
(b) follows from \eqref{eq:murAwgn};
(c) follows from the output channel model assumption \eqref{eq:houtAdd};
(d) follows from \eqref{eq:taupAwgn} and
(e) follows from \eqref{eq:tauxp} and \eqref{eq:tauxAwgn}.
The expectation in \eqref{eq:taurAwgn} is over
$\theta^r(\xi^r(t),\alpha^r(t)) = (X,Q,\Rhat)$
in \eqref{eq:thetaR}.
Since $\alpha^r(t) = 1$, the expectation is over random variables
$(X,Q,\Rhat)$ where $X \sim p_{X|Q}(x|q)$, $Q \sim p_Q(q)$ and
\beq \label{eq:RhatVtau}
    \Rhat = X + V, \ \ \ V \sim {\cal N}(0,\xi^r(t)),
\eeq
and $V$ is independent of $X$ and $Q$.
With this expectation, \eqref{eq:taurAwgn} is precisely the SE equation
in \cite{BayatiM:11} for the
AMP algorithm with a general input function $\gin(\cdot)$.

\subsection{Sum-Product GAMP with AWGN Output Channels}

The SE equation \eqref{eq:taurAwgn} applies to a general
input function $\gin(\cdot)$.
Now suppose that the estimator uses an input function
$\gin(\cdot)$ based on the sum-product estimator \eqref{eq:ginMmse}.
To account for possible mismatches with the estimator, suppose that the
sum-product GAMP algorithm is run with some
postulated distribution $\ppost_{X|Q}(\cdot)$ that may differ from
the true distribution $p_{X|Q}(\cdot)$.
Let $\xhat^{\rm post}_{\rm mmse}(\rhat,q,\tau^r)$ be the corresponding
scalar MMSE estimator
\[
    \xhat^{\rm post}_{\rm mmse}(\rhat,q,\tau^r) :=
        \Exp\left( X|\Rhat=\rhat,Q=q\right),
\]
where the expectation is over the random variable
\beq \label{eq:RhatVmu}
    \Rhat = X + V, \ \ \ V \sim {\cal N}(0,\tau^r),
\eeq
where $X$ given $Q$
follows the postulated distribution $\ppost_{X|Q}(x|q)$ and
$V$ is independent of $X$ and $Q$.
Let ${\cal E}^{\rm post}_{\rm mmse}(\rhat,q,\tau^r)$ denote the corresponding
postulated variance.
Then, \eqref{eq:ginMmse} and \eqref{eq:ginMmseDeriv} can be re-written
\begin{subequations}
\beq \label{eq:ginMmseAwgn}
    \gin(\rhat,q,\tau^r) = \xhat^{\rm post}_{\rm mmse}(\rhat,q,\tau^r),
\eeq
and
\beq \label{eq:ginDerivMmseAwgn}
    \tau^r \frac{\partial}{\partial \rhat} \gin(\rhat,q,\tau^r)
        = {\cal E}^{\rm post}_{\rm mmse}(\rhat,q,\tau^r).
\eeq
\end{subequations}
With this notation,
\beqa
    \lefteqn{ \taubar^r(\tp1) \stackrel{(a)}{=}
        \tau^w_{\rm post} + \beta\taubar^x(\tp1) } \nonumber \\
    &\stackrel{(b)}{=}& \tau^w_{\rm post} + \beta
        \Exp\left[ {\cal E}^{\rm post}_{\rm mmse}(\rhat,q,\taubar^r(t)) \right]
        \label{eq:mubarAwgnMmse}
\eeqa
where (a) follows from \eqref{eq:murAwgn} and
(b) follows from substituting \eqref{eq:ginDerivMmseAwgn}
into \eqref{eq:inSEmux}.
Also, substituting \eqref{eq:ginMmseAwgn} into
\eqref{eq:taurAwgn}, we obtain
\beq \label{eq:taurAwgnMmse}
    \xi^r(\tp1) = \tau^w
    + \beta\Exp\left[X-\xhat^{\rm post}_{\rm mmse}(\Rhat,Q,\taubar^r(t))\right]^2.
\eeq

The fixed point of the updates \eqref{eq:mubarAwgnMmse}
and \eqref{eq:taurAwgnMmse} are precisely the equations given by Guo and Verd{\'u}
in their replica analysis of MMSE estimation with AWGN outputs
and non-Gaussian priors \cite{GuoV:05}.
Their work uses the replica method from statistical physics
to argue the following:  Let $\xbfhat^{\rm post}$ be the exact
MMSE estimate of the vector $\xbf$ based on the postulated distributions
$\ppost_{X|Q}$ and postulated noise variance $\tau^w_{\rm post}$.
By ``exact," we mean the actual MMSE estimate for that postulated
prior -- not the GAMP or any other approximation.
Then, in the limit of large i.i.d.\ random matrices $\Abf$,
it is claimed that the joint distribution of
$(x_j,q_j)$ for some component $j$ and the corresponding
exact MMSE estimate $\xhat^{\rm post}_j$, follows the same scalar
model as Fig.\ \ref{fig:scaModAMP} for the GAMP algorithm.
Moreover, the effective noise variance of the exact MMSE estimator
is a fixed point of the updates \eqref{eq:mubarAwgnMmse}
and \eqref{eq:taurAwgnMmse}.
This result of Guo and Ver{\'u}
generalized several earlier results including analyses of linear estimators
in \cite{VerduS:99} and \cite{TseH:99} based on
random matrix theory and BPSK estimation in \cite{Tanaka:02} based on
the replica method.
As discussed in \cite{BayatiM:11},
the SE analysis of the AMP algorithm
provides some rigorous basis for the replica analysis, along with
an algorithm to achieve the performance.

Also, when the postulated distribution matches the true distribution,
the general equations \eqref{eq:mubarAwgnMmse}
and \eqref{eq:taurAwgnMmse}
reduce to the SE equations for Gaussian approximated BP on large sparse matrices
derived originally by Boutros and Caire \cite{BoutrosC:02} along
with other works including \cite{GuoW:06} and \cite{MontanariT:06}.
In this regard, the analysis here can be seen as an extension of that
work to handle dense matrices and the case of possibly mismatched distributions.

\subsection{Max-Sum GAMP with AWGN Output Channels}

Now suppose that the estimator uses an input function
$\gin(\cdot)$ based on the MAP estimator \eqref{eq:ginMap} again
with a postulated distribution $\ppost_{X|Q}(\cdot)$ that may differ from
the true distribution $p_{X|Q}(\cdot)$.
Let $\xhat^{\rm post}_{\rm map}(\rhat,q,\tau^r)$ be the corresponding
scalar MAP estimator
\beq \label{eq:xhatMapPost}
    \xhat^{\rm post}_{\rm map}(\rhat,q,\tau^r) :=
        \argmax_{x \in \R} \left[ \fin(x,q) - \frac{1}{2\tau^r}
        (\rhat -x)^2 \right],
\eeq
where
\[
    \fin(x,q) = \log \ppost_{X|Q}(x|q).
\]
With this definition, $\xhat^{\rm post}_{\rm map}$ is
the scalar MAP estimate of the random variable $X$ from the
observations $\Rhat=\rhat$ and $Q=q$ in the model \eqref{eq:RhatVmu}.
Also, following \eqref{eq:ginMapDeriv} define
\[
    {\cal E}^{\rm post}_{\rm map}(\rhat,q,\tau^r) :=
    \frac{\tau^r}{1-\fin''(\xhat,q)\tau^r},
\]
where $\xhat = \xhat^{\rm post}_{\rm map}$.
Then, \eqref{eq:ginMap} and \eqref{eq:ginMapDeriv} can be re-written as
\begin{subequations}
\beq \label{eq:ginMapAwgn}
    \gin(\rhat,q,\tau^r) = \xhat^{\rm post}_{\rm map}(\rhat,q,\tau^r),
\eeq
and
\beq \label{eq:ginDerivMapAwgn}
    \tau^r \frac{\partial}{\partial \rhat} \gin(\rhat,q,\tau^r)
        = {\cal E}^{\rm post}_{\rm map}(\rhat,q,\tau^r).
\eeq
\end{subequations}
Following similar computations to the previous subsection,
we obtain the SE equations
\begin{subequations}
\beqa
    \taubar^r(\tp1) &=& \tau^w_{\rm post} + \beta
        \Exp\left[ {\cal E}^{\rm post}_{\rm map}(\Rhat,Q,\taubar^r(t)) \right] \\
    \xi^r(\tp1) &=& \tau^w
    + \beta\Exp\left[X-\xhat^{\rm post}_{\rm map}(\Rhat,Q,\taubar^r(t))\right]^2.
\eeqa
\end{subequations}

Similar to the MMSE case,
the fixed point of these equations precisely agree with the
equations for replica analysis of MAP estimation with AWGN output noise
given in \cite{RanganFG:09arXiv} and related works in \cite{KabashimaWT:09arXiv}.
Thus, again, the GAMP framework provides a rigorous
justification of the replica results along with an algorithm
to achieve the predictions by the replica method.

\subsection{General Output Channels with Matched Distributions}
\label{sec:SEsumProdMatch}

Finally, let us return to the case of general (possibly non-AWGN) output channels
and consider the sum-product GAMP algorithm with the
estimations functions in Section \ref{sec:bpApproxMmse}.
In this case, we will assume that the postulated distributions for $p_{X|Q}$
and $p_{Y|Z}$ match the true distributions.
Guo and Wang in \cite{GuoW:07}
derived SE equations for standard BP in this case for
large sparse random matrices.
The work \cite{Rangan:10arXiv} derived identical equations for a relaxed version of BP.
We will see that the same SE equations will follow as a special case
of the more general SE equations above.

To prove this, we will show by induction that, for all $t$, $\Kbf^x(t)$
has the form
\beq \label{eq:KxMatch}
    \Kbf^x(t) = \left[\begin{array}{cc}
        \tau^{x0} & \tau^{x0} - \taubar^{x}(t) \\
        \tau^{x0} - \taubar^{x}(t) & \tau^{x0} - \taubar^{x}(t)
        \end{array} \right],
\eeq
where $\tau^{x0}$ is the variance of $X$.
For $t=0$, recall that for MMSE estimation, $\taubar^x(0) = \tau^{x0}$
and $\xhat_j(0)=0$ (the mean of $X$) for all $j$.
Therefore, \eqref{eq:SEinit} shows that \eqref{eq:KxMatch}
holds for $t=0$.

Now suppose that \eqref{eq:KxMatch} holds for some $t$.
Then, \eqref{eq:outSEKp} shows that $\Kbf^p(t)$ has the form
\beq \label{eq:KpMatch}
    \Kbf^p(t) = \left[\begin{array}{cc}
        \tau^{z0} & \tau^{z0} - \taubar^{p}(t) \\
        \tau^{z0} - \taubar^{p}(t) & \tau^{z0} - \taubar^{p}(t)
        \end{array} \right],
\eeq
where $\tau^{z0} = \beta \tau^{x0}$.  Now, if
$(Z,\Phat) \sim {\cal N}(0,\Kbf^p(t))$ with $\Kbf^p(t)$ given by \eqref{eq:KpMatch},
then the conditional distribution of $Z$ given $\Phat$ is
$Z \sim {\cal N}(\Phat, \taubar^p(t))$.
Therefore,  $\gout(\cdot)$ in \eqref{eq:goutMmseScore}
can be interpreted as
\beq \label{eq:goutMmseMatch}
    \gout(\phat,y,\tau^p) := \frac{\partial}{\partial \phat} \log
    p_{Y|\Phat}(y|\Phat=\phat),
\eeq
where $p_{Y|\Phat}(\cdot|\cdot)$ is the likelihood of $Y$
given $\Phat$ for the random variables $\theta^p(\Kbf^p(t))$
in \eqref{eq:thetaP}.

Now let $F(\taubar^p(t))$ be the \emph{Fisher information}
\beq \label{eq:FisherInfo}
    F(\taubar^p(t)) := \Exp\left[ \frac{\partial}{\partial \phat} \log
        p_{Y|\Phat}(Y|\Phat) \right]^2,
\eeq
where the expectation is also over $\theta^p(\Kbf^p(t))$
with $\Kbf^p(t)$ given by \eqref{eq:KpMatch}.
A standard property of the Fisher information is that
\beq \label{eq:FisherInfoB}
    F(\taubar^p(t)) = \Exp\left[ \frac{\partial^2}{\partial \phat^2} \log
    p_{Y|\Phat}(Y|\Phat) \right].
\eeq
Substituting \eqref{eq:FisherInfo} and \eqref{eq:FisherInfoB} into \eqref{eq:outSE}
we see that
\beq \label{eq:outSEMatch}
    \taubar^r(t) = \xi^r(t) = \frac{1}{F(\taubar^p(t))}.
\eeq

We will show in Appendix \ref{sec:alphaMatch} that
\beqa
    \lefteqn{ \Exp\left[ \frac{\partial}{\partial \phat}
    \gout(t,h(Z,W),\Phat,\taubar^p(t)) \right]} \nonumber \\
    &=&
     \Exp\left[ \frac{\partial}{\partial z}
    \gout(t,h(Z,W),\Phat,\taubar^p(t)) \right].\label{eq:dpzMatch}
\eeqa
It then follows from \eqref{eq:alphaSE} that
\beq \label{eq:alphaMatch}
    \alpha^r(t) = 1.
\eeq

Now consider the input update \eqref{eq:inSE}.
Since $\alpha^r(t)=1$ and $\taubar^r(t)=\xi^r(t)$, the expectation in
\eqref{eq:ginMmse} agrees with the expectations in \eqref{eq:inSE}.
Substituting \eqref{eq:ginMmseDeriv} into \eqref{eq:inSEmux},
we obtain
\beq \label{eq:muxMatch}
    \taubar^x(\tp1) = \Exp\left[ \var(X|Q,\Rhat) \right],
\eeq
where the expectation is over $\theta^r(\xi^r(t),\alpha^r(t))$ in
\eqref{eq:thetaR} with $\alpha = 1$.
Also, if $\Xhat(\tp1) = \Exp(X|Q,\Rhat)$ then
\beqan
    \lefteqn{ \Exp(X-\Xhat(\tp1))^2 = \Exp\left[ \var(X|Q,\Rhat) \right] = \taubar^x(\tp1) } \\
    & & \Exp\left[ \Xhat(\tp1)(X-\Xhat(\tp1)) \right] = 0.
    \hspace{1cm}
\eeqan
These relations, along with the definition $\tau^{x0} = \Exp(X^2)$,
show that the covariance $\Kbf^x(\tp1)$ in \eqref{eq:inSEKx} is of the form
\eqref{eq:KxMatch}.
This completes the induction argument to show that \eqref{eq:KxMatch} and
the other equations above hold for all $t$.

The updates \eqref{eq:outSEMatch} and \eqref{eq:muxMatch} are
precisely the SE equations given in \cite{GuoW:07} and \cite{Rangan:10arXiv}
for sum-product BP estimation with matched distributions on large sparse
random matrices.
The current work thus shows that the identical SE equations
hold for dense matrices $\Abf$.
In addition, the results here extend the earlier SE equations
by considering the cases where the distributions postulated by the estimator
may not be identical to the true distribution.

\section{Nonlinear Compressed Sensing Example} \label{sec:sparseNL}

\begin{figure}
	\begin{minipage}[b]{1.0\linewidth}
		\centering
		\includegraphics[width=7.5cm]{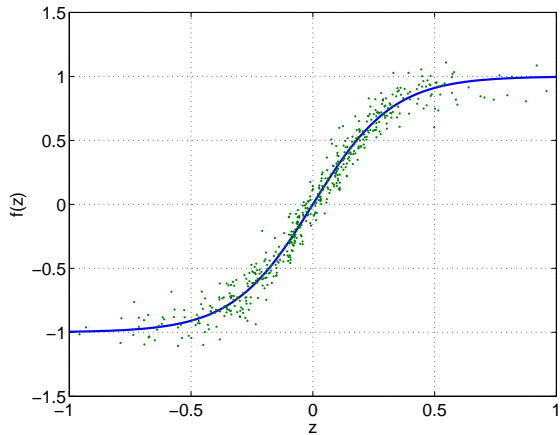}
	\end{minipage}
	\caption{Sigmoidal output function for the nonlinear compressed sensing example,
    along with a scatter plot of the noisy points $(z_i,y_i)$ in one typical realization.
    }
 	\label{fig:sparseNLOut}
\end{figure}

\begin{figure}
	\begin{minipage}[b]{1.0\linewidth}
		\centering
		\includegraphics[width=7.5cm]{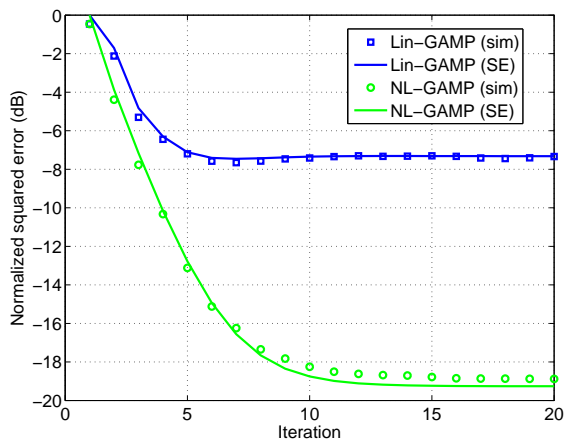}
	\end{minipage}
	\caption{Normalized squared error of the of the sparse vector estimates
   in the nonlinear compressed sensing
    problem.  Plotted are the median squared errors for Monte Carlo trials 
    of the GAMP algorithm
    using both a linear approximation of the output (Lin-GAMP) and the true nonlinear output 
    (NL-GAMP).  Also plotted are state evolution (SE) predictions of the asymptotic performance.
    }
 	\label{fig:sparseNLSim}
\end{figure}

To validate the GAMP algorithm and its SE analysis, we considered the following
simple example of a nonlinear compressed sensing problem.  
The distribution on the components of the input, $\xbf$, is taken as 
a Bernoulli-Gaussian:
\beq \label{eq:pxSim}
    x_j \sim \left\{ \begin{array}{ll}
        0 & \mbox{prob} = 1-\rho, \\
        {\mathcal N}(0,1) & \mbox{prob} = \rho,
        \end{array} \right.
\eeq
where, in the experiments below, we set $\rho=0.1$.
The distribution \eqref{eq:pxSim}
provides a simple model of a sparse vector with $\rho$ being the fraction of non-zero
components.  To match the state evolution (SE) theory, 
the linear transform $\Abf \in \R^{m\x n}$ is generated as an i.i.d.\ Gaussian matrix with
$A_{ij} \sim {\mathcal N}(0,1/n)$.
For the output channel, we consider nonlinear measurements of the form
\beq \label{eq:ysim}
    y_i = f(z_i) + w_i, \qquad f(z) = 1/(1+\exp(-az)),
\eeq
and $w_i \sim {\mathcal N}(0,\tau^w)$.
The function $f(z)$ in \eqref{eq:ysim} is a sigmoidal function that arises
commonly in neural networks and provides a good test of the ability of GAMP
to handle nonlinear measurements.
The output variance is taken as $\tau^w = 0.01$, or 20 dB below the full range of the
output of $f(z)$.  The scale factor  in \eqref{eq:ysim} is set to $a=6.1$.
The output function $f(z)$ along with a scatter plot of the noisy
points $(z_i,y_i)$ is shown in Fig.~\ref{fig:sparseNLOut}.
Since we are recovering a sparse vector $\xbf$ from a linear transform followed
by noisy, nonlinear measurements, we can consider the problem a nonlinear
compressed sensing problem.
We set $(m,n)=(500,1000)$ so that the sparse vector is undersampled by a factor of two.

For this problem, we consider two estimators, both based on the GAMP algorithm:
\begin{itemize}
\item \emph{NL-GAMP:}  The sum-product GAMP algorithm matched to the true input
and output distributions.
\item \emph{Lin-GAMP:} The sum-product GAMP algorithm matched to the true input
distribution, but the estimator assumes a linear channel of the form
\[
    y_i = f'(0)z_i + w_i, \qquad w_i \sim {\mathcal N}(0,\tau^w).
\]
\end{itemize}
Since the estimator Lin-GAMP assumes an AWGN output channel, it is equivalent to the 
Bayesian-AMP estimator of \cite{DonohoMM:10-ITW1,DonohoMM:10-ITW2}.
For both algorithms, we use the full algorithm, Algorithm \ref{algo:GAMP}.
Other simulations, not reported here, show virtually no difference to the 
simplified algorithm, Algorithm \ref{algo:GAMP-simp}.

Fig.~\ref{fig:sparseNLSim} plots the normalized squared error
for both algorithms over 100 Monte Carlo simulations.
In each Monte Carlo simulation we computed the normalized squared error (NSE),
\[
    \mbox{NSE} = 10\log_{10}(\Exp\|\xbf-\xbfhat\|^2/\Exp\|\xbf\|^2),
\]
where the expectation is over the components of the vectors.
Fig.~\ref{fig:sparseNLSim} then plots the median normalized
squared error over the 100 trials.
The median is used since there is a significant variation from  between trials,
and the median removes outliers.

The first point to notice is that there is a significant gain from
incorporating the nonlinearity in the output relative to using
simple linear approximations.  Indeed, NL-GAMP shows an asymptotic gain
of over 11 dB relative to the Lin-GAMP method that approximates the output as a
linear function.  Secondly, we see that both algorithms converge very quickly,
within approximately 10 to 15 iterations.

Finally, we see that for both methods, the state evolution (SE) analysis predicts
the per iteration performance extremely well.  For Lin-GAMP, the SE predicts the median
SE within 0.2 dB and for NL-GAMP, the prediction is within 0.4 dB.
Although the SE analysis theoretically only provides predictions for a modified version of 
the simplified Algorithm \ref{algo:GAMP-simp}, we see that the prediction holds,
at least in this simulation, for the full algorithm, Algorithm \ref{algo:GAMP}.

Also note that since NL-GAMP is the sum-product GAMP algorithm where the postulated
and true distributions are matched,
the SE equations fall under the special case given in Section \ref{sec:SEsumProdMatch}.
The SE equations in this special case 
were derived for sparse matrices in \cite{GuoW:07} and \cite{Rangan:10arXiv}.
However, since the Lin-GAMP is applied to a nonlinear output where the true and postulated
distributions are not matched, there are no previous SE analyses that can predict the
performance of the method.
Interestingly, the squared error for Lin-GAMP does not monotonically decrease;
There is a slight increase in squared error after iteration 7 
and the increase is actually predicted by the SE analysis.

The simulation thus demonstrates that the GAMP method can provide a tractable approach
to a difficult nonlinear compressed sensing problem with significant gains over 
AMP methods based on na\"ive linear approximations.  Moreover, the SE analysis
can precisely predict the performance of the method and quantify the value of
incorporating the nonlinearity.
All code for the simulation is available in the sourceforge GAMP repository
\cite{GAMPSourceForge}.

\section*{Conclusions and Future Work}

We have considered a general linear mixing estimation problem of 
estimating an i.i.d.\ (possibly non-Gaussian) vector observed through
a linear transform followed by a componentwise (possibly random and nonlinear)
measurements.  
The formulation is extremely general and
encompasses many problems in compressed sensing, classification,
learning and signal processing.
We have presented a novel algorithm, called GAMP, that unifies several earlier
methods to realize computationally efficient and systematic
approximations of max-sum and sum-product loopy BP.
Our main theoretical contribution is an extension of Bayati and
Montanari's state evolution (SE) analysis in \cite{BayatiM:11} 
to precisely describes the asymptotic behavior of the GAMP algorithm for
large Gaussian i.i.d.\ matrices.  
The GAMP algorithm thus provides a computationally simple
method that can apply to a large class of estimation problems with provable
exact performance characterizations.  As a result, we believe that the
GAMP algorithm can have wide ranging applicability.
Indeed, applications of the GAMP methodology have been used in 
nonlinear wireless scheduling problems \cite{RanganM:12},
compressed sensing with quantization \cite{KamilovGR:11arXiv},
and neural estimation \cite{FletcherRVB:11} to name a few.

Nevertheless, there are a large number of open issues for future research,
some of which have been begun consideration since the original publication
of this paper in \cite{Rangan:10arXiv-GAMP}.
\paragraph*{Non-Gaussian matrices}
Our SE analysis currently only applies to Gaussian i.i.d.\ matrices and a
significant open question is whether the analyses can be extended to more 
general matrices.  Recently, \cite{CaireSTV:11} extended the replica analysis in
\cite{Tanaka:02,GuoV:05} to obtain predictions of
the behavior of optimal estimators for linear mixing problems with general free matrices.
One avenue of future research is to see if a AMP-like algorithm can be constructed
that can provably obtain the performance predicted for these matrices.

\paragraph*{Learning Distributions}  One of the main limitations of the GAMP 
method is that both the sum-product and max-sum variants require that the true distributions
on the input an output channels are known.
When the distributions are not known, one approach is to find a
minimax estimator over a class of distributions, as was performed for classes of
sparse priors in \cite{DonJohMM:11}.   A second approach is to attempt to adaptively 
estimate the distributions assuming some parametric model.  
One of the most promising methods in this regard is to combine GAMP with expectation-maximization (EM) estimation as discussed in 
\cite{KrzMSSZ:11-arxiv,KrzMSSZ:12-arxiv,VilaSch:11,VilaSch:12,KamRanFU:12-arXiv}.

\paragraph*{Connections to graphical models}  Since the GAMP method is derived from
graphical model techniques, it can be incorporated as a component of a larger
graphical model where the approximate message passing is combined with standard
belief propagation updates.  Such hybrid approaches have been explored in a number 
of works including \cite{Schniter:10-CISS,ZinielPS:10,SomPS:10,Schniter:11,RanganFGS:12-ISIT}
for extending compressed sensing problems with other statistical dependencies 
between components or estimation of additional latent variables.

\paragraph*{Optimality}
A significant outstanding theoretical issue is to obtain a performance lower bound.
An appealing feature of the SE analysis of sparse random matrices
such as \cite{BoutrosC:02,GuoW:06,GuoW:07,Rangan:10arXiv} as well
as well as the replica analysis in \cite{Tanaka:02,GuoV:05,RanganFG:12-IT}
is that they provide lower bounds on the performance of the optimal estimator.
These lower bounds are also described by SE equations.  Then, using a sandwiching
argument -- a technique used commonly in the study of LDPC codes
\cite{RichardsonU:98} -- shows that
when fixed points of the SE equations are unique, BP is optimal.
Finding a similar lower bound for the
GAMP algorithm is a possible avenue of future research.

\section*{Acknowledgments}
The author would like to thank a number of people for their careful reading
of earlier versions of this paper as well as their contributions to 
the open-source GAMP software library \cite{GAMPSourceForge}: Alyson Fletcher,
Dongning Guo, Vivek K.\ Goyal, Andrea Montanari, Jason Parker, Phil Schniter, Amin Shokrollahi,
Lav Varshney, Jeremy Vila and Justin Ziniel.

\appendices

\section{Vector-Valued AMP} \label{sec:vecAMP}

\subsection{SE Equations for Vector-Valued AMP}

The analysis of the GAMP requires a vector-valued
version of the recursion analyzed by Bayati and Montanari.
Fix dimensions $n_d$ and $n_b$, and let $\Theta^u$ and $\Theta^v$ be
any two sets.
Let $\Gin(t,\dbf,\theta^u) \in \R^{n_b}$ and
$\Gout(t,\bbf,\theta^v) \in \R^{n_d}$ be vector-valued
functions over the arguments $t=0,1,2,\ldots$, $\bbf \in \R^{n_b}$,
$\dbf \in \R^{n_d}$, $\theta^u \in \Theta^u$ and $\theta \in \Theta^v$.
Assume $\Gin(t,\dbf,\theta^u)$ and $\Gout(t,\bbf,\theta^v)$
are Lipschitz continuous in $\dbf$ and $\bbf$, respectively.
Let $\Abf \in \R^{m \x n}$ and generate a sequence of
vectors $\bbf_i(t)$ and $\dbf_j(t)$ by the iterations
\begin{subequations} \label{eq:genAMPRec}
\beqa
    \bbf_i(t) &=& \sum_j a_{ij} \ubf_j(t) - \lambdabf(t)\vbf_i(\tm1),
        \label{eq:AMPGenb} \\
    \dbf_j(t) &=& \sum_i a_{ij}\vbf_i(t) - \xibf(t)\ubf_j(t) \label{eq:AMPGend}
\eeqa
\end{subequations}
where
\begin{subequations} \label{eq:uvAMPGen}
\beqa
    \ubf_j(\tp1) &=& \Gin(t,\dbf_j(t),\theta^u_j), \\
    \vbf_i(t) &=& \Gout(t,\bbf_i(t),\theta^v_i),
\eeqa
\end{subequations}
and
\begin{subequations} \label{eq:genAMPDeriv}
\beqa
    \xibf(t) &=& \frac{1}{m} \sum_{i=1}^m  \frac{\partial}{\partial \bbf}
    \Gout(t,\bbf_i(t),\theta^v_i)\\
    \lambdabf(\tp1) &=& \frac{1}{m} \sum_{j=1}^n
    \frac{\partial}{\partial \dbf}\Gin(t,\dbf_j(t),\theta^v_j).
        \label{eq:genAmpLam}
\eeqa
\end{subequations}
Here, we interpret the derivatives as matrices $\xibf(t)  \in \R^{n_d \x n_b}$
and $\lambdabf(\tp1) \in \R^{n_b \x n_d}$.
The recursion is initialized with $t=0$, $\vbf_i(\tm1) = 0$ and
some values for $\ubf_j(0)$.

Now similar to Section \ref{sec:asymAnal}, consider
a sequence of random realizations of the
parameters indexed by the input dimension $n$.  For each $n$, we
assume that the output dimension $m=m(n)$
is deterministic and scales linearly as in \eqref{eq:betaDef}
for some $\beta \geq 0$.
Assume that the transform matrix $\Abf$ has i.i.d.\ Gaussian
components $a_{ij} \sim {\cal N}(0,1/m)$.
Also assume that the following components converge empirically
with bounded moments of order $2k-2$ with the limits
\begin{subequations} \label{eq:thetaUVLim}
\beqa
    \lim_{n \arr \infty} \theta_i^v &\stackrel{d}{=}& \theta^v, \ \ \
    \lim_{n \arr \infty} \theta_j^u \stackrel{d}{=} \theta^u, \\
    \lim_{n \arr \infty} \ubf_i(0) &\stackrel{d}{=}& \Ubf_0,
\eeqa
\end{subequations}
for random variables $\theta^v$, $\theta^u$ and $U_0$.
We also assume $\Ubf_0$ is zero mean.

Under these assumptions, we will argue that the SE equations for the vector-valued
recursion are given by:
\begin{subequations} \label{eq:SEGen}
\beqa
    \lefteqn{ \Kbf^d(t) } \nonumber \\
    &:=& \Exp\left[ \Gout(t, \Bbf(t), \theta^v)
        \Gout(t, \Bbf(t), \theta^v)^T \right] \label{eq:Kd} \\
    \lefteqn{ \Kbf^b(\tp1) } \nonumber \\
   &:=& \beta \Exp\left[ \Gin(t, \Dbf(t), \theta^u)
        \Gin(t, \Dbf(t), \theta^u)^T \right] \label{eq:Kb}
\eeqa
\end{subequations}
where the expectations are over the random variables $\theta^u$ and $\theta^v$
in the limits \eqref{eq:thetaUVLim}, and $\Bbf(t)$ and $\Dbf(t)$ are Gaussian
vectors
\beq \label{eq:BDGauss}
    \Bbf(t) \sim {\cal N}(0,\Kbf^b(t)), \ \
    \Dbf(t) \sim {\cal N}(0,\Kbf^d(t))
\eeq
independent of $\theta^u$ and $\theta^v$.
The SE equations \eqref{eq:SEGen} are initialized with
\beq \label{eq:SEGenInit}
    \Kbf^b(0) := \beta \Exp\left[ \Ubf_0\Ubf_0^T \right].
\eeq

The matrices in \eqref{eq:genAMPDeriv} are derived \emph{empirically} from
the variables $\bbf(t)$ and $\dbf(t)$.  We will also be interested
in the case where the algorithm directly uses the \emph{expected} values
\begin{subequations} \label{eq:genAMPDerivExp}
\beqa
    \xibf(t) &=& \Exp\left[ \frac{\partial}{\partial \bbf}
    \Gout(t,\Bbf(t),\theta^v) \right] \\
    \lambdabf(\tp1) &=& \Exp\left[
    \frac{\partial}{\partial \dbf}\Gin(t,\Dbf(t),\theta^v) \right],
\eeqa
\end{subequations}
where, again, the expectations are over
random variables $\theta^u$ and $\theta^v$
in the limits \eqref{eq:thetaUVLim}, and $\Bbf(t)$ and $\Dbf(t)$ are Gaussian
vectors in \eqref{eq:BDGauss} independent of $\theta^u$ and $\theta^v$.

\begin{claim} \label{thm:SEGen}
Consider the recursion in \eqref{eq:genAMPRec} and \eqref{eq:uvAMPGen} with
either the empirical update \eqref{eq:genAMPDeriv} or expected
update \eqref{eq:genAMPDerivExp}.
Then, under the above assumptions, for any fixed
iteration number $t$, the variables in the recursions with either the
empirical or expected update converge empirically as
\begin{subequations} \label{eq:vecAMPLim}
\beqa
    \lim_{n \arr \infty} (\dbf_j(t),\theta^u_j)
        &\stackrel{d}{=}& (\Dbf(t),\theta^u) \\
    \lim_{n \arr \infty} (\bbf_i(t),\theta^v_i) &\stackrel{d}{=}&
        (\Bbf(t),\theta^v),
\eeqa
\end{subequations}
where are $\theta^u$ and $\theta^v$ are the random variables
in the limits \eqref{eq:thetaUVLim}, and $\Bbf(t)$ and $\Dbf(t)$ are Gaussians
\eqref{eq:BDGauss} independent of $\theta^u$ and $\theta^v$.
\end{claim}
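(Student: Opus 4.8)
The plan is to follow the conditioning-and-induction strategy of Bayati and Montanari \cite{BayatiM:11}, itself built on Bolthausen's technique \cite{Bolthausen:09}, adapting every scalar object in their argument to its vector- or matrix-valued counterpart. The argument proceeds by induction on the iteration index $t$, the inductive hypothesis being exactly the empirical-convergence statements \eqref{eq:vecAMPLim} together with the joint convergence of all cross Gram matrices formed from the history $\{\bbf(0),\ldots,\bbf(t)\}$ and $\{\dbf(0),\ldots,\dbf(t\!-\!1)\}$ to their Gaussian-limit expectations. Since $n_b$ and $n_d$ are held fixed while $n\arr\infty$, the block structure does not interfere with the large-$n$ limits; the only genuine change from \cite{BayatiM:11} is that scalar variances are replaced by fixed-size matrices throughout.

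First I would set up the filtration: let $\mathcal{S}_t$ be the sigma-algebra generated by all iterates produced through step $t$. Because $\Abf$ enters \eqref{eq:AMPGenb} and \eqref{eq:AMPGend} only linearly, conditioning on $\mathcal{S}_t$ amounts to conditioning the i.i.d.\ Gaussian matrix $\Abf$ on a finite collection of linear equations of the form $\Abf\ubf(s)=(\cdot)$ and $\Abf^T\vbf(s)=(\cdot)$. The central tool is the vector analogue of the Gaussian conditioning lemma of \cite{BayatiM:11}: conditioned on $\mathcal{S}_t$, the matrix $\Abf$ is distributed as its conditional mean --- an explicit linear combination of past iterates determined by the $n_b\times n_b$ and $n_d\times n_d$ Gram matrices of $\{\ubf(s)\}$ and $\{\vbf(s)\}$ --- plus an independent Gaussian matrix restricted to the orthogonal complement of the constraint span. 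Substituting this decomposition into \eqref{eq:AMPGenb}--\eqref{eq:AMPGend} splits each new iterate $\bbf(t)$, $\dbf(t)$ into a ``memory'' part built from the history and a conditionally Gaussian ``fresh'' part.

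The decisive step is the Onsager cancellation. In the scalar case the correction is tuned so the memory part collapses to a single Gaussian with the SE variance; here the correction terms $\xibf(t)$ and $\lambdabf(\tp1)$ are matrices, namely the averaged Jacobians of $\Gout$ and $\Gin$, and I would verify that with either the empirical form \eqref{eq:genAMPDeriv} or the expected form \eqref{eq:genAMPDerivExp} they exactly cancel the self-interaction term proportional to the current iterate. By the inductive hypothesis and the Lipschitz property of $\Gin,\Gout$ (which makes the relevant empirical averages pseudo-Lipschitz functionals of the iterates), the Gram matrices and averaged Jacobians converge to their expectations over the Gaussian limits $\Bbf(t)$, $\Dbf(t)$, so the fresh part converges to a centered Gaussian with covariance $\Kbf^d(t)$, resp.\ $\Kbf^b(\tp1)$, as in \eqref{eq:SEGen}. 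Stripping off the (now vanishing, after cancellation) memory contribution yields the claimed empirical limit, closing the induction.

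The main obstacle is the vector-valued bookkeeping inside the conditioning lemma: unlike the scalar setting, the inner products that appear are full Gram matrices, so one must track their joint convergence, ensure the limiting Gram matrices are nonsingular (or work on their range to handle degeneracies), and confirm that the matrix Onsager terms commute correctly through the projections onto the constraint complement. A secondary technical point is propagating \emph{bounded moments of order} $k$ through the nonlinear vector maps $\Gin,\Gout$ so that the pseudo-Lipschitz empirical averages are uniformly integrable in the limit; this is precisely where the stronger order-$2k-2$ hypothesis on the inputs in \eqref{eq:thetaUVLim} is consumed. Once these matrix-level estimates are in place, the law-of-large-numbers steps of \cite{BayatiM:11} transfer essentially verbatim.
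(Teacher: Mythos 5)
Your proposal follows essentially the same route as the paper's own proof sketch: both adapt Bayati and Montanari's conditioning-and-induction argument (via Bolthausen's technique) to the vector case, conditioning the Gaussian matrix $\Abf$ on the linear constraints imposed by the past iterates, tracking the resulting Gram matrices and averaged Jacobians (the paper packages these as parts (a)--(e) of its vector analogue of \cite[Lemma 1]{BayatiM:11}, with the Jacobian step handled by a matrix Stein's lemma), and deducing the claim as the single-time special case of the joint pseudo-Lipschitz convergence statement. The approach and the identified technical obstacles (matrix-valued Gram bookkeeping, the order-$2k-2$ moment hypothesis) match the paper's.
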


The scalar case when $n_b=n_d=1$ is rigorously proven by Bayati and Montanari
\cite{BayatiM:11}.  The modifications for the vector-valued case
is straightforward but tedious.  
However, we only provide a sketch of the arguments in Appendix \ref{sec:SEGenPf}.
As discussed above, a full re-derivation of the proof from \cite{BayatiM:11}
would be long and beyond the scope of the paper.
Thus, the result is not fully rigorous, and we use the term Claim instead of
Theorem to emphasize the lack of rigor.
A complete proof would be a valuable avenue of future work.

\section{Proof of Claim~\ref{thm:gampSE}}
Claim~\ref{thm:gampSE} is a special case of the general result, Claim~\ref{thm:SEGen}, above.
Although we have not provided a complete proof of Claim~\ref{thm:SEGen}, 
the implication from Claim~\ref{thm:SEGen} to \ref{thm:gampSE} is completely rigorous.

Let $n_b=2$ and $n_d=1$ and define the variables
\begin{subequations} \label{eq:ampEquiv}
\beqa
    \ubf_j(t) &=& \left[\begin{array}{c} x_j \\ \xhat_j(t) \end{array} \right], \ \
    \bbf_i(t) = \left[ \begin{array}{c} z_i \\ \phat_i(t) \end{array} \right]
        \label{eq:ampbu}\\
    v_i(t) &=& \shat_i(t) \\
    d_j(t) &=& \frac{1}{\taubar^r(t)}(\rhat_j(t) - \alpha^r(t)x_j)
        \label{eq:ampdj} \\
    \theta_j^u &=& (x_j,q_j), \ \ \
    \theta_i^v = w_i \\
    \lambdabf(\tp1) &=& \left[ \begin{array}{c}0 \\ \tau^p(t) \end{array}\right] \label{eq:lamEquiv}\\
    \xibf(t) &=& \frac{1}{\taubar^r(t)}\left[ \alpha^r(t) \  -1 \right]. \label{eq:xiEquiv}
\eeqa
\end{subequations}
Also, for $\theta^u = (x,q)$, $\theta^v = w$ and $\bbf = (z \ \phat)^T$,
define the functions
\begin{subequations} \label{eq:Gequiv}
\beqa
    \lefteqn{\Gin(t,d,\theta^u)} \nonumber \\
    &:=& \left[ \begin{array}{c} x \\
    \gin(t,\taubar^r(t)d+\alpha^r(t)x, q, \taubar^r(t)) \end{array}\right] ,
        \label{eq:Ginequiv}
\eeqa
and
\beq \label{eq:Goutequiv}
    \Gout(t,\bbf,\theta^v) := \gout(t,\phat, h(z,w), \taubar^p(t)).
\eeq
\end{subequations}
With these definitions, it is easily checked that simplified
GAMP algorithm, Algorithm \ref{algo:GAMP-simp}, 
with the modifications in Section~\ref{sec:mainResult} agrees with
the recursion described by equations \eqref{eq:genAMPRec},
and \eqref{eq:uvAMPGen}, with $\lambdabf(t)$ being defined with the empirical update in
\eqref{eq:genAMPDeriv} and $\xibf(t)$ being defined by the expected value in
\eqref{eq:genAMPDerivExp}.
For example,
\beqan
    \bbf_i(t) &\stackrel{(a)}{=}& \left[ \begin{array}{c}z_i \\ \phat_i(t)
        \end{array} \right] \nonumber \\
    &\stackrel{(b)}{=}& \sum_j a_{ij}
            \left[ \begin{array}{c}x_j \\ \xhat_j(t)\end{array} \right]  -
            \left[ \begin{array}{c}0 \\ \taubar^p(t)\shat_i(\tm1) \end{array} \right]
            \nonumber \\
    &\stackrel{(c)}{=}&
        \sum_j a_{ij}\ubf_j(t) - \lambdabf(t)v_i(\tm1),
\eeqan
where (a) follows from the definition of $b_i(t)$ in \eqref{eq:ampbu};
(b) follows and \eqref{eq:outLinSimp} and the fact that $\zbf = \Abf\xbf$
and (c) follows from the remaining definitions in \eqref{eq:ampEquiv}.
Therefore,  the variables in \eqref{eq:ampEquiv} satisfy \eqref{eq:AMPGenb}.
Similarly,
\beqan
    d_j(t) &\stackrel{(a)}{=}&
        \frac{1}{\taubar^r(t)}(\rhat_j(t) - \alpha^r(t)x_j) \nonumber \\
        &\stackrel{(b)}{=}&
        \sum_i a_{ij}\shat_i(t) + \frac{1}{\taubar^r(t)}
        (\xhat_j(t) - \alpha^r(t)x_j) \nonumber \\
        &\stackrel{(c)}{=}&
        \sum_i a_{ij}v_i(t) - \xibf(t)\ubf_j(t),
\eeqan
where (a) follows from the definition of $d_j(t)$ in \eqref{eq:ampdj};
(b) follows from \eqref{eq:rhatSimp} with the modification that
$\tau^r(t)$ is replaced with $\taubar^r(t)$;
and (c) follows from the other definitions in \eqref{eq:ampEquiv}
Hence the variables in \eqref{eq:ampEquiv} satisfy \eqref{eq:AMPGend}.
The equations in \eqref{eq:uvAMPGen} can also easily verified.

We next consider $\lambdabf(t)$ and $\xibf(t)$.
For $\lambdabf(t)$, first observe that \eqref{eq:ampdj} shows that
\beqa
    \lefteqn{\frac{\partial}{\partial d} \left.
        \gin(t,\taubar^r(t)d+\alpha^r(t)x_j, q_j, \taubar^r(t))\right|_{d=d_j(t)} } 
            \nonumber \\
    &=& \taubar^r(t)\frac{\partial}{\partial \rhat}
    \gin(t,\rhat_j(t), q_j, \taubar^r(t)). \label{eq:ginDerivA}
\eeqa
Hence 
\beqa
    \lefteqn{\tau^p(\tp1) \stackrel{(a)}{=} \frac{n}{m}\tau^x(\tp1) } \nonumber \\
    = \frac{1}{n}\sum_{j=1}^n \taubar^r(t)
    \frac{\partial}{\partial \rhat} \gin(t,\rhat_j(t), q_j, \taubar^r(t)) 
        \label{eq:ginDerivB}
\eeqa
where (a) follows from \eqref{eq:mupSimp} with the modification that 
$\|\Abf\|^2_F$ has been replaced by its expectation $\Exp\|\Abf\|^2_F = n$ and
(b) follows from \eqref{eq:muxSimp} with the modification that $\tau^r(t)$
is replaced by $\taubar^r(t)$.  Combining \eqref{eq:ginDerivA} and \eqref{eq:ginDerivB},
with the definition of $\Gin(\cdot)$ in \eqref{eq:Ginequiv},
we see that
$\lambdabf(t)$ in \eqref{eq:lamEquiv} satisfies \eqref{eq:genAmpLam}.

Similarly, for $\xibf(t)$, observe that \eqref{eq:alphaSE} and \eqref{eq:outSEmur} show
that
\beqan
    \Exp\left[ \left. \frac{\partial}{\partial z}
        \gout(t,\Phat,h(z,W),\taubar^p(t)) \right|_{z=Z} \right] &=& \frac{\alpha^r(t)}{\taubar^r(t)} \\
    \Exp\left[ \left. \frac{\partial}{\partial \phat}
        \gout(t,\phat,h(Z,W),\taubar^p(t)) \right|_{\phat=P} \right] &=& \frac{-1}{\taubar^r(t)}.
\eeqan
Combining these relations with the definition of $\Gout(\cdot)$ in \eqref{eq:Goutequiv},
shows that $\xibf(t)$ defined in \eqref{eq:xiEquiv} satisfies \eqref{eq:genAMPDerivExp}.

Therefore, we can apply Claim~\ref{thm:SEGen} which shows that the limits
\eqref{eq:vecAMPLim} hold for the matrices $\Kbf^b(t)$ and $\Kbf^d(t)$
from the SE equations \eqref{eq:SEGen} with initial condition
\eqref{eq:SEGenInit}.  Since the definitions in \eqref{eq:ampEquiv}
set $\theta^u_j = (x_j,q_j)$ and $\theta^v_i = w_i$, the
expectations in \eqref{eq:SEGen} are over the random variables
\beq \label{eq:ThetaUVEquiv}
    \theta^u \sim (X,Q), \ \ \ \theta^v \sim W,
\eeq
where $(X,Q)$ and $W$ are the limiting random variables in Section
\ref{sec:assumptions}.

Now using the SE equations \eqref{eq:SEinit},
\eqref{eq:outSE} and \eqref{eq:inSE}, one can easily show
by induction that
\begin{subequations} \label{eq:ampSEEquiv}
\beqa
    \Kbf^p(t) &=& \Kbf^b(t) = \beta \Kbf^x(t) \label{eq:KpEquiv} \\
    \xi^r(t) &=& (\taubar^r(t))^2\Kbf^d(t). \label{eq:taurEquiv}
\eeqa
\end{subequations}
For example, one can show that
\eqref{eq:KpEquiv} holds for $t=0$ by comparing the
initial conditions \eqref{eq:SEGenInit} with \eqref{eq:SEinit}
and using the definition of $\ubf(0)$ in \eqref{eq:ampbu}.
Now, suppose that \eqref{eq:KpEquiv} holds or some $t$.
Then, \eqref{eq:KpEquiv} and \eqref{eq:ThetaUVEquiv}
show that
$(\Bbf(t),\theta^v)$ in the expectation \eqref{eq:Kb}
is identically distributed to
$((Z,\Phat), W)$ in $\theta^p(\Kbf^p(t))$.
Therefore,
\beqan
    \xi^r(t) &\stackrel{(a)}{=}&
        (\taubar^r(t))^2\Exp\left[ \gout^2(t,Y,\Phat,\taubar^p(t)) \right]
        \nonumber \\
    &\stackrel{(b)}{=}&
        (\taubar^r(t))^2\Exp\left[ \Gout^2(t,\Bbf(t), \theta^v) \right]
        \nonumber \\
    &\stackrel{(c)}{=}& (\taubar^r(t))^2\Kbf^d(t), \nonumber
\eeqan
where (a) follows from \eqref{eq:taurSE};
(b) follows from the definition of $\bbf(t)$ and $\theta^v$ in \eqref{eq:ampEquiv}
and $\Gout(\cdot)$ in \eqref{eq:Goutequiv}; and
(c) follows from \eqref{eq:Kd}.  Similarly, one can show that,
if \eqref{eq:taurEquiv} holds for some $t$, then
\eqref{eq:KpEquiv} holds for $t+1$.

With these equivalences, we can now prove the assertions in Claim~\ref{thm:gampSE}.
To prove \eqref{eq:thetarlim}, first observe that the limit \eqref{eq:vecAMPLim}
along with \eqref{eq:ThetaUVEquiv} and
the definitions in \eqref{eq:ampEquiv} show that
\[
  \lim_{n \arr \infty}
    \left(\frac{1}{\taubar^r(t)}(\rhat_j(t)-\alpha^r(t)x_j),
        x_j,q_j\right)  \stackrel{d}{=} (D(t),X,Q),
\]
where the limit is in the sense of empirical convergence of order $k$
and $D(t) \sim {\cal N}(0,\Kbf^d(t))$ is independent of $(X,Q)$.
But, this limit is equivalent to
\beq \label{eq:xqrLimPf}
    \lim_{n \arr \infty} (x_j,q_j,\rhat_j(t)) \stackrel{d}{=} (X,Q,\Rhat),
\eeq
where
\[
    \Rhat = \alpha^r(t)X + \taubar^r(t)D(t).
\]
Since $D(t)  \sim {\cal N}(0,\Kbf^d(t))$,
\[
    \taubar^r(t)D(t) \sim {\cal N}(0,(\taubar^r(t))^2\Kbf^d(t)) =
    {\cal N}(0,\xi^r(t)),
\]
where the last equality is due to \eqref{eq:taurEquiv}.
Therefore, $(X,Q,\Rhat)$ in \eqref{eq:xqrLimPf} is identically
distributed to $\theta^r(\xi^r(t),\alpha^r(t))$ in \eqref{eq:thetaR}.
This proves \eqref{eq:thetarlim}, and part (b) of Claim~\ref{thm:gampSE}.
Part (c) of Claim~\ref{thm:gampSE} is proven similarly.

To prove \eqref{eq:muLim} in part (a),
\beqan
    \lefteqn{ \lim_{n \arr \infty} \frac{1}{\tau^r(t)}
    \stackrel{(a)}{=} \lim_{n \arr \infty} \tau^s(t) }\nonumber \\
    &\stackrel{(b)}{=}& -\lim_{n \arr \infty}
    \frac{1}{m}\sum_{i=1}^m \frac{\partial}{\partial \phat}
        \gout(t,\phat_i(t),y_i,\taubar^p(t)) \nonumber \\
    &\stackrel{(c)}{=}& -\Exp\left[ \frac{\partial}{\partial \phat}
        \gout(t,\Phat,Y,\taubar^p(t)) \right] \nonumber \\
     &\stackrel{(d)}{=}&  \frac{1}{\taubar^r(t)},
\eeqan
where (a) follows from \eqref{eq:murSimp} with the modification that $\|\Abf\|_F^2=n$,
(b) follows from \eqref{eq:tausSimp} and that fact that we are considering
the modified algorithm where $\tau^p(t)$ is replaced with $\taubar^p(t)$;
(c) follows the limit \eqref{eq:thetaplim} and the assumption that
the derivative of $\gout(t,\phat,y,\taubar^p(t))$ is of order $k$;
and (d) follows from \eqref{eq:outSE}.
Similarly, one can show
\[
    \lim_{n \arr \infty} \tau^p(t) = \taubar^p(t).
\]
This proves \eqref{eq:muLim} and completes the proof of Claim~\ref{thm:gampSE}.

\section{Max-Sum GAMP} \label{sec:bpMap}

In this section, we show that with
the functions $\gin$ and $\gout$ in \eqref{eq:ginMap} and \eqref{eq:goutMap},
the GAMP algorithm can be seen heuristically as a first-order
approximation of max-sum loopy BP for the MAP estimation problem.
The derivations in this section are not rigorous, since we do not
formally claim any properties of this approximation.

\subsection{Max-Sum BP for MAP Estimation}
We first review how we would apply standard max-sum loopy BP for the MAP estimation
problem \eqref{eq:xhatMap}.
For both the MAP and MMSE estimation problems,
standard loopy BP operates by associating with the transform matrix $\Abf$
a bipartite graph $G=(V,E)$
called the \emph{factor} or \emph{Tanner} graph
as illustrated in Fig.~\ref{fig:BPGraph}.
The vertices $V$ in this graph consists of $n$ ``input" or
``variable" nodes
associated with the variables $x_j$, $j=1,\ldots,n$,
and $m$ ``output" or ``measurements" nodes
associated with the transform outputs
$z_i$, $i=1,\ldots,m$.
There is an edge $(i,j) \in E$ between the input node $x_j$
and output node $z_i$ if and only if $a_{ij} \neq 0$.

\begin{figure}
\begin{center}
  \includegraphics[width=2in,height=2in]{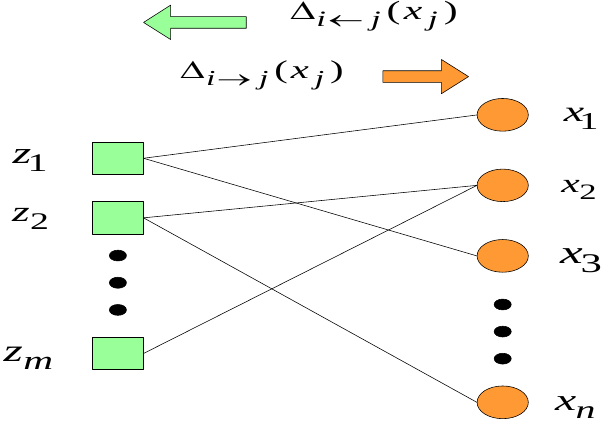}
\end{center}
\caption{Factor or Tanner graph for the linear mixing estimation problem. }
\label{fig:BPGraph}
\end{figure}

Now let $\Delta_j(x_j)$ be the marginal maxima in \eqref{eq:margUtil},
which we can interpret as a ``value" function on the variable $x_j$.
Max-sum loopy BP iteratively sends
messages between the input nodes $x_j$ and output nodes $z_i$
representing estimates of the value function $\Delta_j(x_j)$.
The value message from the input node $x_j$ to output node $z_i$ in the $t$th
iteration is denoted $\Delta_{i \la j}(t,x_j)$ and
the reverse message for $z_i$ to $x_j$ is denoted $\Delta_{i \ra j}(t,x_j)$.

Loopy BP updates the value messages with the following simple recursions:
The messages from the output nodes are given by
\beqa
    \lefteqn{\Delta_{i \ra j}(t,x_j) = \mbox{const} } \nonumber \\
        &+& \max_{\xbf} \fout(z_i,y_i)
        + \sum_{r \neq j} \Delta_{i \la r}(t,x_r)  \label{eq:outBPMap}
\eeqa
where the maximization is over vectors $\xbf$ with the $j$th component
equal to $x_j$ and $z_i = \abf_i^T\xbf$,
where $\abf_i^T$ is the $i$th row of the matrix $\Abf$.
The constant term is any term that does not depend on $x_j$, although
it may depend on $t$ or the indices $i$ and $j$.
The messages from the input nodes are given by
\beqa
   \lefteqn{ \Delta_{i \la j}(\tp1,x_j) =  \mbox{const} } \nonumber \\
   &+& \fin(x_j,q_j) +
        \sum_{\ell \neq i} \Delta_{\ell \ra j}(t,x_j) \label{eq:inBPMap}
\eeqa
The BP iterations are initialized with $t=0$ and
$\Delta_{i \ra j}(-1,x_j) = 0$.

The BP algorithm is terminated after some finite number of iterations.
After the final $t$th iteration, the final estimate for $x_j$ can
be taken as the maximum of
\beq \label{eq:inBPMapTot}
   \Delta_{j}(\tp1,x_j) = \fin(x_j,q_j) + \sum_{i} \Delta_{i \ra j}(t,x_j).
\eeq

\subsection{Quadratic Legendre Transforms} \label{sec:LegQuad}
To approximate the BP algorithm, we need the following simple result.
Given a function $f: \R \arr \R$, define the functions
\begin{subequations} \label{eq:lamgamDef}
\beqa
    (Lf)(x,r,\tau) &:=& f(x) - \frac{1}{2\tau}(r-x)^2 \\
    (\Gamma f)(r,\tau) &:=& \argmax_x (Lf)(x,r,\tau) \label{eq:GammaDef} \\
    (\Lambda f)(r,\tau) &:=& \max_x (Lf)(x,r,\tau)
        \label{eq:Lam0Def} \\
    (\Lambda^{(k)} f)(r,\tau) &:=& \frac{\partial^k}{\partial r^k} (\Lambda f)(r,\tau),
        \label{eq:LamkDef}
\eeqa
\end{subequations}
over the variables $r,\tau \in \R$ with $\tau > 0$ and $k=1,2,\ldots$.
The function $\Lambda f$ can be interpreted as a quadratic variant of
the standard Legendre transform of $f$ \cite{Rockafellar:70}.

\begin{lemma} \label{lem:lamDeriv}  Let $f:\R \arr \R$ be twice differentiable
and assume that all the maximizations in \eqref{eq:lamgamDef} exist and are unique.
Then,
\begin{subequations} \label{eq:lamgamDeriv}
\beqa
    (\Lambda^{(1)} f)(r,\tau) &=& \frac{\xhat-r}{\tau} \label{eq:Lam1Deriv} \\
    (\Lambda^{(2)} f)(r,\tau) &=& \frac{f''(\xhat)}{1-\tau f''(\xhat)},
         \label{eq:Lam2Derivb} \\
    \frac{\partial}{\partial r}\xhat &=&\frac{1}{1-\tau f''(\xhat)}
        \label{eq:GamDeriv}
\eeqa
where $\xhat = (\Gamma f)(r,\tau)$.
\end{subequations}
\end{lemma}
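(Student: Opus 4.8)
The plan is to reduce everything to the first-order optimality condition for the maximizer $\xhat = (\Gamma f)(r,\tau)$ and then apply the envelope theorem together with a single implicit differentiation. First I would record the stationarity condition: since $\xhat$ maximizes $(Lf)(x,r,\tau) = f(x) - \frac{1}{2\tau}(r-x)^2$ over $x$, differentiating in $x$ and setting the result to zero gives
\[
    f'(\xhat) + \frac{1}{\tau}(r - \xhat) = 0,
\]
equivalently $f'(\xhat) = (\xhat - r)/\tau$. This single relation drives all three formulas.

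For the first derivative \eqref{eq:Lam1Deriv}, I would invoke the envelope theorem. Writing $\Lambda f(r,\tau) = (Lf)(\xhat(r),r,\tau)$ and differentiating in $r$ by the chain rule, the term multiplying $d\xhat/dr$ is exactly $\partial_x (Lf)|_{x=\xhat}$, which vanishes by the stationarity condition. Hence only the explicit $r$-dependence survives,
\[
    (\Lambda^{(1)} f)(r,\tau) = \frac{\partial}{\partial r}(Lf)(x,r,\tau)\Big|_{x=\xhat}
        = -\frac{1}{\tau}(r - \xhat) = \frac{\xhat - r}{\tau},
\]
which is \eqref{eq:Lam1Deriv}.

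Next I would obtain $\partial \xhat/\partial r$ in \eqref{eq:GamDeriv} by implicitly differentiating the stationarity condition $f'(\xhat) + \tau^{-1}(r-\xhat) = 0$ with respect to $r$. This yields
\[
    f''(\xhat)\frac{\partial \xhat}{\partial r} + \frac{1}{\tau}\left(1 - \frac{\partial \xhat}{\partial r}\right) = 0,
\]
and solving for $\partial\xhat/\partial r$ gives $1/(1 - \tau f''(\xhat))$, which is \eqref{eq:GamDeriv}. Finally, \eqref{eq:Lam2Derivb} follows by differentiating the already-established expression $(\Lambda^{(1)} f) = (\xhat - r)/\tau$ once more in $r$, substituting \eqref{eq:GamDeriv}, and simplifying $\tau^{-1}\bigl(1/(1-\tau f''(\xhat)) - 1\bigr) = f''(\xhat)/(1 - \tau f''(\xhat))$.

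The only genuinely delicate point, and the place I would spend the most care, is justifying that $\xhat$ is a differentiable function of $r$ so that the implicit differentiation is legitimate. This rests on the implicit function theorem applied to the stationarity equation, whose $x$-derivative is $f''(\xhat) - 1/\tau$; one needs this to be nonzero at the optimizer. The hypothesis that the maximum in \eqref{eq:GammaDef} exists and is unique, combined with the second-order necessary condition $f''(\xhat) \le 1/\tau$, should upgrade to the strict inequality $f''(\xhat) < 1/\tau$, i.e.\ $1 - \tau f''(\xhat) > 0$; this simultaneously makes the denominators in \eqref{eq:Lam2Derivb} and \eqref{eq:GamDeriv} well-defined and positive and licenses the envelope and implicit-function steps.
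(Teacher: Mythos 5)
Your proposal is correct and follows essentially the same route as the paper's proof: the envelope theorem for \eqref{eq:Lam1Deriv}, implicit differentiation of the stationarity condition (equivalently, the implicit-function-theorem formula $-L_{xx}^{-1}L_{xr}$ used in the paper) for \eqref{eq:GamDeriv}, and differentiating the resulting expression for $(\Lambda^{(1)}f)$ to get \eqref{eq:Lam2Derivb}. Your closing remark on why $1-\tau f''(\xhat)$ should be nonzero is a reasonable point of extra care that the paper leaves implicit, though note that uniqueness of the maximizer alone does not strictly force $f''(\xhat)<1/\tau$ rather than equality.
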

\begin{proof}
Equation \eqref{eq:Lam1Deriv} follows from the fact that
\[
    (\Lambda^{(1)} f)(r,\tau) =
    \left. \frac{\partial L(r,x)}{\partial r}\right|_{x = \xhat}
    = \frac{\xhat-r}{\tau}.
\]
Next observe that
\beqan
    \frac{\partial^2}{\partial x^2}L(x,r) &=& f''(x) - \frac{1}{\tau} \\
    \frac{\partial^2}{\partial x\partial r}L(x,r) &=& \frac{1}{\tau}.
\eeqan
So the derivative of $\xhat$ is given by
\beqan
    \lefteqn{\frac{\partial}{\partial r}\xhat
        = -\left[\frac{\partial^2}{\partial x^2}L(\xhat,r)\right]^{-1}
        \frac{\partial^2}{\partial x\partial r}L(\xhat,r) }\\
        &=& \frac{1/\tau}{1/\tau-f''(x)} = \frac{1}{1-\tau f''(x)},
\eeqan
which proves \eqref{eq:GamDeriv}.
Hence
\beqan
    \lefteqn{ (\Lambda^{(2)} f)(r,\tau) = \frac{\partial}{\partial r}
        (\Lambda^{(1)} f)(r,\tau) } \\
    &=& \frac{\partial}{\partial r} \left[ \frac{\xhat-r}{\tau} \right] \\
    &=& \frac{1}{\tau}\left(\frac{1}{ 1 - \tau f''(\xhat)}-1\right)
        = \frac{f''(\xhat)}{1-\tau f''(\xhat)},
\eeqan
which shows \eqref{eq:Lam2Derivb}.
\end{proof}

\subsection{GAMP Approximations}
We now show that the GAMP algorithm
with the functions in \eqref{eq:ginMap} and \eqref{eq:goutMap}
can be seen as a quadratic approximation
of the max-sum loopy BP updates \eqref{eq:inBPMap}.
The derivation is similar to the one given in \cite{Montanari:12-bookChap}
for the Laplacian AMP algorithm \cite{DonohoMM:09}.
We begin by considering the output node update \eqref{eq:outBPMap}.
Let
\begin{subequations} \label{eq:xhatMuMap}
\beqa
    \xhat_{j}(t) &:=& \argmax_{x_j} \Delta_{j}(t,x_j), \label{eq:xhatMapDelj} \\
    \xhat_{i \la j}(t) &:=& \argmax_{x_j} \Delta_{i \la j}(t,x_j),
        \label{eq:xhatMapDel} \\
    \frac{1}{\tau^x_{j}(t)} &:=& -\frac{\partial^2}{\partial x_j^2}
        \left. \Delta_{j}(t,x_j) \right|_{x_j = \xhat_{j}(t)}.
        \label{eq:muxMapDelj}  \\
    \frac{1}{\tau^x_{i \la j}(t)} &:=& -\frac{\partial^2}{\partial x_j^2}
        \left. \Delta_{i \la j}(t,x_j) \right|_{x_j = \xhat_{i \la j}(t)}.
        \label{eq:muxMapDel}
\eeqa
\end{subequations}
Now, for small $a_{ir}$, the values of $x_r$ in the maximization
\eqref{eq:outBPMap} will be close to $\xhat_{i \la j}(t)$.
So, we can approximate each term $\Delta_{i \la r}(t,x_r)$ with
the second order approximation
\beq \label{eq:DelMaxLQuad}
    \Delta_{i \la r}(t,x_r) \approx \Delta_{i \la r}(t,\xhat_{i\la r}(t))
         -\frac{1}{2\tau^x_r(t)}(x_r -\xhat_{i \la r}(t))^2,
\eeq
where we have additionally made the approximation $\tau^x_{i \la r}(t) \approx
\tau^x_r(t)$ for all $i$.
Now, given $x_j$ and $z_i$, consider the minimization
\beq \label{eq:minMAPQuad}
    J:= \min_{\xbf} \sum_{r \neq j} \frac{1}{2\tau_r}(x_r -\xhat_{i \la r})^2,
\eeq
subject to
\[
    z_i = a_{ij}x_j + \sum_{r \neq j} a_{ir}x_r.
\]
A standard least squares calculation shows that the minimization \eqref{eq:minMAPQuad}
is given by
\[
    J = \frac{1}{2} \sum_{r \neq j} \frac{1}{2\tau^p_{i \ra j}}
        \left(z_i-\phat_{i \ra j}-a_{ij}x_j\right)^2
\]
where
\[
    \phat_{i \ra j} = \sum_{r \neq j} a_{ir}\xhat_r, \ \
    \tau^p_{i \ra j} = \sum_{r \neq j} |a_{ir}|^2\tau^x_r.
\]
So, the approximation \eqref{eq:DelMaxLQuad} reduces to
\beqa
    \lefteqn{ \Delta_{i \ra j}(t,x_j) }\nonumber\\
     &\approx& \max_{z_i} \left[ \fout(z_i,y_i)
        - \frac{1}{2\tau^p_{i \ra j}(t)}(z_i-\phat_{i\ra j}(t)-a_{ij}x_j)^2,
        \right] \nonumber \\
        & & + \mbox{const} \nonumber \\
        &=& H\left(
            \phat_{i\ra j}(t)+a_{ij}x_j, y_i,
            \tau^p_{i \ra j}(t)\right) + \mbox{const} \label{eq:DelRH1}
\eeqa
where
\begin{subequations} \label{eq:zmuijMap}
\beqa
    \phat_{i \ra j}(t) &:=& \sum_{r \neq j} a_{ir}\xhat_{i\la r}(t) \\
    \tau^p_{i \ra j}(t) &:=& \sum_{r \neq j} |a_{ir}|^2\tau^x_r(t),
\eeqa
\end{subequations}
and
\beq \label{eq:Hmap}
    H(\phat,y,\tau^p) := \max_z \left[
        \fout(z,y) - \frac{1}{2\tau^p}(z-\phat)^2 \right].
\eeq
The constant term in \eqref{eq:DelRH1} does not depend on $z_i$.
Now let
\beq \label{eq:zhatiMap}
    \phat_i(t) := \sum_j a_{ij}\xhat_{i \la j}(t),
\eeq
and $\tau^p_i(t)$ be given as in \eqref{eq:mup}.
Then it follows from \eqref{eq:zmuijMap} that
\begin{subequations} \label{eq:zhatijdiff}
\beqa
    \phat_{i \ra j}(t) &=& \phat_i(t) - a_{ij}\xhat_{i \la j}(t) \\
    \tau^p_{i \ra j}(t) &=& \tau^p_i(t) - a_{ij}^2\tau^x_{j}(t).
\eeqa
\end{subequations}
Using \eqref{eq:zhatijdiff} and neglecting terms of order $O(a_{ij}^2)$, \eqref{eq:DelRH1} can be further approximated as
\beq \label{eq:DelRH2}
    \Delta_{i \ra j}(t,x_j) \approx H\left(
            \phat_{i}(t)+a_{ij}(x_j-\xhat_j), y_i,\tau^p_i(t)\right)
            + \mbox{const.}
\eeq
Now let
\beq \label{eq:goutMap1}
    \gout(\phat,y,\tau^p) := \frac{\partial}{\partial \phat}
        H(\phat,y,\tau^p).
\eeq
Comparing $H(\cdot)$  in \eqref{eq:Hmap} with the definitions
in \eqref{eq:lamgamDef} and using the properties in \eqref{eq:lamgamDeriv},
it can be checked that
\[
    H(\phat,y,\tau^p) =
        (\Lambda\fout(\cdot,y))\left(\phat,\tau^p\right),
\]
and that $\gout(\cdot)$ in \eqref{eq:goutMap1}
and its derivative agree with the definitions in
\eqref{eq:goutMap} and \eqref{eq:goutMapDeriv}.

Now define $\shat_i(t)$ and $\tau^s_i(t)$ as in \eqref{eq:outNL}.
Then, a first order approximation of \eqref{eq:DelRH2} shows that
\beqa
    \lefteqn{\Delta_{i \ra j}(t,x_j) \approx \mbox{const} } \nonumber \\
    & & +s_i(t)a_{ij}(x_j-\xhat_j(t))
        -\frac{\tau^s_i(t)}{2}a_{ij}^2(x_j-\xhat_j(t))^2. \nonumber \\
    &=& \mbox{const} \left[s_i(t)a_{ij}+a_{ij}^2\tau^s_i(t)\xhat_j(t)\right]x_j
        \nonumber\\
    & & - \frac{\tau^s_i(t)}{2}a_{ij}^2x_j^2, \label{eq:DelRAppMap}
\eeqa
where again the constant term does not depend on $x_j$.
We next consider the input update \eqref{eq:inBPMap}.
Substituting the approximation \eqref{eq:DelRAppMap} into
\eqref{eq:inBPMap} we obtain
\beqa
    \lefteqn{\Delta_{i \la j}(\tp1,x_j) \approx \mbox{const} } \nonumber \\
    &+& \fin(x_j,q_j) - \frac{1}{2\tau^r_{i \la j}(t)}
    (\rhat_{i \la j}(t)-x_j)^2,
        \label{eq:DelLApp1}
\eeqa
where the constant term does not depend on $x_j$ and
\begin{subequations} \label{eq:rmuijMap}
\beqa
    \frac{1}{\tau^r_{i \la j}(t)} &=&
        \sum_{\ell \neq i} a_{\ell j}^2\tau^s_\ell(t) \\
    \rhat_{i \la j}(t) &=& \tau^r_{i\la j}(t)\sum_{\ell \neq i}
        \left[ s_\ell(t)a_{\ell j}+
        a_{\ell j}^2\tau^s_\ell(t)\xhat_j(t) \right] \nonumber \\
        &=& \xhat_j(t) + \tau^r_{i\la j}(t)\sum_{\ell \neq i} s_\ell(t)a_{\ell j}.
\eeqa
\end{subequations}
Now define $\gin(r,q,\tau^r)$ as in \eqref{eq:ginMap}.
Then, using \eqref{eq:DelLApp1} and \eqref{eq:ginMap},
$\xhat_{i \la j}(t)$ in \eqref{eq:xhatMapDel} can be re-written as
\beq \label{eq:xhatijMap1}
    \xhat_{i \la j}(\tp1) \approx \gin\left(\rhat_{i \la j}(t), q_j, \tau^r_{i \la j}(t)\right).
\eeq
Then, if we define $\rhat_j(t)$ and $\tau^r_j(t)$ as in \eqref{eq:inLin},
$\rhat_{i \la j}(t)$ and $\tau^r_{i \la j}(t)$ in \eqref{eq:rmuijMap}
can be re-written as
\begin{subequations} \label{eq:rmuijMapDiff}
\beqa
    \tau^r_{i\la j}(t) &\approx& \tau^r_j(t). \\
    \rhat_{i\la j}(t) &\approx& \xhat_j(t) + \tau^r_{j}(t)\sum_{\ell \neq i} s_\ell(t)a_{\ell j}  \nonumber \\
        &=& \rhat_j(t) - \tau^r_j(t)a_{ij}s_i(t),
\eeqa
\end{subequations}
where in the approximations we have ignored terms of order $O(a_{ij}^2)$.
We can then simplify \eqref{eq:xhatijMap1} as
\beqa
     \lefteqn{\xhat_{i \la j}(\tp1) }\nonumber\\
     &\stackrel{(a)}{\approx}&
     \gin\left(\rhat_j(t)-a_{ij}s_i(t)\tau^r_j(t), q_j, \tau^r_j(t)\right) \nonumber\\
     &\stackrel{(b)}{\approx}&  \xhat_j(\tp1) -a_{ij}s_j(t)D_j(\tp1)
        \label{eq:xhatijD}
\eeqa
where (a) follows from substituting \eqref{eq:rmuijMapDiff} into \eqref{eq:xhatijMap1}
and (b) follows from a first-order approximation with the definitions
\begin{subequations}
\beqa
    \xhat_j(\tp1) &:=&  \gin\left(\rhat_{j}(t), q_j, \tau^r_{j}(t)\right)
        \label{eq:xhatjMap} \\
    D_j(\tp1) &:=& \tau^r_j(t)\frac{\partial}{\partial \rhat}
        \gin\left(\rhat_{j}(t), q_j, \tau^r_{j}(t)\right). \label{eq:dxjMap}
\eeqa
\end{subequations}
Now
\beqa
    \lefteqn{ D_j(\tp1) \stackrel{(a)}{\approx} \tau^r_j(t)
    \frac{\partial}{\partial \rhat}
        (\Gamma\fin(\cdot,q_j))\left(\rhat_{j}(t), \tau^r_{j}(t)\right) }\nonumber \\
    &\stackrel{(b)}{=}&  \frac{\tau^r_j(t)}{1 - \tau^r_j(t)\fin''(\xhat_j(\tp1),q_j) }
        \nonumber \\
    &\stackrel{(c)}{\approx}&  \left[ -\frac{\partial^2}{\partial x_j^2}
        \Delta_{i \la j}(\tp1,\xhat_j(\tp1)) \right]^{-1}
        \nonumber \\
    &\stackrel{(d)}{\approx}& \tau^x_j(\tp1), \label{eq:muxD}
\eeqa
where (a) follows from \eqref{eq:dxjMap} and by comparing
\eqref{eq:ginMap} to \eqref{eq:GammaDef};
(b) follows from \eqref{eq:GamDeriv};
(c) follows from \eqref{eq:DelLApp1} and
(d) follows from \eqref{eq:muxMapDel}.
Substituting \eqref{eq:muxD} into \eqref{eq:xhatijD} and \eqref{eq:zhatiMap},
we obtain
\beq \label{eq:zhatiMap1}
    \phat_i(t) = \sum_j a_{ij}\xhat_j(t) - \tau^p_i(t)\shat_i(\tm1),
\eeq
which agrees with the definition in \eqref{eq:zhati}.
In summary, we have shown that with the choice of $\gin(\cdot)$
and $\gout(\cdot)$ in \eqref{eq:ginMap} and \eqref{eq:goutMap},
the GAMP algorithm can be seen as a quadratic approximation
of max-sum loopy BP for MAP estimation.

\section{Sum-Product GAMP} \label{sec:bpMmse}

\subsection{Preliminary Lemma}
Our analysis will needs the following standard result.

\begin{lemma}  Consider a random variable $U$ with a conditional
probability density function of the form
\[
    p_{U|V}(u|v) := \frac{1}{Z(v)}\exp\left( \phi_0(u) + uv \right),
\]
where $Z(v)$ is a normalization constant (called the partition function).
Then,
\begin{subequations} \label{eq:logZRel}
\beqa
     \frac{\partial}{\partial v}\log Z(v) &=& \Exp(U|V=v) \label{eq:logZD1} \\
     \frac{\partial^2}{\partial v^2}\log Z(v)  &=&
        \frac{\partial}{\partial v}\Exp(U|V=v) \label{eq:logZD2} \\
         &=& \var(U|V=v). \label{eq:logZvar}
\eeqa
\end{subequations}
\end{lemma}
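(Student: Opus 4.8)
The plan is to treat $p_{U|V}(\cdot|v)$ as a one-parameter exponential family in the natural parameter $v$, so that all three identities reduce to differentiating the log-partition function $\log Z(v)$, where $Z(v) = \int \exp(\phi_0(u)+uv)\,du$. The only analytic ingredient is the ability to differentiate under the integral sign, which I will assume holds throughout (this is the implicit regularity hypothesis behind calling the result ``standard'').

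First I would establish \eqref{eq:logZD1}. Differentiating $Z(v)$ with respect to $v$ and interchanging derivative and integral gives $Z'(v) = \int u\,\exp(\phi_0(u)+uv)\,du$. Dividing by $Z(v)$ identifies the integrand as $u\,p_{U|V}(u|v)$, so that $Z'(v)/Z(v) = \Exp(U|V=v)$; since $\frac{\partial}{\partial v}\log Z(v) = Z'(v)/Z(v)$, this is exactly \eqref{eq:logZD1}.

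Equation \eqref{eq:logZD2} is then immediate: differentiating both sides of \eqref{eq:logZD1} in $v$ yields $\frac{\partial^2}{\partial v^2}\log Z(v) = \frac{\partial}{\partial v}\Exp(U|V=v)$. For the variance identity \eqref{eq:logZvar} I would expand $\frac{\partial}{\partial v}(Z'(v)/Z(v)) = Z''(v)/Z(v) - (Z'(v)/Z(v))^2$ using the quotient rule. A second differentiation under the integral gives $Z''(v)/Z(v) = \Exp(U^2|V=v)$, while the remaining term is $(\Exp(U|V=v))^2$; their difference is $\var(U|V=v)$ by definition, which is \eqref{eq:logZvar}.

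The main obstacle is purely technical: justifying the two interchanges of differentiation and integration. This requires an integrable dominating function for $\partial_v$ and $\partial_v^2$ of the integrand, which holds whenever the conditional moments $\Exp(U|V=v)$ and $\Exp(U^2|V=v)$ exist and vary continuously in $v$ --- conditions I would absorb into the standing regularity assumption rather than verify explicitly, consistent with the lemma being stated as a standard result.
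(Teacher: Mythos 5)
Your proposal is correct and is precisely the standard exponential-family computation that the paper invokes by citation (its entire proof is ``these are standard properties of exponential families''), so you have simply written out in full the argument the paper defers to. The two differentiations under the integral sign and the quotient-rule expansion $Z''/Z - (Z'/Z)^2 = \Exp(U^2|V=v) - (\Exp(U|V=v))^2$ are exactly right, and flagging the dominated-convergence regularity as an implicit standing assumption is the appropriate level of care here.
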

\begin{proof}  The relations are standard properties of exponential families
\cite{WainwrightJ:08}.
\end{proof}

\subsection{Sum-Product BP for MMSE Estimation}

The sum-product loopy BP algorithm for 
MMSE estimation is similar to max-sum algorithm 
in Appendix \ref{sec:bpMap}.
For the sum-product algorithm, the BP messages
can also be represented as functions
$\Delta_{i \ra j}(t,x_j)$ and $\Delta_{i \la j}(t,x_j)$.
However, these messages are to be interpreted
as estimates of the log likelihoods of the variables $x_j$,
conditioned on the system input and output vectors,
$\qbf$ and $\ybf$, and the transform matrix $\Abf$.

The updates for the messages are also slightly different
between the max-sum and sum-product algorithms.
For the sum-product algorithm, the output node update \eqref{eq:outBPMap} is
replaced by the update equation
\beq  \label{eq:outBPMmse}
    \Delta_{i \ra j}(t,x_j) = \log \Exp( p_{Y|Z}(y_i,z_i) | x_j )
    + \mbox{const},
\eeq
where the expectation is over the random variable $z_i = \abf_i^T\xbf$
with the components $x_r$ of $\xbf$ being independent with distributions
\beq \label{eq:pirMmse}
    p_{i \la r}(x_r) \propto \exp \Delta_{i \la r}(t,x_r).
\eeq
The constant term in \eqref{eq:outBPMmse} can be any term that does
not depend on $x_j$.
The sum-product input node update is identical to the max-sum update \eqref{eq:inBPMap}.
Similar to the max-sum estimation algorithm, sum-product BP is terminated
after some iterations.  The final estimate for the conditional distribution
of $x_j$ is given by
\[
    p_{j}(x_j) \propto \exp \Delta_{j}(t,x_j),
\]
where $\Delta_j(t,x_j)$ is given in \eqref{eq:inBPMapTot}.
From this conditional distribution, one can compute the conditional mean of $x_j$.

\subsection{GAMP Approximation}
We now show that with the $\gin(\cdot)$ in \eqref{eq:ginMmse}
and $\gout(\cdot)$ in \eqref{eq:goutMmse}, the GAMP algorithm
can heuristically be regarded as a Gaussian approximation of the
above updates.  The calculations are largely the same
as the approximations for max-sum BP in Appendix \ref{sec:bpMap},
so we will just point out the key differences.

We begin with the output node update \eqref{eq:outBPMmse}.
Let
\begin{subequations} \label{eq:xhatMuMmse}
\beqa
    \xhat_{j}(t) &:=& \Exp[ x_j | \Delta_{j}(t,\cdot) ], \label{eq:xhatMmseDelj} \\
    \xhat_{i \la j}(t) &:=& \Exp[ x_j | \Delta_{i \la j}(t,\cdot) ],
        \label{eq:xhatMmseDel} \\
    \tau^x_{j}(t) &:=& \var[ x_j | \Delta_{j}(t,\cdot) ]  \label{eq:muxMmseDelj}  \\
    \tau^x_{i \la j}(t) &:=& \var[ x_j | \Delta_{i \la j}(t,\cdot) ],
        \label{eq:muxMmseDel}
\eeqa
\end{subequations}
where we have used the notation $\Exp(g(x)|\Delta(\cdot))$ to mean the
expectation over a random variable $x$ with a probability density function
\beq \label{eq:pDel}
    p_\Delta(x) \propto \exp\left( \Delta(x) \right).
\eeq
Therefore, $\xhat_{i \la r}(t)$ and $\tau^x_{i \la r}(t)$ are the mean
and variance of the random variable $x_r$ with density \eqref{eq:pirMmse}.
Now, the expectation in \eqref{eq:outBPMmse} is over $z_i = \abf_i^T\xbf$
with the components $x_r$ being independent with probability density
\eqref{eq:pirMmse}.  So, for large $n$, the Central Limit Theorem suggests
that the conditional distribution of $z_i$ given $x_j$ should be approximately
Gaussian $z_i \sim {\cal N}(\phat_{i \la j}(t), \tau^p_{i \la j}(t))$,
where $\phat_{i \la j}(t)$ and $\tau^p_{i \la j}(t))$ are defined in
\eqref{eq:zmuijMap}.
Hence, $\Delta_{i \ra j}(t,x_j)$ can be approximated as
\beq \label{eq:DelRApp}
    \Delta_{i \ra j}(t,x_j) \approx H\left(\phat_{i \la j}(t),y_i,
        \tau^p_{i \la j}(t)\right),
\eeq
where
\beq \label{eq:Hmmse1}
    H(\phat,y,\tau^p) := \log\Exp\left[p_{Y|Z}(y|z)|\phat,\tau^p\right],
\eeq
and the expectation is over random variable $z \sim {\cal N}(\phat,\tau^p)$.
It is easily checked that
\beq \label{eq:Hmmse2}
    H(\phat,y,\tau^p) := \log p(z|\phat,y,\tau^p) + \mbox{const},
\eeq
where $p(z|\phat,y,\tau^p)$ is given in \eqref{eq:pFout} and
the constant term does not depend on $\phat$.

Now, identical to the argument in Appendix \ref{sec:bpApproxMap}, we can
define $\phat_i(t)$ as in \eqref{eq:zhatiMap}
and $\tau^p_i(t)$ as in \eqref{eq:mup} so that
$\Delta_{i \ra j}(t,x_j)$ can be approximated as \eqref{eq:DelRH2}.
Also, if we define $\gout(\phat,y,\tau^p)$ as in \eqref{eq:goutMap1},
and $\shat_i(t)$ and $\tau^s_i(t)$ as in \eqref{eq:outNL},
then  $\shat_i(t)$ and $\tau^s_i(t)$
are respectively the first and second-order
derivatives of $H(\phat_i(t),y_i,\tau^p_i(t))$ with respect to $\phat$.
Then, taking a second order approximation of \eqref{eq:DelRH2}
results in \eqref{eq:DelRAppMap}.

Also, using \eqref{eq:Hmmse2}, $\gout(\cdot)$ as defined in \eqref{eq:goutMap1}
agrees with the definition in \eqref{eq:goutMmseScore}.
To show that $\gout(\cdot)$ is also equivalent to \eqref{eq:goutMmse}
note that we can rewrite $H(\cdot)$ in \eqref{eq:Hmmse2} as
\[
    H(\phat,y,\tau^p) := \log \left[
        \frac{Z(\phat,y,\tau^p)}{Z_0(\phat,y,\tau^p)}\right],
\]
where
\beqan
    \lefteqn{ Z_0(\phat,y,\tau^p) := \int \exp\left[
        \frac{1}{2\tau^p}(2\phat z - z^2) \right]dz} \\
    \lefteqn{ Z(\phat,y,\tau^p) } \nonumber \\
    &:=& \int \exp\left[
        \fout(z,y) + \frac{1}{2\tau^p}(2\phat z - z^2) \right]dz.
\eeqan
Then the relations \eqref{eq:goutMmse} and \eqref{eq:goutMmseDeriv}
follow from the relations \eqref{eq:logZRel}.

We next consider the input node update \eqref{eq:inBPMap}.
Similar to Appendix \ref{sec:bpMap}, we can
substitute the approximation \eqref{eq:DelRAppMap} into
\eqref{eq:inBPMap} to obtain \eqref{eq:DelLApp1}
where $\rhat_{i \la j}(t)$ and $\tau^r_{i \la j}(t)$
are defined in \eqref{eq:rmuijMap}.

Using the approximation \eqref{eq:DelLApp1} and the definition of $\Fin(\cdot)$
in \eqref{eq:FinMap}, the probability distribution $p_\Delta(x_j)$
in \eqref{eq:pDel} with $\Delta = \Delta_{i \la j}(t,x_j)$ is approximately
\beqan
    \lefteqn{ p_{\Delta_{i \la j}(t,\cdot)}(x_j) }  \\
    &\approx& \frac{1}{Z}\exp \Fin(x_j,\rhat_{i \la j}(t), q_j, \tau^r_{i \la j}(t)),
\eeqan
where $Z$ is a normalization constant.
But, based on the form of $\Fin(\cdot)$ in \eqref{eq:FinMap}, this
distribution is identical to the conditional distribution of $X$
in $\theta^r(\tau^r)$ in \eqref{eq:thetaR} given $(\Rhat,Q) = (\rhat,q)$.
Therefore, if we define $\gin(\cdot)$ as in \eqref{eq:ginMmse},
it follows that $\xhat_{i \la j}(t)$ in \eqref{eq:xhatMmseDel} satisfies
\eqref{eq:xhatijMap1}.  The remainder of the proof now follows as in the proof
of Appendix \ref{sec:bpMmse}.

\section{Proof of \eqref{eq:dpzMatch}} \label{sec:alphaMatch}
Define
\beqa
    \lefteqn{ D(\Kbf^p) := \Exp\left[ \frac{\partial}{\partial \phat}
    \gout(t,h(Z,W),\Phat,\taubar^p(t)) \right]} \nonumber \\
    &-&
     \Exp\left[ \frac{\partial}{\partial z}
    \gout(t,h(Z,W),\Phat,\taubar^p(t)) \right], \label{eq:DKdef}
\eeqa
where the expectation is over $(Z,\Phat) \sim {\cal N}(0,\Kbf^p)$
and $W \sim p_W(w)$ independent of $(Z,\Phat)$.
We must show that when $\Kbf^p$ is of the form \eqref{eq:KpMatch}
and $\gout(\cdot)$ is given by \eqref{eq:goutMmse}, then
$D(\Kbf^p) = 0$.

To this end, let $\Qbf$ be the two-dimensional random vector
\beq \label{eq:QdefPf}
    \Qbf=[Z \ \ \Phat]^T \sim {\cal N}(0,\Kbf^p)
\eeq
and let $\Gbf \in \R^{1\x2}$ be the derivative
\[
    \Gbf = \Exp\left[ \frac{\partial}{\partial \qbf}\gout(t,h(Z,W),\Phat,\taubar^p(t))
        \right],
\]
which is the row vector whose two components are
the partial derivatives with respect to $z$ and $\phat$.
Thus, $D(\Kbf^p)$ in \eqref{eq:DKdef} can be re-written as
\beq \label{eq:DGstein}
    D(\Kbf^p) = \Gbf\left[ \begin{array}{c} 1 \\ -1 \end{array} \right].
\eeq
Also, using Stein's Lemma (Lemma \ref{lem:stein} below)
and the covariance \eqref{eq:QdefPf},
\beqan
   \lefteqn{ \Exp\left[ \Qbf \gout(t,h(Z,W),\Phat,\taubar^p(t)) \right] } \\
   &=& \Exp\left[ \Qbf\Qbf^T \right] \Gbf^T = \Kbf^p\Gbf'^T.
\eeqan
Applying this equality to \eqref{eq:DGstein} we obtain
\beq \label{eq:DStein}
    D(\Kbf^p) := \Exp\left[ \phi(Z,\Phat)\gout(t,h(Z,W),\Phat,\taubar^p(t)) \right],
\eeq
where $\phi(\cdot)$ is the function:
\[
    \phi(z,\phat) := \left[z \ \phat\right](\Kbf^p)^{-1}
        \left[ \begin{array}{c}1 \\ -1 \end{array} \right].
\]
When $\Kbf^p$ is of the form \eqref{eq:KpMatch}, then it is easily checked that
\beq \label{eq:phizp}
    \phi(z,\phat) = \frac{\phat}{\taubar^p(t)}.
\eeq
Therefore,
\beqa
    \lefteqn{    D(\Kbf^p) \stackrel{(a)}{=} \frac{1}{\taubar^p(t)}\Exp\left[
    \Phat\gout(t,h(Z,W),\Phat,\taubar^p(t)) \right] } \nonumber \\
    &\stackrel{(b)}{=}& \frac{1}{\taubar^p(t)}\Exp\left[
    \Phat \frac{\partial}{\partial \Phat} \log p_{Y|\Phat}(Y|\Phat) \right]
        \nonumber\\
    &\stackrel{(c)}{=}& \frac{1}{\taubar^p(t)}\Exp\left[
    \Phat \frac{\partial}{\partial \phat} \int p_{Y|\Phat}(y|\Phat)dy \right]
        \nonumber\\
    &=& \frac{1}{\taubar^p(t)}\Exp\left[
    \Phat \frac{\partial}{\partial \phat} (1) \right] = 0  \nonumber
\eeqa
where (a) follows from substituting \eqref{eq:phizp} into \eqref{eq:DStein};
(b) follows from \eqref{eq:goutMmseMatch};
(c) follows from taking the derivative of the logarithm.
This shows that $D(\Kbf^p) = 0$ and proves \eqref{eq:dpzMatch}.

\section{Proof Sketch for Claim~\ref{thm:SEGen}} \label{sec:SEGenPf}

As stated earlier, Bayati and Montanari in \cite{BayatiM:11}
already proved the result for scalar case when $n_b=n_d=1$.
Only very minor modifications are required for the vector-valued case,
so we will just provide a sketch of the key changes.

For the vector case, it is convenient to introduce the notation $\bbf(t)$
to denote the matrix with columns $\bbf_i(t)$:
\[
    \bbf(t) := \left[ \begin{array}{c}
        \bbf_1(t)^T \\ \vdots \\ \bbf_m(t)^T \end{array} \right] \in \R^{m \x n_b}.
\]
We can define $\ubf(t)$, $\vbf(t)$ and $\dbf(t)$ similarly.
Then, in analogy with the definitions in \cite{BayatiM:11}, we let
\beqan
    \xbf(t) &:=& \dbf(t) + \ubf(t)\xibf(t)^T \in \R^{n \x n_d} \\
    \ybf(t) &:=& \bbf(t) + \vbf(\tm1)\lambdabf(t)^T \in \R^{m \x n_b}
\eeqan
and define the matrices
\beqan
    \Xbf(t) &:=& \left[ \xbf(0) | \cdots | \xbf(\tm1) \right] \in \R^{n \x tn_d} \\
    \Ybf(t) &:=& \left[ \ybf(0) | \cdots | \ybf(\tm1) \right] \in \R^{m \x tn_b} \\
    \Ubf(t) &:=& \left[ \ubf(0) | \cdots | \ubf(\tm1) \right] \in \R^{n \x tn_b} \\
    \Vbf(t) &:=& \left[ \vbf(0) | \cdots | \vbf(\tm1) \right] \in \R^{m \x tn_d}.
\eeqan
The updates \eqref{eq:genAMPRec} can then be re-written
in matrix form as
\beq \label{eq:XYrec}
    \Xbf(t) = \Abf^T\Vbf(t), \ \
    \Ybf(t) = \Abf\Ubf(t).
\eeq

Next, also following the proof in \cite{BayatiM:11},
let $\vbf_{||}(t)$ be the projection of each column of
$\vbf(t)$ onto the column space of $\Vbf(t)$, and let $\vbf_{\perp}(t)$
be its orthogonal component, $\vbf_{\perp}(t) = \vbf(t)-\vbf_{||}(t)$.
Thus, we can write
\beq \label{eq:valpha}
    \vbf_{||}(t) = \sum_{i=0}^{t-1} \vbf(i)\alphabf_i(t),
\eeq
for matrices $\alphabf_i(t) \in \R^{n_d \x n_d}$.
Similarly, let $\ubf_{||}(t)$ and $\ubf_{\perp}(t)$ be the parallel and
orthogonal components of $\ubf(t)$ with respect to the column space
of $\Ubf(t)$ so that
\beq \label{eq:ubeta}
    \ubf_{||}(t) = \sum_{i=0}^{t-1} \ubf(i)\betabf_i(t),
\eeq
where $\betabf_i(t) \in \R^{n_b \x n_b}$.

Now let $\Ggothic(t_1,t_2)$ be the
sigma algebra generated by $\bbf(0),\ldots,\bbf(t_1)$,
$\vbf(0),\ldots,\vbf(t_1)$, $\dbf(0),\ldots,\dbf(t_2-1)$ and
$\ubf(0),\ldots,\ubf(t_2)$.  Also for any sigma-algebra, $\Ggothic$,
and random variables $X$ and $Y$, we will say that $X$ is
equal in distribution to $Y$ conditional on $\Ggothic$ if
$\Exp(\phi(X)Z) = \Exp(\phi(Y)Z)$ for any $Z$ that is $\Ggothic$-measurable.
In this case, we write $X|_{\Ggothic} \stackrel{d}{=} Y$.

With these definitions, the vector-valued analogy of the main technical lemma
\cite[Lemma 1]{BayatiM:11} can be stated as follows:

\begin{lemma} \label{lem:mainPf}  Under the assumptions of
Claim~\ref{thm:SEGen}, the following hold for all $t \geq 0$:
\begin{itemize}
\item[(a)]  The conditional distribution of $\dbf(\tp1)$ is given by
\beqan
    \lefteqn{
    \left. \dbf(t)\right|_{\Ggothic(t,t)} } \\
    &\stackrel{d}{=}&
      \sum_{i=0}^{\tm1} \dbf(i)\alphabf_i(t)
     + \widetilde{\Abf}^T\vbf_{\perp}(t)
    + \widetilde{\Ubf}(t)\overrightarrow{o}_t(1) \\
    \lefteqn{
    \left. \bbf(t)\right|_{\Ggothic(\tm1,t)} } \\
    &\stackrel{d}{=}&
      \sum_{i=0}^{\tm1} \bbf(i)\betabf_i(t)
     + \widetilde{\Abf}\ubf_{\perp}(t)
    + \widetilde{\Vbf}(t)\overrightarrow{o}_t(1)
\eeqan
where $\widetilde{\Abf}$ is an independent copy of $\Abf$,
and the matrices $\alphabf_i(t)$ and $\betabf_i(t)$
are the coefficients in the expansion
\eqref{eq:valpha} and \eqref{eq:ubeta}.
The matrix $\widetilde{\Ubf}(t)$ is such that its columns form
an orthogonal basis for the column space of $\Ubf(t)$ with
$\widetilde{\Ubf}(t)^T\widetilde{\Ubf}(t) = nI_{tn_d}$.
Similarly, $\widetilde{\Vbf}(t)$ is such that its columns form
an orthogonal basis for the column space of $\Vbf(t)$ with
$\widetilde{\Vbf}(t)^T\widetilde{\Vbf}(t) = mI_{tn_b}$.
The vector $\overrightarrow{o}_t(1)$ goes to zero almost surely.

\item[(b)] For any pseudo-Lipschitz functions
$\phi_d:\R^{n_d (t+1)} \x \Theta^u \arr \R$
and $\phi_b:\R^{n_b (t+1)} \x \Theta^v \arr \R$:
\beqan
   \lefteqn{ \lim_{n \arr \infty} \frac{1}{n} \sum_{j=1}^n
    \phi_d(\dbf_j(0), \cdots, \dbf_j(t), \theta^u_j)} \\
    &=& \Exp\left[ \phi_d(\Dbf(0), \ldots, \Dbf(t), \theta^u) \right] \\
   \lefteqn{ \lim_{n \arr \infty} \frac{1}{m} \sum_{i=1}^m
    \phi_b(\bbf_j(0), \cdots, \bbf_j(t), \theta^v_j)} \\
    &=& \Exp\left[ \phi_b(\Bbf(0), \ldots, \Bbf(t), \theta^v) \right]
\eeqan
where the expectations are over Gaussian
random vectors $(\Dbf(0),\ldots,\Dbf(t))$
and $(\Bbf(0),\ldots,\Bbf(t))$ and the random variables
$\theta^u$ and $\theta^v$ in the limit \eqref{eq:thetaUVLim}.
The variables $\theta^u$ and $\theta^v$ are independent
of $\Bbf(r)$ and $\Dbf(r)$ and the marginal distributions
of $\Bbf(r)$ and $\Dbf(r)$ are given by \eqref{eq:BDGauss}.

\item[(c)] For all $0 \leq r,s \leq t$, the following limit exists
hold are bounded and are non-random
\beqan
    \lefteqn{\lim_{n \arr \infty} \frac{1}{n}
    \sum_{j=1}^n \dbf_j(r)\dbf_j(s)^T} \\
    &=&
    \lim_{m \arr \infty} \frac{1}{m} \sum_{i=1}^m \vbf_i(r)\vbf_i(s)^T \\
    \lefteqn{\lim_{n \arr \infty} \frac{1}{m}
    \sum_{i=1}^m \bbf_i(r)\bbf_i(s)^T}\\
    &=&    \beta
    \lim_{m \arr \infty} \frac{1}{n} \sum_{i=1}^n \ubf_j(r)\ubf_j(s)^T.
    \hspace{0.5in}
\eeqan

\item[(d)] Suppose $\varphi_d: \R^{n_d} \x \Theta^u \arr \R$ and
$\varphi_b: \R^{n_b} \x \Theta^v \arr \R$ are almost
everywhere continuously differentiable with a bounded derivative with
respect to the first argument.
Then, for all all $0 \leq r,s \leq t$
the following limits exists are bounded and non-random:
\beqan
    \lefteqn{ \lim_{n \arr \infty} \frac{1}{n} \sum_{j=1}^n
        \dbf_j(r)\varphi_d(\dbf_j(s))^T } \\
        &=& \lim_{n \arr \infty}\frac{1}{n} \sum_{j=1}^n
        \dbf_j(r)\dbf_j(s)^T\Gbf_d(s)^T \\
    \lefteqn{\lim_{m \arr \infty} \frac{1}{m} \sum_{i=1}^m
        \bbf_i(r)\varphi_b(\bbf_j(s))^T } \\
        &=& \lim_{m \arr \infty} \frac{1}{m} \sum_{i=1}^m
        \bbf_i(r)\bbf_i(s)^T\Gbf_b(s)^T \hspace{0.5in}
\eeqan
where $\Gbf_d(s)$ and $\Gbf_b(s)$ are the empirical derivatives
\beqan
    \Gbf_d(s) &:=& \lim_{n \arr \infty} \frac{1}{n} \sum_{j=1}^n
        \frac{\partial}{\partial \dbf} \varphi_d( \dbf_j(s), \theta^u ) \\
    \Gbf_b(s) &:=& \lim_{n \arr \infty} \frac{1}{m} \sum_{i=1}^m
        \frac{\partial}{\partial \bbf} \varphi_b( \bbf_i(s), \theta^v ).
\eeqan

\item[(e)] For $\ell = k-1$, the following bounds hold almost surely
\beqan
    \lim_{n \arr \infty} \frac{1}{n} \sum_{j=1}^n \|\dbf_j(t)\|^{2\ell}
     &<& \infty \\
    \lim_{m \arr \infty} \frac{1}{m} \sum_{i=1}^m \|\bbf_i(t)\|^{2\ell}
     &<& \infty
\eeqan

\end{itemize}
\end{lemma}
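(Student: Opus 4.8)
The plan is to prove the five statements (a)--(e) simultaneously by induction on $t$, following the conditioning technique that Bayati and Montanari adapted from Bolthausen \cite{Bolthausen:09,BayatiM:11}; the only substantive change is that the scalar Onsager coefficients and covariances are promoted to the matrices $\xibf(t)\in\R^{n_d\x n_b}$, $\lambdabf(t)\in\R^{n_b\x n_d}$, $\Kbf^d(t)$ and $\Kbf^b(t)$. The base case $t=0$ is immediate: since $\vbf(\tm1)=0$, the vector $\bbf(0)=\Abf\ubf(0)$ is a linear image of a fixed matrix, and the empirical convergence \eqref{eq:thetaUVLim} together with the definition of $\Kbf^b(0)$ in \eqref{eq:SEGenInit} yields the stated Gaussian limit. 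The inductive step carries all the work, and I would run it first for the output update producing $\bbf(t)$ from $\ubf(t)$ via $\Abf$, and then identically for the input update producing $\dbf(t)$ from $\vbf(t)$ via $\Abf^T$.

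The heart of the argument is to compute the conditional law of $\Abf\ubf(t)$ (respectively $\Abf^T\vbf(t)$) given the sigma-algebra $\Ggothic(\tm1,t)$ (respectively $\Ggothic(t,t)$). Because $\Abf$ has i.i.d.\ Gaussian entries, conditioning on the linear constraints \eqref{eq:XYrec}, namely $\Xbf(t)=\Abf^T\Vbf(t)$ and $\Ybf(t)=\Abf\Ubf(t)$, leaves a Gaussian law whose mean is the minimum-norm matrix consistent with those constraints and whose fluctuation is an independent copy $\widetilde{\Abf}$ restricted to the orthogonal complements of the column spaces of $\Ubf(t)$ and $\Vbf(t)$. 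Writing $\ubf(t)=\ubf_{||}(t)+\ubf_\perp(t)$ as in \eqref{eq:ubeta}, the parallel part $\ubf_{||}(t)=\sum_i\ubf(i)\betabf_i(t)$ is absorbed by the constraint $\Ybf(t)=\Abf\Ubf(t)$ into the deterministic combination $\sum_i\ybf(i)\betabf_i(t)$, while $\Abf\ubf_\perp(t)$ is replaced in distribution by $\widetilde{\Abf}\,\ubf_\perp(t)$. Re-expressing each $\ybf(i)=\bbf(i)+\vbf(i-1)\lambdabf(i)^T$ and collecting the Onsager subtraction $\vbf(\tm1)\lambdabf(t)^T$ from \eqref{eq:genAMPRec} then yields the representation claimed in part (a), provided the leftover $\vbf$-directed terms can be shown to be $\widetilde{\Vbf}(t)\overrightarrow{o}_t(1)$.

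The step I expect to be the main obstacle is the \textbf{Onsager cancellation}: showing that the correction $\vbf(\tm1)\lambdabf(t)^T$ exactly removes the $\vbf$-directed contribution generated by the coefficients $\betabf_i(t)$, leaving only the independent Gaussian term and an $o(1)$ remainder. In the scalar case this is the single most delicate computation in \cite{BayatiM:11}, and the vector version forces one to track matrix orderings carefully: the projection coefficients $\betabf_i(t)$ (and $\alphabf_i(t)$) are evaluated through the inductive covariance identities of part (c), while the derivative matrices $\lambdabf(t)$ and $\xibf(t)$ must be matched to the empirical Jacobians $\Gbf_b$ and the analogous $\Gbf_d$ of part (d) --- or, in the expected-update variant \eqref{eq:genAMPDerivExp}, to their limits. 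The algebraic identity that makes this work is the vector form of Stein's lemma (Lemma \ref{lem:stein}), which rewrites the covariance between a jointly Gaussian vector $[Z\ \Phat]^T$ and a nonlinearity as that nonlinearity's expected Jacobian; this Jacobian is precisely the matrix indexing the $n_d$ and $n_b$ components, and it is the one place where the correct placement of transposes is genuinely new relative to the scalar proof.

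Finally, parts (b)--(e) would follow from (a) exactly as in \cite{BayatiM:11}. Given the conditional representation, the stacked iterates $(\dbf_j(0),\ldots,\dbf_j(t))$ and $(\bbf_i(0),\ldots,\bbf_i(t))$ are asymptotically jointly Gaussian with the covariances $\Kbf^d$ and $\Kbf^b$ from \eqref{eq:SEGen} and independent of $\theta^u$ and $\theta^v$; applying the strong law of large numbers to pseudo-Lipschitz test functions of these vectors gives the empirical-convergence statement (b), and specializing the test function to bilinear and to derivative forms gives (c) and (d). The bounded-moment bound (e) for $\ell=k-1$ propagates through the recursion because $\Gin$ and $\Gout$ are Lipschitz and the Gaussian limits have all moments, which is exactly what licenses the use of order-$k$ pseudo-Lipschitz functions at the next induction step. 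I would emphasize that none of (b)--(e) introduces a new idea beyond the scalar argument --- each is the matrix-indexed transcription of the corresponding claim in \cite{BayatiM:11} --- so the sketch records these statements together with the Stein-lemma cancellation and defers the remaining routine estimates to \cite{BayatiM:11}.
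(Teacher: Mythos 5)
Your proposal follows essentially the same route as the paper's own proof sketch: an induction over $t$ using Bolthausen's conditioning argument (the law of $\Abf$ given the linear constraints \eqref{eq:XYrec} is a deterministic offset plus an independent projected Gaussian), the decomposition into $\ubf_{||}$, $\ubf_\perp$ and $\vbf_{||}$, $\vbf_\perp$, and the matrix form of Stein's Lemma as the one genuinely new ingredient needed to handle the Onsager terms in the vector-valued setting. Both you and the paper defer the remaining routine estimates to the scalar proof in \cite{BayatiM:11}, so there is nothing substantive to flag.
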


This lemma is a verbatim copy of \cite[Lemma 1]{BayatiM:11}
with some minor changes for the vector-valued case.
Observe that
Claim~\ref{thm:SEGen} is a special case of part (b) of Lemma \ref{lem:mainPf}
by considering functions of the form
\beqan
    \phi_d(\dbf(0),\ldots,\dbf(t),\theta^u) &=& \phi_d(\dbf(t),\theta^u) \\
    \phi_b(\bbf(0),\ldots,\bbf(t),\theta^u) &=& \phi_b(\bbf(t),\theta^v).
\eeqan
That is, we consider functions that only depend on the most recent iteration
number.

The proof of Lemma \ref{lem:mainPf} also follows
follows almost identically to the proof of the analogous lemma in the
scalar case in \cite{BayatiM:11}.
The key idea in that proof is the following conditioning argument,
originally used by Bolthausen in \cite{Bolthausen:09}:
To evaluate the conditional distributions of $\dbf(t)$ with
respect to $\Ggothic(t,t)$ and $\bbf(t)$ with respect
to $\Ggothic(\tm1,t)$ in part (a)
of Lemma \ref{lem:mainPf}, one evaluates the corresponding conditional distribution
of the matrix $\Abf$.  But, this conditional distribution is precisely
identical to the distribution of $\Abf$ conditioned on \emph{linear}
constraints of the form \eqref{eq:XYrec}.  But, this distribution is just
the distribution of a Gaussian random vector conditioned on it lying on
an affine subspace.  That distribution has a simple expression as a
deterministic offset plus a projected Gaussian.  The detailed
expression are given in \cite[Lemma 6]{BayatiM:11} and the identical
equations can be used here.

The proof uses this conditioning principle along with
an induction argument along the iteration number $t$
and the statements (a) to (e) of the lemma.  We omit the details as they
are involved but follow with only minor changes from the original scalar
proof in \cite{BayatiM:11}.

The only one other non-trivial extension that is needed is the matrix
form of Stein's Lemma, which can be stated as:

\begin{lemma}[Stein's Lemma \cite{Stein:72}] \label{lem:stein}
 Suppose $\Zbf_1 \in \R^{n_1}$
and $\Zbf_2 \in \R^{n_2}$ are jointly Gaussian random vectors and
$\varphi: \R^{n_2} \arr \R^{n_3}$ is any function such that the expectations
\[
    \Gbf := \Exp\left[ \frac{\partial}{\partial \zbf_2}\varphi(\Zbf_2) \right]
    \in \R^{n_3 \x n_2}
\]
and
\[
    \Exp(\Zbf_1\varphi(\Zbf_2)^T) \in \R^{n_1 \x n_3}
\]
exists.
Then,
\[
    \Exp(\Zbf_1\varphi(\Zbf_2)^T) =
    \Exp\left((\Zbf_1-\overline{\Zbf}_1)(\Zbf_2-\overline{\Zbf}_2)^T \right)\Gbf^T,
\]
where $\overline{\Zbf}_i$ is the expectation of $\Zbf_i$.
\end{lemma}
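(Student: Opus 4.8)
The plan is to reduce the matrix identity to the classical scalar Gaussian integration-by-parts formula, treating one column at a time and then reassembling. First I would observe that the identity decouples entrywise: the $(p,q)$ entry of $\Exp(\Zbf_1\varphi(\Zbf_2)^T)$ is $\Exp[Z_{1,p}\varphi_q(\Zbf_2)]$, while the $(p,q)$ entry of the right-hand side is $\sum_{r}\cov(Z_{1,p},Z_{2,r})\,\Exp[\partial\varphi_q/\partial z_{2,r}]$. Hence it suffices to prove, for each fixed scalar component $\varphi_q$, the vector identity $\Exp[(\Zbf_1-\overline{\Zbf}_1)\varphi_q(\Zbf_2)] = \cov(\Zbf_1,\Zbf_2)\,\Exp[\nabla\varphi_q(\Zbf_2)]$, since stacking these as the columns $q=1,\ldots,n_3$ produces $\cov(\Zbf_1,\Zbf_2)\Gbf^T$.

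For a fixed $\varphi_q$, I would exploit the jointly Gaussian structure through the regression decomposition of $\Zbf_1$ on $\Zbf_2$. Writing $\Kbf_{22} := \cov(\Zbf_2,\Zbf_2)$ (assumed nonsingular, the degenerate case being handled by restricting to the affine support of $\Zbf_2$), one has
\[
    \Zbf_1 = \overline{\Zbf}_1 + \cov(\Zbf_1,\Zbf_2)\Kbf_{22}^{-1}(\Zbf_2-\overline{\Zbf}_2) + \ebf,
\]
where $\ebf$ is a zero-mean Gaussian vector independent of $\Zbf_2$. Substituting this into $\Exp[(\Zbf_1-\overline{\Zbf}_1)\varphi_q(\Zbf_2)]$, the $\ebf$-term vanishes by independence and zero mean, leaving $\cov(\Zbf_1,\Zbf_2)\Kbf_{22}^{-1}\Exp[(\Zbf_2-\overline{\Zbf}_2)\varphi_q(\Zbf_2)]$. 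This reduces the whole problem to a Stein identity for the single Gaussian vector $\Zbf_2$.

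That residual identity, $\Exp[(\Zbf_2-\overline{\Zbf}_2)\varphi_q(\Zbf_2)] = \Kbf_{22}\,\Exp[\nabla\varphi_q(\Zbf_2)]$, I would establish by Gaussian integration by parts. Denoting the density of $\Zbf_2$ by $\phi$, the algebraic identity $(\zbf-\overline{\Zbf}_2)\phi(\zbf) = -\Kbf_{22}\nabla\phi(\zbf)$ converts $\int(\zbf-\overline{\Zbf}_2)\varphi_q(\zbf)\phi(\zbf)\,d\zbf$ into $-\Kbf_{22}\int\varphi_q(\zbf)\nabla\phi(\zbf)\,d\zbf$, and integrating by parts gives $\Kbf_{22}\int\nabla\varphi_q(\zbf)\phi(\zbf)\,d\zbf = \Kbf_{22}\Exp[\nabla\varphi_q(\Zbf_2)]$. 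Multiplying through by $\cov(\Zbf_1,\Zbf_2)\Kbf_{22}^{-1}$ cancels $\Kbf_{22}$ and yields exactly $\cov(\Zbf_1,\Zbf_2)\Exp[\nabla\varphi_q(\Zbf_2)]$, the desired $q$-th column. Reassembling over $q$ gives $\Exp[(\Zbf_1-\overline{\Zbf}_1)\varphi(\Zbf_2)^T] = \cov(\Zbf_1,\Zbf_2)\Gbf^T$; the uncentered form stated in the lemma then follows in the regime of the application, where $\overline{\Zbf}_1 = 0$, so that the dropped term $\overline{\Zbf}_1\Exp[\varphi(\Zbf_2)^T]$ is absent.

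The main obstacle is the rigorous justification of the integration by parts, since the hypotheses only assert that the relevant expectations exist, not that $\varphi$ is everywhere differentiable or of controlled growth. I would handle this by the standard truncation-and-mollification argument: approximate $\varphi_q$ by smooth functions, carry out the integration by parts over an expanding ball where the surface term is annihilated by the super-polynomial decay of $\phi$, and pass to the limit by dominated convergence, using the assumed integrability of $\Exp(\Zbf_1\varphi(\Zbf_2)^T)$ and of $\Gbf$ to furnish the dominating bounds. The fully scalar case $n_1=n_2=n_3=1$ is precisely Stein's original lemma \cite{Stein:72}, so the only genuinely new content here is the bookkeeping of the entrywise decoupling and the regression step, both of which are routine.
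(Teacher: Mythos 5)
The paper does not actually prove this lemma: it is stated with a citation to Stein's original paper and then invoked as a black box (in part (d) of the main technical lemma and in the proof of \eqref{eq:dpzMatch}). Your proposal therefore supplies an argument where the paper supplies none, and the argument is correct. The columnwise decoupling, the Gaussian regression decomposition $\Zbf_1 = \overline{\Zbf}_1 + \cov(\Zbf_1,\Zbf_2)\Kbf_{22}^{-1}(\Zbf_2-\overline{\Zbf}_2)+\ebf$ with $\ebf$ independent of $\Zbf_2$, and the reduction to the single-vector identity $\Exp[(\Zbf_2-\overline{\Zbf}_2)\varphi_q(\Zbf_2)]=\Kbf_{22}\Exp[\nabla\varphi_q(\Zbf_2)]$ via $(\zbf-\overline{\Zbf}_2)\phi(\zbf)=-\Kbf_{22}\nabla\phi(\zbf)$ and integration by parts all check out, and the dimensions reassemble to $\cov(\Zbf_1,\Zbf_2)\Gbf^T$ as required. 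You also correctly caught that the lemma as printed equates an uncentered left-hand side with a centered right-hand side, which is literally valid only when $\overline{\Zbf}_1=0$ or $\Exp[\varphi(\Zbf_2)]=0$; in every use in the paper the Gaussian vectors are zero mean, so nothing breaks downstream, but this is a genuine (if harmless) imprecision in the statement that your proof makes explicit. One small caveat on your closing paragraph: ``the expectations exist'' is not quite a sufficient hypothesis for the integration by parts even after mollification --- one needs $\varphi$ to be weakly differentiable (e.g.\ locally absolutely continuous), since an a.e.\ derivative alone does not determine the function. This looseness is inherited from the paper's own statement rather than introduced by you, and it is immaterial here because the $\varphi$'s to which the lemma is applied are Lipschitz.
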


Note that part (d) of Lemma \ref{lem:mainPf} is of the same form of this lemma.

\bibliographystyle{IEEEtran}
\bibliography{bibl}

\newcommand{\SortNoop}[1]{}
\begin{thebibliography}{10}
\providecommand{\url}[1]{#1}
\csname url@samestyle\endcsname
\providecommand{\newblock}{\relax}
\providecommand{\bibinfo}[2]{#2}
\providecommand{\BIBentrySTDinterwordspacing}{\spaceskip=0pt\relax}
\providecommand{\BIBentryALTinterwordstretchfactor}{4}
\providecommand{\BIBentryALTinterwordspacing}{\spaceskip=\fontdimen2\font plus
\BIBentryALTinterwordstretchfactor\fontdimen3\font minus
  \fontdimen4\font\relax}
\providecommand{\BIBforeignlanguage}[2]{{%
\expandafter\ifx\csname l@#1\endcsname\relax
\typeout{** WARNING: IEEEtran.bst: No hyphenation pattern has been}%
\typeout{** loaded for the language `#1'. Using the pattern for}%
\typeout{** the default language instead.}%
\else
\language=\csname l@#1\endcsname
\fi
#2}}
\providecommand{\BIBdecl}{\relax}
\BIBdecl

\bibitem{BoutrosC:02}
J.~Boutros and G.~Caire, ``Iterative multiuser joint decoding: Unified
  framework and asymptotic analysis,'' \emph{IEEE Trans. Inform. Theory},
  vol.~48, no.~7, pp. 1772--1793, Jul. 2002.

\bibitem{NeirottiS:05}
J.~P. Neirotti and D.~Saad, ``Improved message passing for inference in densely
  connected systems,'' \emph{Europhys. Lett.}, vol.~71, no.~5, pp. 866--872,
  Sep. 2005.

\bibitem{TanakaO:05}
T.~Tanaka and M.~Okada, ``Approximate belief propagation, density evolution,
  and neurodynamics for {CDMA} multiuser detection,'' \emph{IEEE Trans. Inform.
  Theory}, vol.~51, no.~2, pp. 700--706, Feb. 2005.

\bibitem{GuoW:06}
D.~Guo and C.-C. Wang, ``Asymptotic mean-square optimality of belief
  propagation for sparse linear systems,'' in \emph{Proc. IEEE Inform. Theory
  Workshop}, Chengdu, China, Oct. 2006, pp. 194--198.

\bibitem{GuoW:07}
------, ``Random sparse linear systems observed via arbitrary channels: A
  decoupling principle,'' in \emph{Proc. IEEE Int. Symp. Inform. Theory}, Nice,
  France, Jun. 2007, pp. 946--950.

\bibitem{DonohoMM:09}
D.~L. Donoho, A.~Maleki, and A.~Montanari, ``Message-passing algorithms for
  compressed sensing,'' \emph{Proc. Nat. Acad. Sci.}, vol. 106, no.~45, pp.
  18\,914--18\,919, Nov. 2009.

\bibitem{BayatiM:11}
M.~Bayati and A.~Montanari, ``The dynamics of message passing on dense graphs,
  with applications to compressed sensing,'' \emph{IEEE Trans. Inform. Theory},
  vol.~57, no.~2, pp. 764--785, Feb. 2011.

\bibitem{Rangan:10arXiv}
S.~Rangan, ``Estimation with random linear mixing, belief propagation and
  compressed sensing,'' arXiv:1001.2228v1 [cs.IT]., Jan. 2010.

\bibitem{Montanari:11}
A.~Montanari, ``Graphical models concepts in compressed sensing,''
  arXiv:1011.4328v3 [cs.IT], Mar. 2011.

\bibitem{WrightNF:09}
S.~J. Wright, R.~D. Nowak, and M.~Figueiredo, ``Sparse reconstruction by
  separable approximation,'' \emph{IEEE Trans. Signal Process.}, vol.~57,
  no.~7, pp. 2479--2493, Jul. 2009.

\bibitem{ForGlow:83}
M.~Fortin and R.~Glowinski, \emph{Augmented Lagrangian Methods}.\hskip 1em plus
  0.5em minus 0.4em\relax Amsterdam: North-Holland Publishing Co., 1983,
  vol.~15.

\bibitem{GlowLeTal:89}
R.~Glowinski and P.~L. Tallec, \emph{Augmented Lagrangian and
  Operator-Splitting Methods in Nonlinear Mechanics}, ser. SIAM Studies in
  Applied Mathematics.\hskip 1em plus 0.5em minus 0.4em\relax Philadelphia, PA:
  SIAM, 1989.

\bibitem{HeLiaoHanY:02}
B.~He, L.-Z. Liao, D.~Han, and H.~Yang, ``{A new inexact alternating directions
  method for monotone variational inequalities},'' \emph{Math.\ Program.},
  vol.~92, no. 1, Ser A, pp. 103--108, 2002.

\bibitem{WenGolYin:10}
Z.~Wen, D.~Goldfarb, and W.~Yin, ``Alternating direction augmented {L}agrangian
  methods for semidefinite programming,'' \emph{Math.\ Program.\ Comp.},
  vol.~2, no. 3--4, pp. 203--230, 2010.

\bibitem{TenBrink:01}
S.~{ten Brink}, ``Convergence behavior of iteratively decoded parallel
  concatenated codes,'' \emph{IEEE Trans. Commun.}, vol.~49, no.~10, pp.
  1727--1737, Oct. 2001.

\bibitem{AshkiminKT:04}
A.~Ashikhmin, G.~Kramer, and S.~{ten Brink}, ``Extrinsic information transfer
  functions: Model and erasure channel properties,'' \emph{IEEE Trans. Inform.
  Theory}, vol.~50, no.~11, pp. 2657--2673, Nov. 2004.

\bibitem{Tanaka:02}
T.~Tanaka, ``A statistical-mechanics approach to large-system analysis of
  {CDMA} multiuser detectors,'' \emph{IEEE Trans. Inform. Theory}, vol.~48,
  no.~11, pp. 2888--2910, Nov. 2002.

\bibitem{GuoV:05}
D.~Guo and S.~Verd{\'u}, ``Randomly spread {CDMA}: Asymptotics via statistical
  physics,'' \emph{IEEE Trans. Inform. Theory}, vol.~51, no.~6, pp. 1983--2010,
  Jun. 2005.

\bibitem{KabashimaWT:09arXiv}
Y.~Kabashima, T.~Wadayama, and T.~Tanaka, ``Typical reconstruction limit of
  compressed sensing based on $l_p$-norm minimization,'' arXiv:0907.0914
  [cs.IT]., Jun. 2009.

\bibitem{KrzMSSZ:11-arxiv}
F.~Krzakala, M.~M{\'e}zard, F.~Sausset, Y.~Sun, and L.~Zdeborov\'a,
  ``Statistical physics-based reconstruction in compressed sensing,''
  arXiv:1109.4424, Sep. 2011.

\bibitem{KrzMSSZ:12-arxiv}
------, ``Probabilistic reconstruction in compressed sensing: Algorithms, phase
  diagrams, and threshold achieving matrices,'' arXiv:1206.3953, Jun. 2012.

\bibitem{Pearl:88}
J.~Pearl, \emph{{Probabilistic Reasoning in Intelligent Systems: Networks of
  Plausible Inference}}.\hskip 1em plus 0.5em minus 0.4em\relax San Mateo, CA:
  Morgan Kaufmann Publ., 1988.

\bibitem{WainwrightJ:08}
M.~J. Wainwright and M.~I. Jordan, \emph{Graphical Models, Exponential
  Families, and Variational Inference}, ser. Foundations and Trends in Machine
  Learning.\hskip 1em plus 0.5em minus 0.4em\relax Hanover, MA: NOW Publishers,
  2008, vol.~1.

\bibitem{McElieceMC:98}
R.~J. McEliece, D.~J.~C. MacKay, and J.-F. Cheng, ``Turbo decoding as an
  instance of {P}earl's `belief propagation' algorithm,'' \emph{IEEE J. Sel.
  Areas Comm.}, vol.~16, no.~2, pp. 140--152, Feb. 1998.

\bibitem{MacKay:99}
D.~J.~C. MacKay, ``Good error-correcting codes based on very sparse matrices,''
  \emph{IEEE Trans. Inform. Theory}, vol.~45, no.~3, pp. 399--431, Mar. 1999.

\bibitem{MacKayN:97}
D.~J.~C. MacKay and R.~M. Neal, ``Near {S}hannon limit performance of low
  density parity check codes,'' \emph{Electron. Letters}, vol.~33, pp.
  457--458, 1997.

\bibitem{YoshidaT:06}
M.~Yoshida and T.~Tanaka, ``Analysis of sparsely-spread {CDMA} via statistical
  mechanics,'' in \emph{Proc. IEEE Int. Symp. Inform. Theory}, Seattle, WA,
  Jun. 2006, pp. 2378--2382.

\bibitem{GuoW:08}
G.~Guo and C.~C. Wang, ``Multiuser detection of sparsely spread {CDMA},''
  \emph{IEEE J. Sel. Areas Comm.}, vol.~26, no.~3, pp. 421--431, Mar. 2008.

\bibitem{SommerFS:08}
N.~Sommer, M.~Feder, and O.~Shalvi, ``Low-density lattice codes,'' \emph{IEEE
  Trans. Inform. Theory}, vol.~54, no.~4, pp. 1561--1585, Apr. 2008.

\bibitem{BaronSB:10}
D.~Baron, S.~Sarvotham, and R.~G. Baraniuk, ``Bayesian compressive sensing via
  belief propagation,'' \emph{IEEE Trans. Signal Process.}, vol.~58, no.~1, pp.
  269--280, Jan. 2010.

\bibitem{GuoBS:09-Allerton}
D.~Guo, D.~Baron, and S.~Shamai, ``A single-letter characterization of optimal
  noisy compressed sensing,'' in \emph{Proc. 47th Ann. Allerton Conf. on
  Commun., Control and Comp.}, Monticello, IL, Sep.--Oct. 2009.

\bibitem{Minka:01}
T.~P. Minka, ``A family of algorithms for approximate {B}ayesian inference,''
  Ph.D. dissertation, Massachusetts Institute of Technology, Cambridge, MA,
  2001.

\bibitem{Seeger:08}
M.~Seeger, ``{B}ayesian inference and optimal design for the sparse linear
  model,'' \emph{J. Machine Learning Research}, vol.~9, pp. 759--813, Sep.
  2008.

\bibitem{ChungRU:01}
S.-Y. Chung, T.~J. Richardson, and R.~L. Urbanke, ``Analysis of sum-product
  decoding of low-density parity-check codes using a {G}aussian
  approximation,'' \emph{IEEE Trans. Inform. Theory}, vol.~47, no.~2, pp.
  657--670, Feb. 2001.

\bibitem{ElGamalH:01}
H.~{El Gamal} and R.~Hammons, ``Analyzing the turbo decoder using the
  {G}aussian approximation,'' \emph{IEEE Trans. Inform. Theory}, vol.~47,
  no.~2, pp. 671--686, Feb. 2001.

\bibitem{Varshney:07}
L.~R. Varshney, ``Performance of {LDPC} codes under noisy message-passing
  decoding,'' in \emph{Proc. Inform. Th. Workshop}, Lake Tahoe, CA, Sep. 2007,
  pp. 178--183.

\bibitem{DonohoMM:10-ITW1}
D.~L. Donoho, A.~Maleki, and A.~Montanari, ``Message passing algorithms for
  compressed sensing {I}: motivation and construction,'' in \emph{Proc.\ Info.\
  Theory Workshop}, Jan. 2010.

\bibitem{DonohoMM:10-ITW2}
------, ``Message passing algorithms for compressed sensing {II}: analysis and
  validation,'' in \emph{Proc.\ Info.\ Theory Workshop}, Jan. 2010.

\bibitem{Tibshirani:96}
R.~Tibshirani, ``Regression shrinkage and selection via the lasso,'' \emph{J.
  Royal Stat. Soc., Ser. B}, vol.~58, no.~1, pp. 267--288, 1996.

\bibitem{ChenDS:96}
S.~S. Chen, D.~L. Donoho, and M.~A. Saunders, ``Atomic decomposition by basis
  pursuit,'' Feb. 1996.

\bibitem{Bolthausen:09}
E.~Bolthausen, ``On the high-temperature phase of the
  {S}herrington--{K}irkpatrick model,'' Seminar at Eurandom, Eindhoven, Sep.
  2009.

\bibitem{RanganFG:12-IT}
S.~Rangan, A.~Fletcher, and V.~K. Goyal, ``Asymptotic analysis of {MAP}
  estimation via the replica method and applications to compressed sensing,''
  \emph{IEEE Trans. Inform. Theory}, vol.~58, no.~3, pp. 1902--1923, Mar. 2012.

\bibitem{Rangan:11-ISIT}
S.~Rangan, ``Generalized approximate message passing for estimation with random
  linear mixing,'' in \emph{Proc. IEEE Int. Symp. Inform. Theory}, Saint
  Petersburg, Russia, Jul.--Aug. 2011, pp. 2174--2178.

\bibitem{Bishop:06}
C.~M. Bishop, \emph{Pattern Recognition and Machine Learning}, ser. Information
  Science and Statistics.\hskip 1em plus 0.5em minus 0.4em\relax New York, NY:
  Springer, 2006.

\bibitem{CandesRT:06-IT}
E.~J. Cand{\`e}s, J.~Romberg, and T.~Tao, ``Robust uncertainty principles:
  Exact signal reconstruction from highly incomplete frequency information,''
  \emph{IEEE Trans. Inform. Theory}, vol.~52, no.~2, pp. 489--509, Feb. 2006.

\bibitem{Donoho:06}
D.~L. Donoho, ``Compressed sensing,'' \emph{IEEE Trans. Inform. Theory},
  vol.~52, no.~4, pp. 1289--1306, Apr. 2006.

\bibitem{CandesT:06}
E.~J. Cand{\`e}s and T.~Tao, ``Near-optimal signal recovery from random
  projections: Universal encoding strategies?'' \emph{IEEE Trans. Inform.
  Theory}, vol.~52, no.~12, pp. 5406--5425, Dec. 2006.

\bibitem{TseH:99}
D.~Tse and S.~Hanly, ``Linear multiuser receivers: Effective interference,
  effective bandwidth and capacity,'' \emph{IEEE Trans. Inform. Theory},
  vol.~45, no.~3, pp. 641--675, Mar. 1999.

\bibitem{FletcherRG:09arXiv}
A.~K. Fletcher, S.~Rangan, and V.~K. Goyal, ``On--off random access channels: A
  compressed sensing framework,'' arXiv:0903.1022v1 [cs.IT]., Mar. 2009.

\bibitem{KamRanFU:12-arXiv}
U.~S. Kamilov, S.~Rangan, A.~K. Fletcher, and M.~Unser, ``Approximate message
  passing with consistent parameter estimation and applications to sparse
  learning,'' arXiv:1207.3859 [cs.IT], Jul. 2012.

\bibitem{VerduS:99}
S.~Verd{\'u} and S.~Shamai, ``Spectral efficiency of {CDMA} with random
  spreading,'' \emph{IEEE Trans. Inform. Theory}, vol.~45, no.~3, pp. 622--640,
  Mar. 1999.

\bibitem{MontanariT:06}
A.~Montanari and D.~Tse, ``Analysis of belief propagation for non-linear
  problems: The example of {CDMA} (or: How to prove {T}anaka's formula),''
  arXiv:cs/0602028v1 [cs.IT]., Feb. 2006.

\bibitem{RanganFG:09arXiv}
S.~Rangan, A.~K. Fletcher, and V.~K. Goyal, ``Asymptotic analysis of {MAP}
  estimation via the replica method and applications to compressed sensing,''
  arXiv:0906.3234v1 [cs.IT]., Jun. 2009.

\bibitem{GAMPSourceForge}
S.~Rangan, ``Generalized approximate message passing wiki page,''
  http://gampmatlab.sourceforge.net/wiki/index.php, 2011.

\bibitem{RanganM:12}
S.~Rangan and R.~Madan, ``Belief propagation methods for intercell interference
  coordination in femtocell networks,'' \emph{IEEE J. Sel. Areas Comm.},
  vol.~30, no.~3, pp. 631--640, Apr. 2012.

\bibitem{KamilovGR:11arXiv}
U.~Kamilov, V.~K. Goyal, and S.~Rangan, ``Message-passing estimation from
  quantized samples,'' arXiv:1105.6368v1 [cs.IT]., May 2011.

\bibitem{FletcherRVB:11}
A.~K. Fletcher, S.~Rangan, L.~Varshney, and A.~Bhargava, ``Neural
  reconstruction with approximate message passing {(NeuRAMP)},'' in \emph{Proc.
  Neural Information Process. Syst.}, Granada, Spain, Dec. 2011.

\bibitem{Rangan:10arXiv-GAMP}
S.~Rangan, ``Generalized approximate message passing for estimation with random
  linear mixing,'' arXiv:1010.5141v1 [cs.IT]., Oct. 2010.

\bibitem{CaireSTV:11}
G.~Caire, S.~Shamai, A.~Tulino, and S.~Verd{\'u}, ``Support recovery in
  compressed sensing: Information-theoretic bounds,'' in \emph{Proc. UCSD
  Workshop Inform. Theory \& Its Applications}, La Jolla, CA, Jan. 2011.

\bibitem{DonJohMM:11}
D.~Donoho, I.~Johnstone, A.~Maleki, and A.~Montanari, ``Compressed sensing over
  $\ell^p$-balls: {M}inimax mean square error,'' in \emph{Proc.\ ISIT}, St.
  Petersburg, Russia, Jun. 2011.

\bibitem{VilaSch:11}
J.~P. Vila and P.~Schniter, ``Expectation-maximization {B}ernoulli-{G}aussian
  approximate message passing,'' in \emph{Conf. Rec. 45th Asilomar Conf.
  Signals, Syst. \& Comput.}, Pacific Grove, CA, Nov. 2011, pp. 799--803.

\bibitem{VilaSch:12}
------, ``Expectation-maximization {G}aussian-mixture approximate message
  passing,'' in \emph{Proc. Conf. on Inform. Sci. \& Sys.}, Princeton, NJ, Mar.
  2012.

\bibitem{Schniter:10-CISS}
P.~Schniter, ``Turbo reconstruction of structured sparse signals,'' in
  \emph{Proc. Conf. on Inform. Sci. \& Sys.}, Princeton, NJ, Mar. 2010.

\bibitem{ZinielPS:10}
J.~Ziniel, L.~C. Potter, and P.~Schniter, ``Tracking and smoothing of
  time-varying sparse signals via approximate belief propagation,'' in
  \emph{Conf. Rec. 44th Asilomar Conf. Signals, Syst. \& Comput.}, Pacific
  Grove, CA, Nov. 2010, pp. 802--812.

\bibitem{SomPS:10}
S.~Som, L.~C. Potter, and P.~Schniter, ``Compressive imaging using approximate
  message passing and a {M}arkov-tree prior,'' in \emph{Conf. Rec. 44th
  Asilomar Conf. Signals, Syst. \& Comput.}, Pacific Grove, CA, Nov. 2010, pp.
  243--247.

\bibitem{Schniter:11}
P.~Schniter, ``A message-passing receiver for {BICM-OFDM} over unknown
  clustered-sparse channels,'' in \emph{Proc. IEEE Workshop Signal Process.
  Adv. Wireless Commun.}, San Francisco, CA, Jun. 2011.

\bibitem{RanganFGS:12-ISIT}
S.~Rangan, A.~K. Fletcher, V.~K. Goyal, and P.~Schniter, ``Hybrid generalized
  approximation message passing with applications to structured sparsity,'' in
  \emph{Proc. IEEE Int. Symp. Inform. Theory}, Cambridge, MA, Jul. 2012, pp.
  1241--1245.

\bibitem{RichardsonU:98}
T.~Richardson and R.~Urbanke, ``The capacity of low-density parity check codes
  under message-passing decoding,'' Bell Laboratories, Lucent Technologies,
  Tech. Rep. BL01121710-981105-34TM, Nov. 1998.

\bibitem{Rockafellar:70}
R.~T. Rockafellar, \emph{Convex Analysis}.\hskip 1em plus 0.5em minus
  0.4em\relax Princeton, NJ: Princeton Univ. Press, 1970.

\bibitem{Montanari:12-bookChap}
A.~Montanari, ``Graphical model concepts in compressed sensing,'' in
  \emph{Compressed Sensing: Theory and Applications}, Y.~C. Eldar and
  G.~Kutyniok, Eds.\hskip 1em plus 0.5em minus 0.4em\relax Cambridge Univ.
  Press, Jun. 2012, pp. 394--438.

\bibitem{Stein:72}
C.~Stein, ``A bound for the error in the normal approximation to the
  distribution of a sum of dependent random variables,'' in \emph{Proc.\ Sixth
  Berkeley Symposium on Mathematical Statistics and Probability}, Berkeley, CA,
  1972.

\end{thebibliography}

\end{document}